\newtheorem*{rep@theorem}{\rep@title}
\newcommand{\newreptheorem}[2]{
\newenvironment{rep#1}[1]{
 \def\rep@title{#2 \ref{##1}}
 \begin{rep@theorem}\itshape}
 {\end{rep@theorem}}}
\theoremstyle{plain}
\def\colorful{0}
\newcommand{\red}[1]{{\color{red} {#1}}}
\newcommand{\green}[1]{{\color{green} {#1}}}
\newcommand{\red}[1]{{{#1}}}
\newcommand{\green}[1]{{{#1}}}
\newcommand{\ignore}[1]{}
\newcommand{\stars}{\mathrm{stars}} 
\title{Adaptivity is exponentially powerful\\
for testing monotonicity of halfspaces\\ \vspace{0.2cm}}
\author{
Xi Chen\thanks{Columbia University, email: \texttt{xichen@cs.columbia.edu}.} 
\and
Rocco A. Servedio\thanks{Columbia University, email: \texttt{rocco@cs.columbia.edu}.}
\and
Li-Yang Tan\thanks{Toyota Technological Institute, email: \texttt{liyang@cs.columbia.edu}.} 
\and
Erik Waingarten\thanks{Columbia University, email: \texttt{eaw@cs.columbia.edu}.}}
\begin{document}
\maketitle
\thispagestyle{empty}
\begin{abstract}
We give a $\poly(\log n, 1/\eps)$-query adaptive algorithm for testing whether 
  an unknown Boolean function $f: \{-1,1\}^n \to \{-1,1\}$, which is promised
to be a halfspace, is monotone versus $\eps$-far from monotone.  Since non-adaptive algorithms are known to 
require almost $\Omega(n^{1/2})$ queries to test whether an unknown halfspace is monotone versus far from monotone, this shows that
adaptivity enables an exponential improvement in the query complexity of monotonicity testing for halfspaces.
\end{abstract}
\newpage 

\hypersetup{linkcolor=magenta}
\hypersetup{linktocpage}


\newpage

\setcounter{page}{1}


\section{Introduction}

Monotonicity testing has been a 
touchstone problem in property testing for more than fifteen years \cite{DGL+99, GGL+00, EKK+00, FLN+02, Fis04, BKR04, ACCL07,HK08, RS09c, BBM12, BCGM12, RRSW12,CS13a, CS13b, CS13c, BRY13,CST144,KMS15,CDST15,BB15}, with many exciting recent developments leading to a greatly improved understanding of the problem in just the past few years.  The seminal work of \cite{GGL+00} introduced the problem and gave an $O(n/\eps)$-query algorithm that tests whether an unknown and arbitrary function
$f\colon \{-1,1\}^n \to \{-1,1\}$ is monotone versus $\eps$-far from every monotone function.  While steady progress followed for non-Boolean functions and for functions over other domains, 
the first improved algorithm for Boolean-valued functions over $\{-1,1\}^n$ was only achieved in \cite{CS13a}, who gave a $\tilde{O}(n^{7/8})\cdot \poly(1/\eps)$-query non-adaptive testing algorithm.  A slightly improved $\tilde{O}(n^{5/6})\cdot \poly(1/\eps)$-query non-adaptive algorithm was given by \cite{CST144}, and subsequently \cite{KMS15} gave a $\tilde{O}(n^{1/2})\cdot \poly(1/\eps)$-query non-adaptive algorithm.

On the lower bounds side, the fundamental class of \emph{halfspaces}\ignore{\green{(which are important objects of study in diverse contexts including circuit complexity, machine learning, and social choice theory)}} has played a major role in non-adaptive lower bounds for monotonicity testing to date.  We discuss lower bounds for two-sided error monotonicity testing of Boolean-valued functions over $\{-1,1\}^n$, and refer the reader to the above references for lower bounds on other variants of the monotonicity testing problem. The first (two-sided) lower bound  was established by Fischer et al \cite{FLN+02}, who used a slight variant of the majority function to give an $\Omega(\log n)$ lower bound for non-adaptive monotonicity testing. More recently, the lower bound of \cite{CDST15}, strengthening \cite{CST144}, shows that for any constant $\delta>0$, there is a constant $\eps=\eps(\delta) > 0$ such that $\Omega(n^{1/2 - \delta})$ non-adaptive queries are required to distinguish \ignore{(with a constant advantage over random guessing)} whether {a Boolean function $f$} --- which is promised to be a halfspace --- is monotone or $\eps$-far from every monotone function. \ignore{In fact this lower bound holds even if $f$ is promised to be a halfspace (since the hard yes- and no-distributions used in \cite{CDST15} are supported on halfspaces only).}Together with the $\tilde{O}(n^{1/2}) \cdot \poly(1/\eps)$-query\ignore{\footnote{We use the tilde to hide a polylogarithmic factor.}} non-adaptive monotonicity testing algorithm of \cite{KMS15}, this shows that halfspaces are ``as hard as the hardest functions'' to non-adaptively test for monotonicity.  Halfspaces are also commonly referred to as ``linear threshold functions'' or LTFs; for brevity we shall subsequently refer to them as LTFs.\vspace{-0.25cm}

\paragraph{The role of adaptivity.}  While the above results largely settle the query complexity of non-adaptive monotonicity testing, the situation is less clear when adaptive algorithms are allowed.  More generally, the power of adaptivity in property testing is not yet well understood, despite being a natural and important question.\footnote{For monotonicity
testing of functions $f\colon[n]^2\rightarrow \{0,1\}$, Berman et al. \cite{BRY:14b} showed that
adaptive algorithms are strictly more powerful than non-adaptive ones (by a factor
of $\log 1/\eps$). For unateness testing of real-valued functions $f \colon \{0, 1\}^n \to \R$, a natural generalization of monotonicity, \cite{BCPRS17} showed that adaptivity helps by a logarithmic factor. We remark that for another touchstone class in property testing, the class of Boolean juntas, it was only very recently shown \cite{STW15, CSTWX17} that adaptive algorithms are strictly more powerful than non-adaptive algorithms.} A recent breakthrough result of Belovs and Blais
\cite{BB15} gives a  $\tilde{\Omega}(n^{1/4})$ lower bound on the query complexity 
  of adaptive
algorithms that test whether $f\colon\{-1,1\}^n\rightarrow \{-1,1\}$ is monotone versus $\eps$-far from monotone, for some absolute constant $\eps > 0$. This result was then improved by \cite{CWX17} to $\tilde{\Omega}(n^{1/3})$. \cite{BB15} also shows that when $f$ is promised to be an ``extremely regular'' LTF, with regularity parameter at most $O(1)/\sqrt{n}$, then $\log n+O_\eps(1)$
adaptive queries suffice.  (We define the ``regularity''
of an LTF in part $(a)$ of Definition \ref{def:non-mon-LTF} below.  Here we note only that every $n$-variable LTF has regularity between $1/\sqrt{n}$ and 1, so $O(1)/\sqrt{n}$-regular LTFs are ``extremely regular'' LTFs.)

A very compelling question is whether adaptivity helps for monotonicity testing of Boolean functions:  can adaptive algorithms go below the \cite{CDST15} $\Omega(n^{1/2 - \delta})$-query lower bound for non-adaptive algorithms?  While we do not know the answer to this question for general Boolean functions\footnote{For very 
special functions such as truncated anti-dictators, it is known \cite{FLN+02} that adaptive algorithms 
are known to be much more efficient than nonadaptive algorithms 
($O(\log n)$ versus $\Omega(\sqrt{n})$ queries) in finding a violation to monotonicity.}, in this work we give a strong positive answer in the case of LTFs, generalizing the upper bound of \cite{BB15} from ``extremely regular'' LTFs to arbitrary unrestricted LTFs.  The main result of this work is an adaptive algorithm with one-sided error
that can test any LTF for 
monotonicity using $\poly(\log n,1/\eps)$ queries:\ignore{\footnote{Theorem \ref{thm:main} was obtained independently of the results of \cite{BB15}.}}

\begin{theorem}[Main] 
\label{thm:main}
There is a $\poly(\log n,1/\eps)$-query\footnote{See Theorem \ref{thm:query-complexity} of Section \ref{sec:detailed} for a detailed description of the algorithm's query complexity; we have made no effort to optimize the particular polynomial dependence on $\log n$ and $1/\eps$ that the algorithm achieves.} adaptive algorithm 
  with the following property: given
$\eps>0$ and black-box access
to an unknown LTF $f\colon \{-1,1\}^n \to \{-1,1\}$, 
\begin{itemize}
\item If $f$ is monotone then 
  the algorithm outputs ``monotone'' with probability $1$;
\item If $f$ is $\eps$-far from every monotone function then
  the algorithm outputs ``non-monotone''\\  with
  probability at least $2/3$. 
  \end{itemize}
\end{theorem}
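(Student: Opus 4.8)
Because the tester must have perfect completeness, the plan is to make it answer ``non-monotone'' only when it has exhibited a \emph{violating edge}: a pair $x\prec x^{\oplus i}$ agreeing except that $x_i=-1$, with $f(x)=1$ and $f(x^{\oplus i})=-1$. Writing $f=\mathrm{sign}(w\cdot x-\theta)$, such an edge along coordinate $i$ requires $w_i<0$, and the $x$'s (with $x_i=-1$) that produce one are exactly those with $0<w\cdot x-\theta<2|w_i|$; so the whole task reduces to (i) pointing at a coordinate $i$ with $w_i<0$ whose weight is not too small relative to the scale at which $f$ lives, and (ii) steering a sample $x$ with $x_i=-1$ into that width-$2|w_i|$ threshold window. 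The structural heart of the argument is a lemma asserting that if $f$ is $\eps$-far from monotone then a ``findable'' negative coordinate must exist --- roughly, if every negative weight were tiny relative to the local scale, sign-flipping or deleting those coordinates would change $f$ on only an $\eps$-fraction of inputs, contradicting distance from monotonicity.

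To make this effective I would invoke the regularity/critical-index structure of LTFs (part~$(a)$ of Definition~\ref{def:non-mon-LTF}): sorting coordinates by $|w_i|$, let $H$ be the head up to the critical index for a regularity parameter $\delta=\poly(\eps)$, so that fixing $H$ leaves a $\delta$-regular LTF. \textbf{Junta case:} if $|H|$ is small and $f$ is $O(\eps)$-close to an LTF essentially supported on $H$, estimate all degree-$1$ Fourier coefficients $\hat{f}(i)=\mathbb{E}[f(x)x_i]$ from $\poly(\log n,1/\eps)$ uniform samples (each sample refines all $n$ estimates at once); the estimates locate $H$ together with the sign of each head weight, distance from monotonicity forces some $i\in H$ with $\hat{f}(i)$ bounded away from $0$ on the negative side, and a violating edge along that $i$ is then found by enumerating the few head settings while setting the remaining coordinates harmlessly. \textbf{Regular case:} fix $H$ to a \emph{typical} assignment, so that the restricted function is a $\delta$-regular LTF that is far from constant; since $w\cdot x$ is then Gaussian-like and well anti-concentrated, follow the Belovs--Blais strategy for regular LTFs --- take a random $x$, and run an \emph{adaptive} binary search over the $n$ remaining coordinates, repeatedly flipping a block of them and reading $f$ to learn the aggregate sign of that block's weights, so as to locate a negative-weight coordinate $i$ and then nudge $w\cdot x-\theta$ down to its transition point, which by regularity lands inside $(0,2|w_i|)$ and yields the violating edge $(x,x^{\oplus i})$. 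This localization costs only $O(\log n)$ queries adaptively while provably requiring $\Omega(\sqrt n)$ queries non-adaptively, which is precisely the source of the exponential separation.

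For an arbitrary LTF I would interleave these cases and recurse: run the regular-case subroutine under the current restriction, and whenever it detects that the restricted function is far from regular, extract one more heavy coordinate by a binary search, record its sign (a negative sign yields a violating edge immediately), absorb it into $H$, and repeat. Each extracted heavy coordinate shrinks the residual $\ell_2$ weight by a constant factor, so after $\poly(1/\eps)$ rounds $f$ has been reduced --- under the accumulated restriction --- to either a regular LTF or a junta on the extracted coordinates, the latter handled by the Fourier estimation above; the resulting query complexity is $\poly(\log n,1/\eps)$, with $\log n$ entering only through the adaptive binary searches and the union bound in the Fourier estimation, and one-sidedness is automatic since ``non-monotone'' is output only after a violating edge is displayed.

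The main obstacle I anticipate is the general case together with the robustness its structural lemmas demand: one must show that a random setting of the current head leaves a restricted LTF that is simultaneously $\delta$-regular and far from constant with decent probability; that the $\eps$-far-from-monotone hypothesis is inherited through these conditionings with only a $\poly(\eps)$ loss, so that a scale-appropriate negative coordinate survives; that the window $(0,2|w_i|)$ retains noticeable probability mass after conditioning and can actually be hit by a bounded adaptive search (the flips must have the right granularity relative to $|w_i|$); and that the recursion depth, hence the exponent of $\log n$ in the final bound, stays under control.
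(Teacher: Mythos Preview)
Your framing is right --- one-sidedness via anti-monotone edges, a head/tail decomposition, checking the signs of head coordinates --- and this matches the paper's {\tt Regularize-and-Balance} phase. The gap is the ``regular case,'' which you dispatch by invoking ``the Belovs--Blais strategy for regular LTFs.'' That strategy is only analyzed for \emph{extremely} regular LTFs, meaning regularity $O(1)/\sqrt{n}$ (so all weights are comparable); the tail you produce after the critical-index step is merely $\delta=\poly(\eps)$-regular, which is a far weaker condition and still allows $\approx n$ variables whose weights span many scales. Your binary search ``flip a block, read $f$, learn the aggregate sign of that block's weights'' does not do what you say: flipping a block $B$ shifts $w\cdot x$ by $-2\sum_{i\in B}w_ix_i$, so a single evaluation of $f$ tells you only whether a threshold was crossed, not the sign of $\sum_{i\in B}w_i$, and the answer depends on both $x$ and where you currently sit relative to $\theta$. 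Even granting an oracle for block-weight signs, a block with net negative weight need not contain a coordinate with \emph{usably} large negative weight: $\delta$-regularity upper-bounds $|w_i|$ but gives no lower bound, so the window $(0,2|w_i|)$ you must hit can be of width $O(1/\sqrt{n})$ relative to $\|w\|$, and nothing in your search gives you flips of that granularity.

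The paper does not try to locate one negative coordinate inside a regular tail on $n$ variables. Instead it runs a different kind of halving: over $O(\log n)$ stages it randomly fixes (roughly) half of the surviving variables, then re-balances and re-regularizes by extracting and sign-checking the few newly heavy coordinates, maintaining throughout the invariant that the restricted function stays $(\tau,\eps,\lambda_t)$-non-monotone in the sense of Definition~\ref{def:non-mon-LTF} with $\lambda_t$ degrading only slightly per stage. The structural work you correctly anticipate in your last paragraph --- that $\eps$-far-from-monotone is inherited under these restrictions (Lemma~\ref{lem:restriction}), that regularity plus far-from-monotone forces $\lambda$-significant squared negative weight and vice versa (Lemmas~\ref{lem:reg-neg-coeff} and~\ref{lem:lem3.1-conv}), and that a balanced restriction exists with decent probability (Lemmas~\ref{lem:FBR} and~\ref{lem:lastlem}) --- is exactly what makes this invariant maintainable. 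Once only $\poly(\log n,1/\eps)$ variables survive, the plain edge tester of \cite{GGL+00} finishes. So the missing idea is: replace ``binary-search for a negative coordinate'' by ``binary-halve the variable set while preserving the non-monotone structure, then brute-force the small residual.''
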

Recalling that the $\Omega(n^{1/2-\delta})$ non-adaptive lower bound from~\cite{CDST15} is proved using LTFs as both the yes- and no- functions, Theorem~\ref{thm:main} shows that adaptive algorithms are exponentially more powerful than non-adaptive algorithms for testing  monotonicity of LTFs.  Together with the $\tilde{\Omega}(n^{1/3})$ 
  adaptive lower bound from \cite{CWX17},  it also shows that LTFs are exponentially easier to test for monotonicity  
  than general Boolean functions using adaptive algorithms.  \ignore{We hope that the algorithmic techniques we introduced for adaptive monotonicity testing of halfspaces may possibly be useful for developing adaptive monotonicity testing algorithms for more general Boolean functions.}
\subsection{A very high-level overview of the algorithm} \label{sec:hi-level}

The adaptive algorithm of \cite{BB15} for testing monotonicity of ``extremely regular'' LTFs is essentially based on a simple binary search over the hypercube $\{-1,1\}^n$ to find an 
  anti-monotone edge\hspace{0.02cm}\footnote{A 
  \emph{bi-chromatic} edge of $f\colon\{-1,1\}^n\rightarrow \{-1,1\}$
  is a pair $(x,y)$ of points  such that $x,y\in \{-1,1\}^n$ differ at exactly
  one coordinate and satisfy $f(x)\ne f(y)$.
An \emph{anti-monotone} edge of $f$ is a 
  bi-chromatic edge $(x,y)$ that also satisfies $x_i=-1,y_i=1$ for 
  some $i\in [n]$ and $f(x)=1,f(y)=-1$.}.
\cite{BB15} succeeds in analyzing such an algorithm,
  taking advantage of some of the nice structural properties of regular LTFs,
  but it is not clear how to carry out such an analysis for general LTFs.

To deal with general LTFs, our algorithm is more involved and employs an iterative stage-wise approach, running for up to $O(\log n)$ stages.
Entering the $(t+1)$-th stage, the algorithm maintains a restriction $\rho^{(t)}$ that fixes some of the input variables to $f$, and in the $(t+1)$-th stage the algorithm queries $f_{\rho^{(t)}}$, where we write $f_{\rho^{(t)}}$ to denote the 
  function $f$ after the restriction $\smash{\rho^{(t)}}$. 
At a very high level, in the $(t+1)$-th stage the algorithm either

\begin{flushleft}\begin{enumerate}
\item [(i)] Obtains definitive evidence (in the form of an anti-monotone edge) that $f_{\rho^{(t)}}$, and hence $f$, is not monotone.  In this case the algorithm halts and outputs ``non-monotone.'' Or, it\vspace{-0.06cm}

\item [(ii)] Extends the restriction $\rho^{(t)}$ to obtain $\rho^{(t+1)}$.  This is done by fixing a random subset of the variables of expected density 1/2 that are not fixed under $\rho^{(t)}$, and possibly some additional variables, in such a way as to maintain an invariant described later. Or, it\vspace{-0.06cm}

\item [(iii)] Fails to achieve (i) or (ii), which we show is very unlikely to happen.
In this case the algorithm simply halts and outputs ``monotone.''
\end{enumerate}\end{flushleft}

We describe the invariant of $\rho^{(t)}$ maintained in Case (ii) in Section \ref{moredetails}.
One of its implications in particular is that $\smash{f_{\rho^{(t)}}}$ is $\eps'$-far from monotone, where
  $\eps'$ has a polynomial dependence on $\eps$.
As a result, when the number of surviving variables under $\rho^{(t^*)}$ at the beginning of a 
  stage $t^*$ is~at~most $\poly(\log n)$, the algorithm can run the simple ``edge tester'' of~\cite{GGL+00} on $f_{\rho^{(t^*)}}$
  to find an anti-monotone edge with high probability. 
Although the ``edge tester'' has query complexity linear in the number of variables,
  this is affordable since $f_{\rho^{(t^*)}}$ only has $\poly(\log n)$ many variables left.
Case (ii) ensures that there are at most $O(\log n)$  stages overall.  We will also see that each stage makes at most $\poly(\log n, 1/\eps)$ queries;
  hence the overall query complexity is $\poly(\log n, 1/\eps)$.

\subsection{A more detailed overview of the algorithm and why it works}\label{moredetails}

In this section we give a more detailed overview of the algorithm and a high-level sketch of its analysis.
The algorithm only outputs ``non-monotone'' if it identifies an anti-monotone edge,
so~it~will correctly output ``monotone'' on every monotone $f$ with probability 1.  Hence, establishing correctness of the algorithm amounts to showing that if $f$ is an LTF that is $\eps$-far from monotone, then with high probability the algorithm will output ``non-monotone'' when it runs on $f$.  Thus, for the remainder of this section, $f(x) = \sign(w_1 x_1 + \cdots + w_n x_n - \theta)$ should be viewed as being an LTF that is $\eps$-far from monotone. 

A crucial notion for understanding the algorithm is that of a \emph{$(\tau,\gamma,\lambda)$-non-monotone LTF}.

\begin{definition} \label{def:non-mon-LTF}
Given an LTF $f\hspace{-0.03cm}:\hspace{-0.03cm}\{-1,1\}^S\hspace{-0.03cm} \to \{-1,1\}$ of the form $f(x)=\sign(w\cdot x-\theta)$ over~a~set of variables $S$,
  we say it is a $(\tau,\gamma,\lambda)$-\emph{non-monotone  
  LTF with respect to the weights $w$} if it satisfies the following three properties:\vspace{0.08cm} 
\begin{flushleft}\begin{enumerate}
\item [$(a)$] $f$ is \emph{$\tau$-weight-regular}\hspace{0.03cm}\footnote{Our terminology ``weight-regular'' means the same thing as \cite{BB15}'s ``regular.''  We use the terminology ``weight-regular'' to distinguish it from the different notion of ``Fourier-regularity'' which we also require, see 
Section \ref{sec:fourier-regularity}.}
 with respect to $w$, i.e.,
\[\mathop{\max}_{i \in S} |w_i| \leq \tau \cdot \sqrt{\sum_{j \in S} w_j^2} \hspace{0.08cm};\]
\item [$(b)$] $f$ is \emph{$\gamma$-balanced}, i.e.,
$
\left|\hspace{0.03cm}\Ex_{\bx\in \{-1,1\}^n} [f(\bx)]
\hspace{0.01cm}\right|\le 1-\gamma \hspace{0.06cm} ; 
$ and\vspace{-0.12cm}
\item [$(c)$] $f$ has \emph{$\lambda$-significant 
squared negative weights in $w$}, i.e.,
$$
\dfrac{\sum_{i\in S: w_i<0} (w_i)^2}{\sum_{i\in S} (w_i)^2}\ge \lambda.
$$
\end{enumerate}\end{flushleft}
\end{definition}

Looking ahead, an insight that underlies this definition (as well as our algorithm) is that, when $f=\sign(w \cdot x - \theta)$ is a weight-regular LTF that is far from monotone, $f$ must satisfy $(c)$ above~for some large value of $\lambda$ (see Lemma \ref{lem:reg-neg-coeff} for a precise formulation).  
The converse also holds, i.e., an LTF that satisfies all three conditions above must
  be $\eps$-far from monotone for some large value~of~$\eps$ (see Lemma \ref{lem:lem3.1-conv}).
This is indeed the reason why we call such functions $(\tau,\gamma,\lambda)$-\emph{non-monotone}~LTFs.
An additional motivation for the regularity condition $(a)$ is that,
when $f$ satisfies $(c)$ for some value $\lambda\gg \tau$ (the parameter in (a)), a random restriction $\rho$
  (that randomly fixes~half of the variables to uniform values from $\{-1,1\}$) would
  have $f_\rho$ still satisfy (c) with essentially the same $\lambda$.
The balance condition $(b)$, on the other hand, may be viewed as a technical condition that makes it possible for our various subroutines to work efficiently and correctly; we note that if $f$ is not $\gamma$-balanced, then $f$ is trivially $(\gamma/2)$-close to either the monotone function $1$ or the monotone function $-1$.

With Definition \ref{def:non-mon-LTF} in hand, we proceed to a more detailed overview of the algorithm  (still at a rather conceptual level).  The algorithm takes as input black-box access to $f\colon \{-1,1\}^n \to \{-1,1\}$ and a parameter $\eps > 0$.
We remind the reader that in the subsequent discussion $f$ 
  should be~viewed as an $\eps$-far-from-monotone LTF.
For the analysis of the algorithm, we also assume that $f$ takes
  the form of $f(x)=\sign(w_1x_1+\cdots+w_nx_n-\theta)$, for some
  unknown (but fixed\hspace{0.05cm}\footnote{Note that
  $(w,\theta)$ is not unique for a given $f$. 
Here we pick any such pair and stick to it throughout the analysis.}) weight vector $w$ and threshold $\theta$.
They are unknown to the algorithm and will be used in the analysis only.

Our algorithm has two main phases:  first an \emph{initialization} phase, and then the phase consisting of the \emph{main procedure}.   

\medskip

\noindent {\bf Initialization.}  The algorithm runs an initialization procedure called {\tt Regularize-and-Balance}. Roughly speaking, it with high probability either identifies $\smash{f}$ as a non-monotone LTF by finding~an
  anti-monotone edge and halts, or  
constructs a restriction $\smash{\rho^{(0)}}$ such that $\smash{f_{\rho^{(0)}}}$ becomes~a~$\smash{(\tau,\gamma,\lambda_0)}$-non-monotone LTF for suitable parameters $\tau,\gamma,\lambda_0$,
with $\tau= \text{poly}(1/\log n, \eps)$, $\gamma=\eps$, 
  $\lambda_0= { \poly( \eps)}$ and $\tau\ll \lambda_0$.
  In the latter case the algorithm continues with $f_{\rho^{(0)}}$.

\medskip

\noindent {\bf Main Procedure.}  As sketched earlier in Section
\ref{sec:hi-level} the main procedure  operates in a sequence~of $O(\log n)$ stages.  In its $(t+1)$th
stage, it operates on the restricted function $f_{\rho^{(t)}}$  which is assumed to be a $(\tau,\gamma,\lambda_t)$-non-monotone LTF,   and with high probability
  either identifies $f$ as non-monotone and halts, or
  constructs an extension $\rho^{(t+1)}$ of the restriction $\rho^{(t)}$  such that 
  $f_{\rho^{(t+1)}}$ remains $(\tau,\gamma,\lambda_{t+1})$-non-monotone (for some parameter
   $\lambda_{t+1}$ that is only slightly smaller than $\lambda_t$) while
the number of free variables in $\rho^{(t+1)}$ drops by a constant factor.

\def\stars{\textsc{stars}}

To describe each stage in more detail, we need the following notation
  for restrictions.
Given a restriction $\smash{\rho \in \{-1, 1, *\}^{[n]}}$, 
  we use $\stars(\rho)$ to denote the set of indices 
  that are not fixed in $\rho$, i.e., the set of $i$ such that $\rho(i)=*$. 
  Given $f\colon \{-1, 1\}^n \rightarrow \{-1, 1\}$ of the form $f(x) = \sign(\sum w_i x_i - \theta)$, we let $\smash{f_{\rho} \colon \{-1, 1\}^{\stars(\rho)} \rightarrow \{-1, 1\}}$ denote the function $f$ after the restriction $\rho$: 
\begin{eqnarray*}
&f_{\rho}(x) = \sign\left(\sum_{i\in\stars(\rho)}w_i\cdot x_i + \sum_{j \notin\stars(\rho)} w_j\cdot \rho(j) - \theta\right).&
\end{eqnarray*} 
We stress than the weights of $f_{\rho}$ remain $w_i$ while the threshold is $\smash{\theta-\sum_{j\notin\stars(\rho)}w_j\cdot\rho(j)}$.

Now for the $(t+1)$th stage, where $t= 0, 1, 2,\dots,$ the main procedure carries out the following sequence of steps (we defer discussion of how these steps are implemented to Section \ref{sec:detailed}). 
Below for convenience we let $g$ denote $f_{\rho^{(t)}}$,
  the function that the algorithm operates on in the $(t+1)$th stage. 

\begin{flushleft}\begin{enumerate}
\item Draw a random subset $A_t \subset \stars(\rho^{(t)})$, which consists of
  roughly half of its variables. 
  Assuming that $\tau\ll \lambda_t$, we have that, with high probability,
  $A_t$ partitions the positive and negative weights {roughly} evenly 
  and {the collection of} weights of variables in $\stars(\rho^{(t)})\setminus A_t$
   {has} $\lambda_{t+1}$-significant squared negative weights
  for some $\lambda_{t+1}$ that is only slightly smaller than $\lambda_t$.
(This also justifies the assumption of $\tau\ll \lambda_t$ at the beginning.)\vspace{-0.05cm}

\item Find a restriction $\smash{\rho' \in \{-1,1,\ast\}^{\stars(\rho^{(t)})}}$ that fixes the variables in $A_t$   {in} such {a way} that
$g_{\rho'}$ is $0.96$-balanced.
The exact constant $0.96$ here is not important as long as it is 
  close enough\\ to $1$. 
Note that $g_{\rho'}$ is more balanced than $g$ {is promised to be} 
  (i.e., $(\gamma=\eps)$-balanced and we may assume that
  $\eps\le 0.5$). This helps in the last step of the stage.
Our analysis shows that if $g$ is $(\tau,\gamma,\lambda_t)$-non-monotone,
  then this step succeeds with high probability.\vspace{-0.05cm}

\item Find a set $H_t \subset \stars(\rho^{(t)}) \setminus A_t$ that contains those variables $x_i$ that have ``high influence'' in $g_{\rho'}$. Intuitively,
  $H_t$ contains variables of $g_{\rho'}$ that violate
  the $\tau$-weight-regularity condition; after its removal,
    {the collection of} weights of variables in $\stars(\rho^{(t)})\setminus (A_t\cup H_t)$ 
    {becomes} $\tau$-weight-regular again.\vspace{-0.05cm}

\item For each $i \in H_t$, find a bi-chromatic edge of $g_{\rho'}$ on the $i$th coordinate (this can be done efficiently because the variables in $H_t$ all have high influence in $g_{\rho'}$), which reveals the sign of $w_i$.
If an anti-monotone edge is found, halt and output ``non-monotone;'' 
  otherwise, we know that the weight of every variable in $H_t$
  is positive.\vspace{-0.05cm}
      
\item Finally, find a restriction $\smash{\rho'' \in \{-1,1,\ast\}^{\stars(\rho^{(t)})}}$, which extends $\rho'$ and fixes the variables in $A_t \cup H_t$, such that 
$g_{\rho''}$ is $\gamma$-balanced. Our analysis shows that if $g$ is
  $(\tau,\gamma,\lambda_t)$-non-monotone and $g_{\rho'}$
  is $0.96$-balanced, then this step succeeds with high probability. 
By Step 3, $g_{\rho''}$ is $\tau$-weight-regular.
In addition, $g_{\rho''}$ has 
  $\lambda_{t+1}$-significant squared negative weights because
  of Step 1 and Step 4 (which makes sure that all variables in $H_t$
  have positive weights). 
At the end, we set $\rho^{(t+1)}$ to be the composition of 
  $\rho^{(t)}$ and $\rho''$ and move on to the next stage.
\end{enumerate}\end{flushleft}

To summarize,
our analysis shows that if $f_{\rho^{(t)}}$ is $(\tau,\gamma,\lambda_t)$-non-monotone (entering the $(t+1)$th stage) then with high probability the algorithm
  in the $(t+1)$th stage either finds an anti-monotone edge and halts,
  or 
  finds an extension $\rho^{(t+1)}$ of $\rho^{(t)}$ such that
\begin{flushleft}\begin{itemize}
\item[i)]
The new function $f_{\rho^{(t+1)}}$ is $(\tau,\gamma,\lambda_{t+1})$-non-monotone (entering the
  $(t+2)$th stage), where the parameter $\lambda_{t+1}$ is only slightly smaller than $\lambda_t$ (more on this below); and \vspace{-0.06cm}
\item[ii)]
The number of surviving variables in $\rho^{(t+1)}$
  is only about half of that of $\rho^{(t)}$.  
\end{itemize}\end{flushleft}
This implies that, with high probability, the main procedure
  within $O(\log n)$ stages either 
  finds~an anti-monotone edge and returns the correct answer ``non-monotone''
  or constructs a restriction~$\rho^{(t)}$ 
  such that $f_{\rho^{(t)}}$ is $(\tau,\gamma,\lambda_{t})$-non-monotone
  and the number of 
  surviving variables under $\rho^{(t)}$ is at~most 
  $m = \poly(\log n, 1/\eps)$.
For the latter case,    
  our analysis (Lemma \ref{lem:lem3.1-conv}) 
  together with the fact that $\lambda_t$ drops only slightly in each stage
  show that $f_{\rho^{(t)}}$ remains $\eps' =\poly(\eps)$-far from monotone.
Thus, the algorithm concludes by running the ``edge tester'' from \cite{GGL+00} to $\eps'$-test the $m$-variable function $f_{\rho^{(t)}}$, which uses
  $O(m/\eps')=\poly(\log n,1/\eps)$ queries to $f_{\rho^{(t)}}$
  and finds an anti-monotone edge with high probability.
To summarize, when $f$ is an LTF that is $\smash{\eps}$-far from monotone,
  our~algorithm finds an anti-monotone edge 
  and outputs ``non-monotone'' with high probability.  
As discussed earlier at the beginning of Section \ref{moredetails} about its one-sideness,
  the correctness of the algorithm follows.

\subsection{Relation to previous work}

We have already discussed how our main result, Theorem \ref{thm:main}, relates to the recent upper and lower bounds of
\cite{KMS15,CDST15,BB15} for monotonicity testing.  At the level of techniques, several aspects of our algorithm are reminiscent of some earlier work in property testing of Boolean functions and probability distributions as we describe below.

At a high level, the $\poly(1/\eps)$-query algorithm of \cite{MORS:10} for testing whether a function is~an LTF identifies high-influence variables and ``deals with them separately'' from other variables, as does our algorithm.  The more recent algorithm of \cite{RonServedio15}, for testing whether a function is a signed majority function, like our algorithm proceeds in a series of stages which successively builds up~a restriction by fixing more and more variables.  Like our algorithm the \cite{RonServedio15} algorithm makes only $\smash{\poly(\log n, 1/\eps)}$ adaptive queries, but there are many differences both between the two algorithms and between their analyses. To briefly note a few of these differences, the \cite{RonServedio15} algorithm has~two-sided error while our algorithm has one-sided error; the former also heavily leverages both the very ``rigid'' structure of the degree-1 Fourier coefficients of any signed majority function and the near-perfect balancedness of any signed majority function between the two outputs 1 and $-1$, neither of which hold in our setting. Finally, we note that the general approach of iteratively selecting {and retaining} a random subset of the remaining ``live'' elements, then doing some additional pruning to identify, check, and discard a small number of ``heavy'' elements, then proceeding to the next stage is reminiscent of the {\sc Approx-Eval-Simulator} procedure of \cite{CRS15}, which deals with testing probability distributions in the ``conditional sampling'' model.

\subsection{Organization} 

In Section~\ref{sec:background} we recall the necessary background concerning monotonicity, LTFs, and restrictions, and state a few useful algorithmic and structural results from prior work.  In Section~\ref{sec:new-structural} we establish several new structural results about ``regular'' LTFs: we first show that its distance to monotonicity corresponds (approximately) to its total amount of squared
  negative coefficient weights; 
we also prove that its distance to monotonicity is preserved under a random restriction to a set of its non-decreasing variables. In Section~\ref{sec:new-algorithmic} we present and analyze some simple algorithmic subroutines that will be used to identify high influence variables and check that they are non-decreasing. Finally in Section~\ref{sec:detailed}, we give a detailed description of our overall algorithm
  for  testing monotonicity of LTFs, and prove its correctness, establishing our main result (Theorem~\ref{thm:main}).


\section{Background}
\label{sec:background}

We write $[n]$ for $\{1,\dots,n\}$,
  and use boldface letters (e.g., $\bx$ and $\mathbf{X}$) 
  to denote random variables.

We briefly recall some basic notions.  A function $f\colon \{-1,1\}^n \to \{-1,1\}$ is \emph{monotone} (short for ``monotone non-decreasing'') if $x \preceq y$ implies $f(x) \leq f(y)$, where ``$x \preceq y$'' means that $x_i \leq y_i$ for all $i \in [n].$  A function $f$ is \emph{unate} if there is a bit vector $a \in \{-1,1\}^n$ such that $f(a_1 x_1,\dots,a_n x_n)$ is monotone.  It is well known
that every LTF (defined below) is unate.

We measure distance between functions $f,g\colon \{-1,1\}^n \to \{-1,1\}$ with respect to the uniform distribution, so we say that $f$ and $g$ are \emph{$\eps$-close} if 
\[ \dist(f,g) \coloneqq \mathop{\Pr}_{\bx \in \{-1,1\}^n}\big[f(\bx) \neq g(\bx)\big] \leq \eps, \]
and that $f$ and $g$ are \emph{$\eps$-far} otherwise.  A function $f$ is \emph{$\eps$-far from monotone} if it is $\eps$-far from every monotone function $g.$  We write
$\dist(f, \textsc{Mono})$ to denote the minimum value of $\dist(f,g)$ over all monotone functions $g$.
Throughout the paper all probabilities and expectations are with respect to the uniform distribution over $\{-1,1\}^n$ unless otherwise indicated.
As indicated in Definition \ref{def:non-mon-LTF}, 
  we say that a $\{-1,1\}$-valued function $f$ is 
  \emph{$\gamma$-balanced} if 
  \[ \left|\mathop{\E}_{\bx\in \{-1,1\}^n}[f(\bx)]
  \hspace{0.02cm}\right|\le 1-\gamma.\]

A function $g\colon \{-1,1\}^n \to \{-1,1\}$ is a \emph{junta} {over $S  \subseteq [n]$} if $g$ depends only on the coordinates in $S$.  We say $f$ is \emph{$\eps$-close} to a \emph{junta} over $S$ if $f$ is $\eps$-close to $g$ for some $g$ that is a junta over $S$.

\subsection{LTFs and weight-regularity} \label{sec:weight-regularity}

A function $f\colon \{-1,1\}^n \to \{-1,1\}$ is an \emph{LTF} 
(also commonly referred to as a \emph{halfspace})
   if there exist real weights $w_1,\dots,w_n \in \R$
and a real threshold $\theta \in \R$ such that 
\[
f(x) = 
\begin{cases}
1 & \text{~if~} w_1 x_1 + \cdots + w_n x_n \geq \theta,\\
-1 & \text{~if~} w_1 x_1 + \cdots + w_n x_n < \theta.
\end{cases}
\]
We say that $w=(w_1,\dots,w_n)$ are the \emph{weights} and $\theta$ the \emph{threshold} of
the LTF, and we say that $(w,\theta)$ \emph{represents} the LTF $f$, or simply that $f(x)$ is the LTF
given by $\sign(w \cdot x - \theta).$
Note that for any LTF $f$ there are in fact infinitely many pairs $(w,\theta)$ that represent $f$; 
we fix a particular pair $(w,\theta)$ for each $n$-variable LTF $f$ and work with it in what follows.

An important notion in our arguments is that of \emph{weight-regularity}. As indicated in Definition \ref{def:non-mon-LTF}, given a weight vector $w \in \R^n$, we say that $w$ is \emph{$\tau$-weight-regular} if no more than a
$\tau$-fraction of the $2$-norm of $w=(w_1,\dots,w_n)$ comes from any single coefficient $w_i,$ i.e.,
\begin{equation} \label{eq:weight-regular}
\max_{i \in [n]} |w_i| \leq \tau \cdot \sqrt{w_1^2 + \cdots + w_n^2}.
\end{equation}
If we have fixed a representation $(w,\theta)$ for $f$ such that $w$ is $\tau$-weight-regular, we frequently abuse the terminology and say that $f$ is $\tau$-weight-regular.

\subsection{Fourier analysis of Boolean functions and Fourier-regularity}
\label{sec:fourier-regularity}

Given a function $f \colon \bn \to \R$, we define its \emph{Fourier
coefficients} by $\hat{f}(S) = \E [f\cdot x_S]$ for each $S\subseteq [n]$, where $x_S$
  denotes $\prod_{i\in S} x_i$, and we have that
$\smash{f(x) = \sum_S \hat{f}(S)\cdot x_S}$. We will be particularly interested
in $f$'s \emph{degree-$1$} coefficients, i.e., $\smash{\hat{f}(S)}$ for $|S|
= 1$; we will write these as $\smash{\hat{f}(i)}$ rather than
$\hat{f}(\{i\})$. \hspace{-0.05cm}We recall \emph{Plancherel's identity} $\la
f, g \ra = \sum_S \hat{f}(S) \hat{g}(S)$, which has as a special
case \emph{Parseval's identity}, $\smash{\E_{\bx}[f(\bx)^2] = \sum_S
\hat{f}(S)^2}$. It follows that every $f \colon \{-1,1\}^n \to
\{-1,1\}$ has $\sum_S \smash{\hat{f}(S)^2 = 1}$.

We further recall that, for any unate function $f\colon \{-1,1\}^n \to \{-1,1\}$
(and hence any LTF), we have $\smash{|\hat{f}(i)| = \Inf_i(f)}$, where the \emph{influence} of variable $i$ on $f$ is 
\[ \Inf_i(f)=\mathop{\Pr}_{\bx\in
  \{-1,1\}^n}\big[f(\bx) \neq f(\bx^{\oplus i})\big],\]
where $x^{\oplus i}$ is the vector obtained from $x$ by flipping coordinate $i$.

We say that $f\colon \{-1,1\}^n \to \{-1,1\}$ is \emph{$\tau$-Fourier-regular} if $\max_{i \in [n]}|\hat{f}(i)| \leq \tau$.  Section~\ref{sec:useful-structural} below summarizes some useful relationships between weight-regularity  and Fourier-regularity of LTFs.

\subsection{Restrictions}

 A \emph{restriction $\rho$}  is an element of $\{-1,1,\ast\}^{[n]}$; we view $\rho$ as a 
partial assignment to the $n$ variables $x_1,\dots,x_n$, where $\rho(i)=\ast$ indicates that variable $x_i$ is unassigned. 
We write $\supp(\rho)$ to denote the set of indices $i$ such that $\rho(i)\in \{-1,1\}$ and 
  $\stars(\rho)$ to denote the set of $i$ such that $\rho(i)=*$
  (and thus, $\stars(\rho)$ is the complement of $\supp(\rho)$). 
 
 Given restrictions $\rho,\rho' \in \{-1,1,\ast\}^{[n]}$ we say that $\rho'$ is an \emph{extension} of $\rho$ if $\supp(\rho) \subseteq \supp(\rho')$ and 
$\rho'(i)=\rho(i)$ for all $i\in \supp(\rho)$. 
 If $\rho$ and $\rho'$ are restrictions with disjoint support we write $\rho \rho'$ to denote the \emph{composition} of these two restrictions (that has support $\supp(\rho) \cup \supp(\rho')$).

\subsection{Useful algorithmic tools from prior work}

We recall some algorithmic tools for working with black-box functions $f\colon\{-1,1\}^n \to \{-1,1\}$.

\medskip

\noindent {\bf Estimating sums of squares of degree-1 Fourier coefficients.} We first recall Corollary~16~of \cite{MORS:10} (slightly specialized to our context):

\begin{lemma}
 [Corollary~16 \cite{MORS:10}]
 \label{lem:estdeg1} There is a procedure {\tt Estimate-Sum-of-Squares}$(f,T,\eta,\delta)$ with the following properties.  Given
as input black-box access to $f\colon \{-1,1\}^n \rightarrow \{-1,1\}$,  a subset $T \subseteq [n]$,
and parameters $\eta,\delta>0$, it runs in time $O(n\cdot\log(1/\delta)/\eta^4)$,
makes $O(\log(1/\delta)/\eta^4)$ queries, and with probability at least $1-\delta$ outputs an 
estimate of $\sum_{i \in T} \hat{f}(i)^2$ that is accurate to within an additive $\pm
\eta$.
\end{lemma}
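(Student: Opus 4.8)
The plan is to reduce the estimation of $\sum_{i \in T}\hat f(i)^2$ to the estimation of a single expectation that we can evaluate by random sampling, and then to apply a Chernoff/Hoeffding bound to control the error probability. The starting point is the standard identity relating degree-$1$ Fourier weight on a set $T$ to a correlation with a random edge: if $\bx \in \{-1,1\}^n$ is uniform and $\by$ is obtained from $\bx$ by independently re-randomizing the coordinates in $T$ (equivalently, $\by = \bx$ on $[n]\setminus T$ and $\by$ is uniform and independent on $T$), then
\[
\mathop{\E}_{\bx,\by}\!\big[f(\bx)f(\by)\big] \;=\; \sum_{S \subseteq [n]\setminus T} \hat f(S)^2,
\]
since $\E[x_S y_S] = 1$ exactly when $S \cap T = \emptyset$ and $0$ otherwise. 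A cleaner variant for isolating $\sum_{i\in T}\hat f(i)^2$ specifically is to use a ``derivative'' estimator: pick $\bx$ uniform, pick $\bi \in T$ uniform, and consider $(f(\bx) - f(\bx^{\oplus \bi}))/2 \in \{-1,0,1\}$; its square has expectation $\frac{1}{|T|}\sum_{i\in T}\Inf_i(f)$, but influence is not quite squared Fourier weight for general $f$. The right move, and the one I expect the MORS proof to take, is the paired-sample estimator: draw $\bx,\bx' \in \{-1,1\}^n$ uniform and independent, form $\bz$ by setting $\bz_i = \bx_i$ for $i \notin T$ and $\bz_i = \bx'_i$ for $i \in T$, and define the random variable $Z := f(\bx)\,f(\bz)$. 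Then by the identity above $\E[Z] = \sum_{S \subseteq [n]\setminus T}\hat f(S)^2$; doing the same with the roles of ``$T$'' and ``$[n]\setminus T$'' swapped, or subtracting off the $S=\emptyset$ term which is just $\E[f]^2 = \hat f(\emptyset)^2$ and can be estimated separately, lets us extract $\sum_{i\in T}\hat f(i)^2$ up to the contribution of higher-degree terms supported inside $T$. For an LTF (or more generally when we only want the statement as written), the precise combination used in \cite{MORS:10} produces an unbiased estimator of exactly $\sum_{i\in T}\hat f(i)^2$ using a constant number of queries to $f$ per sample.

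Given an unbiased, $[-1,1]$-bounded (or at any rate bounded-range) estimator $Z$ with $\E[Z] = \sum_{i\in T}\hat f(i)^2$, the algorithm is then simply: draw $m = \Theta(\log(1/\delta)/\eta^2)$ independent copies $Z^{(1)},\dots,Z^{(m)}$, output their empirical mean $\widehat Z = \frac1m\sum_j Z^{(j)}$. Each copy costs $O(1)$ queries and $O(n)$ time (to sample the points and form the restricted inputs), so the total query count is $O(\log(1/\delta)/\eta^2)$ and the running time is $O(n\log(1/\delta)/\eta^2)$. By Hoeffding's inequality, $\Pr[\,|\widehat Z - \E[Z]| > \eta\,] \le 2\exp(-\Omega(m\eta^2)) \le \delta$ for the stated choice of $m$. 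This already gives additive error $\pm\eta$ with probability $\ge 1-\delta$. The slightly worse $\eta^4$ in the statement (rather than $\eta^2$) presumably comes from the fact that MORS's actual estimator is itself a product/difference of two empirical estimates — e.g.\ one estimates $\sum_{S\cap T=\emptyset}\hat f(S)^2$ and $\hat f(\emptyset)^2$ each to accuracy $\eta^2$ and combines — so that getting the final quantity to within $\eta$ requires the intermediate estimates to be accurate to within $O(\eta^2)$, i.e.\ $m = \Theta(\log(1/\delta)/\eta^4)$ samples; I would follow that route to match the exact bound claimed.

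The main obstacle is not the sampling/concentration step, which is routine, but pinning down the exact unbiased estimator for $\sum_{i\in T}\hat f(i)^2$ (as opposed to the ``easy'' quantity $\sum_{S\cap T=\emptyset}\hat f(S)^2$) and verifying that it has bounded range so that Hoeffding applies with the claimed sample complexity. Since the lemma is quoted verbatim from \cite{MORS:10}, I would not reprove that estimator from scratch; instead I would cite their Corollary~16 directly, remarking only that it specializes to the present setting (arbitrary $T \subseteq [n]$, accuracy $\eta$, confidence $1-\delta$), and note that the ``degree-$1$'' restriction is exactly what their construction delivers — for a general function one would also need the higher-degree-inside-$T$ terms to vanish or be handled, but their estimator is tailored to output the degree-$1$ sum. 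The one bookkeeping point worth stating explicitly is that the $O(n)$ factor in the running time is just the cost of writing down and querying length-$n$ inputs, and that no structural assumption on $f$ (such as being an LTF) is needed for the lemma itself.
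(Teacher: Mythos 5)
The paper does not prove Lemma~\ref{lem:estdeg1}; it quotes Corollary~16 of \cite{MORS:10} as a black box, so your decision to cite rather than reprove matches the paper exactly, and your bookkeeping remarks (the $O(n)$ factor is just the cost of writing down length-$n$ inputs, and no LTF assumption on $f$ is needed) are correct.

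The estimator you sketch along the way, though, has a real gap worth naming. All-or-nothing re-randomization cannot isolate degree~$1$: with $\bz=\bx$ on $[n]\setminus T$ and fresh on $T$ you get $\sum_{S\cap T=\emptyset}\hat{f}(S)^2$, with the roles swapped you get $\sum_{S\subseteq T}\hat{f}(S)^2$, and subtracting $\hat{f}(\emptyset)^2$ leaves $\sum_{\emptyset\ne S\subseteq T}\hat{f}(S)^2$, i.e.\ \emph{all} positive-degree weight inside $T$; the degree-$\geq 2$ terms cannot be killed by any such coupling. The missing idea --- which is also exactly what produces the $\eta^4$ --- is to make the correlation on $T$ a small free parameter $r$: draw $\by$ fresh uniform on $[n]\setminus T$ and $r$-correlated with $\bx$ on $T$ (i.e.\ $\by_i=\bx_i$ with probability $(1+r)/2$ and $-\bx_i$ otherwise, for $i\in T$). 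Then
\[
A(r)\;:=\;\E_{\bx,\by}\big[f(\bx)f(\by)\big]\;=\;\sum_{S\subseteq T} r^{|S|}\,\hat{f}(S)^2,
\]
so $(A(r)-A(0))/r=\sum_{i\in T}\hat{f}(i)^2+O(r)$, with the $O(r)$ error coming from degree $\geq 2$. Taking $r=\Theta(\eta)$ drives the bias to $O(\eta)$, and one then needs to estimate $A(r)$ and $A(0)$ to additive accuracy $O(\eta r)=O(\eta^2)$ before dividing by $r$, which forces $\Theta(\log(1/\delta)/\eta^4)$ samples. Your instinct that $\eta^4$ reflects a need for accuracy $\eta^2$ at an intermediate stage is right, but the mechanism is the division by $r\sim\eta$, not a product or a plain difference of two estimates --- a difference of two $\eta/2$-accurate quantities is still $\eta$-accurate and would only give $1/\eta^2$.
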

 
\noindent {\bf Checking Fourier regularity.} We recall Lemma~18 of \cite{MORS:10}, which is an easy consequence of Lemma \ref{lem:estdeg1}:

\red{\begin{lemma}  [Lemma~18 \cite{MORS:10}] \label{lem:testnonregular}
There is a procedure
{\tt Check-Fourier-Regular}$(f,T,\tau,\delta)$ with the following properties.
Given as input black-box access to $f\colon \{-1,1\}^n \rightarrow \{-1,1\}$, $T \subseteq [n]$,
and parameters $\tau,\delta>0$, it runs in time $O(n\cdot \log(1/\delta)/\tau^{16})$, makes $O(\log(1/\delta)/\tau^{16})$ 
queries, and
\begin{itemize}
\item If $|\hat{f}(i)| \geq \tau$ for some
$i \in T$ then it outputs ``not regular''  with probability $1 -
\delta$;\vspace{-0.1cm}

\item If every $i \in T$ has $|\hat{f}(i)| \le \tau^2/4$ then it
outputs ``regular''  with probability $1 -
\delta$.
\end{itemize}
\end{lemma}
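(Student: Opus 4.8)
The plan is to note that {\tt Check-Fourier-Regular} only needs to detect whether $T$ contains a single ``heavy'' coordinate $i$ with $|\hat f(i)|$ as large as $\tau$, and to do this indirectly by combining {\tt Estimate-Sum-of-Squares} (Lemma~\ref{lem:estdeg1}) with a random bucketing of the coordinates in $T$. Concretely, one \emph{run} of the procedure would: $(1)$ assign each coordinate of $T$ independently and uniformly at random to one of $L := \Theta(1/\tau^2)$ buckets $B_1,\dots,B_L$; $(2)$ for each $\ell$, call {\tt Estimate-Sum-of-Squares}$(f,B_\ell,\tau^2/8,\delta_0)$, with $\delta_0$ a small enough fixed multiple of $\tau^2$ that $L\delta_0\le 1/10$, to obtain an estimate $\widehat m_\ell$ of $m_\ell := \sum_{i\in B_\ell}\hat f(i)^2$ accurate to within $\pm\tau^2/8$; and $(3)$ output ``not regular'' if $\max_\ell\widehat m_\ell\ge 3\tau^2/4$ and ``regular'' otherwise. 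Since $L\delta_0\le 1/10$, a union bound makes all $L$ estimates simultaneously accurate with probability at least $9/10$. The full procedure then performs $R := \Theta(\log(1/\delta))$ independent such runs, each with a fresh random bucketing, and outputs the majority answer.

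The correctness of a single run reduces to two claims. First, if some $i^\ast\in T$ has $\hat f(i^\ast)^2\ge\tau^2$, then the bucket $B_\ell$ containing $i^\ast$ satisfies $m_\ell\ge\tau^2$, so when the estimates are accurate $\widehat m_\ell\ge 7\tau^2/8 > 3\tau^2/4$ and the run outputs ``not regular''; hence in this case the run errs with probability at most $1/10$. Second, suppose every $i\in T$ has $\hat f(i)^2\le\tau^4/16$. By Parseval $\sum_{i\in T}\hat f(i)^2\le 1$, so $\E[m_\ell]\le 1/L=\Theta(\tau^2)$, and $m_\ell$ is a sum of independent nonnegative terms, each at most $\tau^4/16$, with total variance $O(\tau^6)$; Bernstein's inequality then yields $\Pr[m_\ell\ge\tau^2/2]\le\exp(-\Omega(1/\tau^2))$ (the threshold $\tau^2/2$ being a large constant multiple of the mean, while each summand is only a $\Theta(\tau^2)$-fraction of $\tau^2/2$). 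A union bound over the $L=\Theta(1/\tau^2)$ buckets, together with the elementary fact that $\tau^{-2}\exp(-\Omega(1/\tau^2))\le 1/10$ for all $\tau\in(0,1)$ (because $1/\tau^2$ dominates $\log(1/\tau)$), shows that with probability at least $9/10$ every $m_\ell<\tau^2/2$, hence every $\widehat m_\ell<3\tau^2/4$ when the estimates are accurate, so the run outputs ``regular''; thus in this case the run errs with probability at most $1/5$. In either case a single run is correct with probability at least $2/3$.

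A standard Chernoff bound over the $R=\Theta(\log(1/\delta))$ independent runs then boosts this constant error down to $\delta$ in both cases, which gives the two bullet points. For the resource bounds, each run makes $L=\Theta(1/\tau^2)$ calls to {\tt Estimate-Sum-of-Squares}, each with accuracy parameter $\Theta(\tau^2)$ and confidence parameter $\Theta(\tau^2)$, hence by Lemma~\ref{lem:estdeg1} uses $O(\poly(1/\tau))$ queries and $O(n\cdot\poly(1/\tau))$ time (the $O(n)$ time to draw and record the bucketing is absorbed). Over the $R$ runs this amounts to $O(\poly(1/\tau)\cdot\log(1/\delta))$ queries and $O(n\cdot\poly(1/\tau)\cdot\log(1/\delta))$ time, comfortably within the stated $O(\log(1/\delta)/\tau^{16})$ query and $O(n\log(1/\delta)/\tau^{16})$ time bounds.

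I expect the soundness claim of the second paragraph to be the main point requiring care: one must choose the number of buckets $L$ as a function of $\tau$ alone --- crucially independent of $n$ and $|T|$, so that no $\log n$-style cost creeps in --- small enough that a union bound over all $L$ buckets is affordable, yet large enough that the degree-$1$ Fourier weight landing in any one bucket concentrates well below the detection threshold, and then verify that the resulting per-bucket failure probability, multiplied by $L$ and summed over the $\Theta(\log(1/\delta))$ amplification rounds, stays below a small constant for the \emph{entire} range $\tau\in(0,1)$ and for arbitrarily small $\delta$. This rests on two quantitative points: that each coefficient being at most $\tau^2/4$ makes both the per-term bound and the variance proxy in Bernstein's inequality a polynomially-small-in-$\tau$ fraction of the threshold, and that $1/\tau^2$ outgrows $\log(1/\tau)$.
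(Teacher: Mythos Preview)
The paper does not actually prove this lemma: it is quoted from \cite{MORS:10} with only the parenthetical remark that it ``is an easy consequence of Lemma~\ref{lem:estdeg1},'' so there is no in-paper argument to compare against.

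Your argument is correct and is the standard route. Hashing $T$ into $L=\Theta(1/\tau^2)$ random buckets and calling {\tt Estimate-Sum-of-Squares} on each is precisely how one detects a single heavy degree-$1$ coefficient without incurring a $\log n$ or $|T|$ factor. The completeness case is immediate. For soundness, your Bernstein computation checks out: with each $\hat f(i)^2\le\tau^4/16$ and $\sum_{i\in T}\hat f(i)^2\le 1$, one has $\E[m_\ell]\le 1/L=O(\tau^2)$, per-term bound $M=\tau^4/16$, and variance $\sigma^2\le(\tau^4/16)/L=O(\tau^6)$, so $\Pr[m_\ell\ge\tau^2/2]\le\exp(-\Omega(\tau^4/\tau^6))=\exp(-\Omega(1/\tau^2))$, which comfortably survives a union bound over $\Theta(1/\tau^2)$ buckets (and the resulting constant is already below $1/10$ at $\tau=1$, so there is no edge case near $\tau\uparrow 1$). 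The majority vote over $\Theta(\log(1/\delta))$ independent runs then gives the stated confidence. Your query count works out to $O(\log(1/\delta)\cdot\log(1/\tau)/\tau^{10})$, well inside the stated $O(\log(1/\delta)/\tau^{16})$.
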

} 

\noindent {\bf Estimating the mean.}  For completeness we recall the following simple fact (which follows from a standard 
Chernoff bound):

\begin{fact} \label{fact:estimate-mean}
There is a procedure {\tt Estimate-Mean}$(f,\eps,\delta)$ with the following properties.
Given as~input black-box access to $f\colon\{-1,1\}^n\rightarrow \{-1,1\}$ and 
  $\eps,\delta>0$, it makes $O(\log(1/\delta)/\eps^2)$ 
 queries and with probability at least $1-\delta$ it outputs a value $\tilde{\mu}$ such that
$|\tilde{\mu}-\mu| \leq \eps$, where $\mu = \E_{\bx \in \{-1,1\}^n}[f(\bx)].$
\end{fact}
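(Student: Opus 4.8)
The plan is to use the obvious empirical-mean estimator and invoke a Hoeffding/Chernoff tail bound. Concretely, \texttt{Estimate-Mean}$(f,\eps,\delta)$ draws $m := \lceil 2\ln(2/\delta)/\eps^2\rceil$ independent uniform points $\bx^{(1)},\dots,\bx^{(m)} \in \{-1,1\}^n$, queries $f$ at each of them, and outputs the sample average $\tilde\mu := \frac1m\sum_{j=1}^m f(\bx^{(j)})$. The query count is exactly $m = O(\log(1/\delta)/\eps^2)$, matching the claim.

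For correctness, note that $f(\bx^{(1)}),\dots,f(\bx^{(m)})$ are i.i.d.\ random variables, each taking values in $\{-1,1\} \subseteq [-1,1]$ with common mean $\mu = \E_{\bx}[f(\bx)]$. Hoeffding's inequality then gives
\[
\Pr\big[\,|\tilde\mu - \mu| > \eps\,\big] \;\le\; 2\exp\!\left(-\frac{m\eps^2}{2}\right) \;\le\; 2\exp\!\left(-\ln(2/\delta)\right) \;=\; \delta,
\]
using the choice of $m$. Hence with probability at least $1-\delta$ the output satisfies $|\tilde\mu - \mu|\le\eps$, as required.

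I do not anticipate any genuine obstacle here: the only thing to be slightly careful about is stating the concentration bound in a form valid for $[-1,1]$-valued (rather than $[0,1]$-valued) summands and tracking the resulting constant in the exponent, but this only affects the hidden constant in $O(\log(1/\delta)/\eps^2)$ and not the stated asymptotics. If one prefers, the same conclusion follows from the additive Chernoff bound applied to the $\{0,1\}$-valued variables $(f(\bx^{(j)})+1)/2$, which have mean $(\mu+1)/2$; an $\eps/2$-accurate estimate of that mean yields an $\eps$-accurate estimate of $\mu$, again at the cost only of constants.
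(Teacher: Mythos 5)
Your proposal is correct and takes the same approach the paper has in mind: the paper states this Fact without proof, noting only that it ``follows from a standard Chernoff bound,'' and your empirical-mean estimator together with Hoeffding's inequality is precisely the standard argument being invoked. The computation of the constant (via the $[-1,1]$ range in Hoeffding) is handled correctly and matches the claimed $O(\log(1/\delta)/\eps^2)$ query complexity.
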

 
\noindent {\bf The edge tester of \cite{GGL+00}.}   We recall the performance guarantee of the ``edge tester'' (which works by querying both endpoints of uniform random edges and outputting ``non-monotone'' if and only if it encounters an anti-monotone edge):

\begin{theorem}[\cite{GGL+00}] \label{fact:edge-tester}
There is a procedure {\tt Edge-Tester}$(f,\eps,\delta)$ with the following properties:  Given black-box access to $f\colon \{-1,1\}^n \to \{-1,1\}$ and parameters $\eps,\delta>0$, it makes $O({n} \log(1/\delta)/\eps)$ queries and outputs either ``monotone'' or ``non-monotone'' such that
\begin{itemize}
\item If $f$ is monotone then it outputs ``monotone'' with probability 1;\vspace{-0.1cm}
\item If $f$ is $\eps$-far from monotone then it outputs ``non-monotone'' with probability at least $1-\delta.$
\end{itemize}
\end{theorem}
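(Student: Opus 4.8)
The plan is to analyze the natural ``edge tester'': repeatedly sample a uniformly random edge of the hypercube (a uniform coordinate $i \in [n]$ together with a uniform $x \in \{-1,1\}^n$, giving the pair $(x, x^{\oplus i})$), query $f$ at both endpoints, and output ``non-monotone'' if and only if some sampled edge is anti-monotone. Completeness is immediate: a monotone $f$ has no anti-monotone edge, so the tester never outputs ``non-monotone'' and is correct with probability $1$. The query complexity is also immediate: we sample $m = O(n\log(1/\delta)/\eps)$ edges and spend $2$ queries per edge, for $O(n\log(1/\delta)/\eps)$ queries total.

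For soundness, the crux is the combinatorial lemma that if $f$ is $\eps$-far from monotone then a uniformly random edge is anti-monotone with probability at least $\eps/n$. Writing $\delta_i$ for the fraction of direction-$i$ edges that are decreasing (among the $2^{n-1}$ edges in direction $i$), a uniform random edge is anti-monotone with probability exactly $\tfrac{1}{n}\sum_{i=1}^n \delta_i$, so it suffices to prove $\dist(f,\textsc{Mono}) \le \sum_{i=1}^n \delta_i$. I would establish this via the standard ``directional sorting'' argument: let $S_i$ be the operator that, for every direction-$i$ edge whose two values are out of order, swaps those values. One needs three facts: (1) $S_i f$ is monotone in direction $i$ and $\dist(f, S_i f) = \delta_i$ (each decreasing edge contributes two changed vertices); (2) applying $S_i$ never increases the number of decreasing edges in any other direction $j$; and (3) consequently $g \coloneqq S_n \cdots S_1 f$ is monotone and $\dist(f,g) \le \sum_i \delta_i$, since by (2) the distance incurred when $S_i$ is applied within the sequence is at most $\delta_i$. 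Then $\dist(f,\textsc{Mono}) \le \dist(f,g) \le \sum_i \delta_i$, as desired.

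Given the lemma, soundness follows by a routine independence bound: each of the $m$ sampled edges is independently anti-monotone with probability at least $\eps/n$, so the probability that none is anti-monotone is at most $(1-\eps/n)^m \le e^{-m\eps/n} \le \delta$ once the hidden constant in $m = O(n\log(1/\delta)/\eps)$ is chosen large enough. Hence when $f$ is $\eps$-far from monotone the tester outputs ``non-monotone'' with probability at least $1-\delta$.

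The step I expect to be the main obstacle is fact (2) above --- that directional sorting in one coordinate creates no new violations in any other coordinate. This is the genuinely delicate point, handled by a short case analysis on the four values $f$ takes on each $2$-dimensional subcube spanned by directions $i$ and $j$; everything else is bookkeeping plus a Chernoff/union bound.
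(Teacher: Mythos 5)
The paper does not prove this statement; it cites it directly from \cite{GGL+00} and merely recalls the performance guarantee. Your reconstruction is precisely the standard argument from that reference --- the directional-sort operators $S_i$, the bound $\dist(f,\textsc{Mono}) \le \sum_i \delta_i$ via the fact that sorting along direction $i$ cannot create new violations in any other direction, and the resulting $\ge \eps/n$ lower bound on the probability that a uniform edge is anti-monotone --- and it is correct as stated.
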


\subsection{Useful structural results from prior work} \label{sec:useful-structural}

\noindent {\bf Gaussian distributions and the Berry--Ess\'een theorem.}  
Recall that the p.d.f. of the standard Gaussian distribution $\calN(0,1)$ with mean 0 and variance 1 is given by
\[
\phi(x) = {\frac 1 {\sqrt{2 \pi}}}\cdot e^{-x^2/2}.
\]
It is useful to have upper and lower bounds on Gaussian tails.  We have the following (see Example 2.1 and Equation 2.51 of \cite{TAILBOUND} for the upper and lower bounds, respectively):  for $t > 0,$ we have
\begin{equation}\label{gatail}
\left({\frac 1 t} - {\frac 1 {t^3}}\right)\cdot  
\frac{1}{\sqrt{2\pi}}\cdot e^{-t^2/2} 
\leq \mathop{\Pr}_{\bz \sim \calN(0,1)}[\bz \geq t] \leq e^{-t^2/2}.
\end{equation}
We also need the following Gaussian anti-concentration bound.

\begin{fact}[Gaussian anti-concentration]
Let $\bz$ be a random variable drawn from a Gaussian distri\-bution with variance $\sigma^2$. Then for all $\eps > 0$, we have $\sup_{\theta \in \R} \{ \hspace{0.03cm}\Pr[\hspace{0.03cm}|\bz - \theta| \leq \eps \sigma 
\hspace{0.03cm}]\hspace{0.03cm}\} \leq \eps$.
\end{fact}

The Berry--Ess\'een theorem (see e.g., \cite{Feller})
is a version of the central limit theorem for sums of independent random variables (stating that such a sum converges to a normal distribution) that provides a quantitative error bound.  It is useful for analyzing weight-regular LTFs and we recall it below (as well as the standard 
  Hoeffding inequality).
 \begin{theorem} [Berry--Ess\'een] \label{BerryEsseen}
    \label{thm:be} Let $\ell(\bx) = c_1 \bx_1 + \cdots + c_n \bx_n$ be a
    linear form of $n$ unbiased, independent random 
    $\{\pm 1\}$-valued variables $\bx_i$.  Let $\tau$ be such that
    $|c_i| \leq \tau$ for all $i$, and let
    $\sigma = (\sum c_i^2)^{1/2}$.  Write $F$ for the c.d.f.\ of
    $\ell(\bx)/\sigma$, i.e., $F(t) = \Pr\hspace{0.03cm}[
    \hspace{0.03cm}\ell(\bx)/\sigma \leq t\hspace{0.03cm}]$.  Then
    for all $t \in \R$, we have that
    $
    \left| F(t) - \Phi(t) \right| \leq \tau/\sigma,
    $
    where $\Phi$ denotes the c.d.f.\ of a standard $\calN(0,1)$ Gaussian random variable.
 \end{theorem}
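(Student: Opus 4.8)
The Berry--Ess\'een theorem is classical, so the plan is not to reprove it but to record how the stated (specialized) form follows from the textbook statement --- which is what the paper itself does by citing \cite{Feller}. First I would normalize: setting $a_i = c_i/\sigma$ gives $\sum_i a_i^2 = 1$, $|a_i| \le \tau/\sigma$, and $\ell(\bx)/\sigma = \sum_i a_i \bx_i$, a sum of independent, mean-zero, unit-variance random variables. The general Berry--Ess\'een theorem bounds $\sup_t |F(t) - \Phi(t)|$ by $C\sum_i \E\big[\,|a_i\bx_i|^3\,\big]$ for an absolute constant $C$, and the key observation is that $\E\big[\,|a_i\bx_i|^3\,\big] = |a_i|^3$, so
\[
\sum_i |a_i|^3 \;\le\; \Big(\max_i |a_i|\Big)\sum_i a_i^2 \;\le\; \frac{\tau}{\sigma},
\]
equivalently $\sum_i |c_i|^3 \le \tau\sum_i c_i^2 = \tau\sigma^2$. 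Since the optimal Berry--Ess\'een constant for sums of independent summands is below $1$, this yields $|F(t) - \Phi(t)| \le \tau/\sigma$ for every $t$, as claimed.

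For completeness I would also recall the standard Fourier-analytic proof of the general statement. Let $\varphi(\xi) = \E[\,e^{\mathrm{i}\xi\,\ell(\bx)/\sigma}\,] = \prod_i \cos(a_i\xi)$ be the characteristic function of $\ell(\bx)/\sigma$. Taylor-expanding $\log\cos$ and using $\sum_i a_i^2 = 1$ together with $\max_i |a_i| \le \tau/\sigma$, one bounds $|\varphi(\xi) - e^{-\xi^2/2}|$ on the range $|\xi| \le T := \Theta(\sigma/\tau)$ by a quantity whose integral against $d\xi/|\xi|$ over $[-T,T]$ is $O(\tau/\sigma)$. One then invokes Esseen's smoothing inequality,
\[
\sup_t |F(t) - \Phi(t)| \;\le\; \frac{1}{\pi}\int_{-T}^{T} \frac{|\varphi(\xi) - e^{-\xi^2/2}|}{|\xi|}\,d\xi \;+\; \frac{C'}{T},
\]
valid for every $T > 0$ with $C'$ an absolute constant; taking $T = \Theta(\sigma/\tau)$ makes both terms on the right $O(\tau/\sigma)$.

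The genuinely delicate ingredient is Esseen's smoothing inequality itself --- converting a pointwise bound on characteristic functions into a uniform bound on c.d.f.'s without losing a polynomial factor. A more self-contained Lindeberg-style argument (swap the $\bx_i$ for Gaussians one coordinate at a time, control each swap by a third-order Taylor estimate, then mollify the indicator $\mathbf{1}[\,\cdot\,\le t]$ at a scale $\delta$) is easy to run but only delivers a bound of order $(\tau/\sigma)^{1/3}$: the mollification error is governed by the $O(\delta)$ Gaussian anti-concentration at scale $\delta$, while the smooth-test-function error scales like $(\tau/\sigma)/\delta^3$, and balancing the two is the best one can do this way. Recovering the sharp linear dependence on $\tau/\sigma$ is precisely what the Fourier/smoothing-inequality machinery buys, so that is the step I would expect to be the main obstacle in a from-scratch writeup; in the paper I would simply cite \cite{Feller}.
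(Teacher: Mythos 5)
The paper states this theorem without proof, citing \cite{Feller}, so there is no in-paper argument to compare against; your proposal does the same but also records the reduction from the general non-i.i.d.\ Berry--Ess\'een bound (normalize $a_i = c_i/\sigma$, then $\sum_i |a_i|^3 \le (\max_i|a_i|)\sum_i a_i^2 \le \tau/\sigma$), which is correct, and you are right to flag that the implicit constant $1$ in the stated bound relies on a modern sharp constant ($<1$) rather than Feller's own explicit constant, which is larger --- a small but real looseness in the citation, since the slack at \eqref{eq:gac} in the proof of Lemma~\ref{lem:reg-neg-coeff} is not wide enough to absorb an arbitrary constant. One small slip in your Lindeberg aside: with the scalings you correctly identify --- $O(\delta)$ mollification error and $O((\tau/\sigma)\delta^{-3})$ swap error --- balancing gives $\delta = (\tau/\sigma)^{1/4}$ and hence error $(\tau/\sigma)^{1/4}$, not $(\tau/\sigma)^{1/3}$; the qualitative point, that the naive Lindeberg route loses a polynomial in the exponent and Esseen smoothing is what recovers linearity in $\tau/\sigma$, stands.
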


\begin{theorem}[Hoeffding's Inequality]
Let $\bx$ be a random variable drawn uniformly from $\bn$. Let $w \in \R^d$ and $t > 0$. Then we have
\[ \mathop{\Pr}_{\bx}\big[|\bx \cdot w| \geq t\big] \leq 2\exp\left(-\frac{t^2}{2\| w\|_2^2}\right)\quad\text{and}\quad \mathop{\Pr}_{\bx}\big[ \bx \cdot w \geq t\big] \leq  \exp\left(-\frac{t^2}{2\| w\|_2^2}\right).\vspace{0.08cm} \]
\end{theorem}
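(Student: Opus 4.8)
The plan is to prove both tail bounds by the standard exponential-moment (Chernoff) method, establishing the one-sided bound first and then deducing the two-sided bound by symmetry.

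First I would fix any $\lambda>0$ and apply Markov's inequality to the nonnegative random variable $e^{\lambda(\bx\cdot w)}$, which gives $\Pr_{\bx}[\bx\cdot w\ge t]\le e^{-\lambda t}\,\E_{\bx}[e^{\lambda(\bx\cdot w)}]$. Since the coordinates $\bx_1,\dots,\bx_d$ are independent, the moment generating function factorizes as $\E_{\bx}[e^{\lambda(\bx\cdot w)}]=\prod_{i=1}^d \E[e^{\lambda w_i\bx_i}]=\prod_{i=1}^d \cosh(\lambda w_i)$, using $\E[e^{\lambda w_i\bx_i}]=\frac12 e^{\lambda w_i}+\frac12 e^{-\lambda w_i}=\cosh(\lambda w_i)$.

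The one analytic ingredient is the elementary inequality $\cosh(x)\le e^{x^2/2}$ for all real $x$, which I would verify by comparing Taylor series coefficient-by-coefficient: $\cosh(x)=\sum_{k\ge 0}\frac{x^{2k}}{(2k)!}$, $e^{x^2/2}=\sum_{k\ge 0}\frac{x^{2k}}{2^k k!}$, and $(2k)!\ge 2^k k!$ for every $k\ge 0$. Plugging in $x=\lambda w_i$ and multiplying over $i$ yields $\E_{\bx}[e^{\lambda(\bx\cdot w)}]\le \exp\big(\tfrac{\lambda^2}{2}\sum_i w_i^2\big)=e^{\lambda^2\|w\|_2^2/2}$, so $\Pr_{\bx}[\bx\cdot w\ge t]\le \exp(-\lambda t+\lambda^2\|w\|_2^2/2)$. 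Optimizing by taking $\lambda=t/\|w\|_2^2$ (which is positive since $t>0$; the statement is trivial when $w=0$) gives $\Pr_{\bx}[\bx\cdot w\ge t]\le e^{-t^2/(2\|w\|_2^2)}$, the second displayed bound.

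For the first bound, I would observe that $-\bx$ is also uniform over $\bn$ and $\|-w\|_2=\|w\|_2$, so the one-sided bound applied to $-w$ (equivalently, to $-\bx$) gives $\Pr_{\bx}[\bx\cdot w\le -t]\le e^{-t^2/(2\|w\|_2^2)}$; a union bound over the events $\{\bx\cdot w\ge t\}$ and $\{\bx\cdot w\le -t\}$ then produces the factor $2$. I do not anticipate any real obstacle: every step except the bound $\cosh(x)\le e^{x^2/2}$ is routine, and that inequality is itself a one-line Taylor-series comparison.
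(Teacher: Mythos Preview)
Your argument is correct and is exactly the standard exponential-moment proof of Hoeffding's inequality for Rademacher sums; every step (Markov, factorization, $\cosh(x)\le e^{x^2/2}$, optimization in $\lambda$, symmetry plus union bound) is valid as written. Note that the paper does not actually prove this theorem---it is stated without proof in the ``Useful structural results from prior work'' section as a standard tool---so there is no paper proof to compare against.
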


\noindent {\bf Weight-regularity versus Fourier-regularity for LTFs.}
An easy argument, using the Berry--Ess\'een inequality above, shows that weight-regularity always implies Fourier-regularity for LTFs:

\begin{theorem} [Theorem~38 of \cite{MORS:10}] \label{thm:38} 
Let $f\colon \{-1,1\}^n \to \{-1,1\}$ be a $\tau$-weight-regular LTF.  Then $f$ is $O(\tau)$-Fourier-regular.
\end{theorem}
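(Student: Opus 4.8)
The plan is to bound each degree-$1$ Fourier coefficient $|\hat f(i)|$ by $O(\tau)$ directly. First I would fix a representation $(w,\theta)$ of $f$ and rescale so that $\|w\|_2 = 1$; then $\tau$-weight-regularity simply says $|w_i| \le \tau$ for every $i$. I may also assume $\tau \le 1/2$, since otherwise the desired conclusion $|\hat f(i)| \le 1 = O(\tau)$ is immediate.

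Fix a coordinate $i$ and write $\ell'(x) = \sum_{j \ne i} w_j x_j$, a linear form in the remaining variables whose standard deviation is $\sigma' := \sqrt{\sum_{j\ne i} w_j^2} = \sqrt{1 - w_i^2} \ge \sqrt{3}/2$ (here I use $|w_i| \le \tau \le 1/2$). The key step is to first average over $x_i$ with $x_{-i}$ held fixed: a short case analysis of $\sign(w_i x_i + \ell'(x_{-i}) - \theta)$ over $x_i \in \{-1,1\}$ shows that $\E_{\bx_i}[f(\bx)\bx_i]$ equals $0$ unless $\ell'(x_{-i}) - \theta$ lies in a fixed interval of length $2|w_i| \le 2\tau$, in which case it equals $\pm 1$ (with sign $\sign(w_i)$). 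Taking expectation over $x_{-i}$ therefore gives
\[ |\hat f(i)| \;\le\; \sup_{a \in \R}\ \Pr_{\bx_{-i}}\big[\,|\ell'(\bx_{-i}) - a| \le \tau\,\big]. \]
Now I would invoke Berry--Ess\'een (Theorem~\ref{thm:be}) for $\ell'/\sigma'$: since $|w_j| \le \tau$ for all $j \ne i$ and the standard deviation is $\sigma'$, the c.d.f.\ of $\ell'/\sigma'$ is within $\tau/\sigma'$ of $\Phi$ pointwise. The event $|\ell' - a| \le \tau$ says $\ell'/\sigma'$ lies in an interval of half-width $\tau/\sigma'$, so replacing $\ell'/\sigma'$ by a standard Gaussian $\bz$ costs an additive $2\tau/\sigma'$, and the Gaussian anti-concentration fact (applied with variance $1$ and $\eps = \tau/\sigma'$) bounds the Gaussian's probability of that interval by $\tau/\sigma'$. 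Combining, $|\hat f(i)| \le 3\tau/\sigma' \le 3\tau/(\sqrt 3/2) = 2\sqrt 3\,\tau = O(\tau)$; since $i$ was arbitrary, $f$ is $O(\tau)$-Fourier-regular.

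The argument is essentially routine, so the "main obstacle" is really just keeping the bookkeeping honest: the first step — the reduction of $\hat f(i)$ to an anti-concentration probability for $\ell'(\bx_{-i})$ — must correctly track the half-open boundary of $\sign$ and the sign of $w_i$ so that no constant factor is lost, and one should note the (pleasant) point that the Berry--Ess\'een error term is itself $\Theta(\tau/\sigma') = O(\tau)$, i.e., of the same order as the quantity being estimated, which is exactly what makes an $O(\tau)$ bound come out. Finally one should remember to rule out a degenerate $\sigma'$, which is handled by the reduction to $\tau \le 1/2$ (forcing $w_i^2 \le 1/4$ and hence $\sigma' \ge \sqrt 3/2$).
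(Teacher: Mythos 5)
Your proof is correct, and it uses exactly the approach the paper alludes to (the paper does not reprint a proof — it cites Theorem~38 of \cite{MORS:10} and remarks only that it follows from an ``easy argument, using the Berry--Ess\'een inequality''): reduce $|\hat f(i)| = \Inf_i(f)$ to an interval probability of the length-$(n-1)$ linear form $\ell'$ by averaging over $x_i$, then apply Berry--Ess\'een together with Gaussian anti-concentration, taking care (as you do via the reduction to $\tau \le 1/2$) that $\sigma'$ is bounded away from $0$. The bookkeeping around $\sign$'s half-open boundary and the sign of $w_i$ is handled correctly, and the final constant $2\sqrt 3$ is fine.
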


The converse is not always true; for example, the constant 1 function, which is $\tau$-Fourier-regular for all $\tau>0$, 
may be written as $f(x) = \sign(x_1 + 2).$  However, if we additionally impose the condition that $f$ is not too biased towards $+1$ or $-1$, then a converse holds.  Sharpening an earlier result (Theorem 39 of \cite{MORS:10}),  Dzindzalieta \ignore{and G\"{o}tze have}has proved the following:

\begin{theorem}[Theorem 20 of \cite{Dzindzalieta14}]\label{Dzindzalieta}
Let $f(x)=\sign(w\cdot x-\theta)$ be an LTF  
  such that $|\E_{\bx}[f(\bx)]|\leq1-\gamma$.
If $f$ is $\tau$-Fourier-regular, then it is also $O(\tau/\gamma)$-weight-regular.
\end{theorem}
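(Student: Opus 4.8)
The plan is to show that the variable carrying the largest weight has large influence, and then convert this into a weight bound via the identity $|\hat f(i)| = \Inf_i(f)$, which holds for unate (hence for LTF) $f$. We may assume $\tau < \gamma$, as otherwise the conclusion is vacuous (every weight is at most $\|w\|_2$). By flipping signs of individual variables we may take all $w_i \ge 0$, and by rescaling $(w,\theta)$ we may take $\|w\|_2 = 1$; let $w_1 = \max_i w_i$. It then suffices to prove
\[
\Inf_1(f) \ \ge\ c\,\gamma\, w_1
\]
for a universal constant $c>0$: combined with $\tau \ge |\hat f(1)| = \Inf_1(f)$ this gives $w_1 \le \tau/(c\gamma)$, which after undoing the rescaling says exactly that $f$ is $O(\tau/\gamma)$-weight-regular.

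To estimate $\Inf_1(f)$, set $Y := \sum_{i\ge 2} w_i x_i$, a symmetric random variable independent of $x_1$ with $\|w'\|_2 = \sigma := \sqrt{1-w_1^2} \le 1$ and with every coefficient at most $w_1$. Since $w_1>0$ makes $f$ monotone in $x_1$, the variable $x_1$ is pivotal precisely when $Y-\theta\in[-w_1,w_1)$, so
\[
\Inf_1(f)\ =\ \Pr\big[\,Y\in[\theta-w_1,\theta+w_1)\,\big],
\]
the mass that $Y$ places in a window of width $2w_1$ centred at $\theta$. To locate $\theta$ we use the balance hypothesis: $\Pr[f=1] = \tfrac12\Pr[Y\ge\theta-w_1] + \tfrac12\Pr[Y\ge\theta+w_1]$ lies in $[\gamma/2,\,1-\gamma/2]$, and feeding this into Hoeffding's inequality for the two tails of $Y$ forces $|\theta|\le w_1 + \sigma\sqrt{2\ln(2/\gamma)}$. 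Thus, provided $w_1$ is bounded away from $1$, the centre $\theta$ lies in the bulk of $Y$, at distance $O(\sigma\sqrt{\log(1/\gamma)})$ from the origin; the complementary regime $w_1 = 1-o(1)$ is handled directly, since there $f$ is $o(1)$-close to $\pm x_1$ and $\Inf_1(f) = 1-o(1) \gg \gamma w_1$.

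What remains is a local anti-concentration statement: a symmetric Rademacher sum $Y$ with largest coefficient $r:=w_1$ and scale $\sigma$ places mass $\Omega(\gamma r)$ in any window of width $2r$ whose centre is within $O(\sigma\sqrt{\log(1/\gamma)})$ of $0$. When the coefficient vector $w'$ is itself $O(r/\sigma)$-Fourier-regular this is immediate from Berry--Ess\'een together with the Gaussian tail lower bound \eqref{gatail}: $Y/\sigma$ is close in CDF to $\calN(0,1)$, whose density throughout the relevant range is $\Omega(\gamma)$, a width-$2r$ window has width $2r/\sigma$ in standard units, and in this regime $r/\sigma$ is large enough relative to the Berry--Ess\'een error for the Gaussian estimate to go through. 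The delicate case is when $w'$ is ``clumpy'' — for instance when $Y$ behaves like a scaled binomial — because then the target mass $\Omega(\gamma r/\sigma)$ is dominated by the additive Berry--Ess\'een error and one needs a sharp local-limit-type estimate rather than a CDF comparison; this is precisely where the sharp Bernoulli tail bounds of Dzindzalieta enter, and what yields the improvement over Theorem~39 of \cite{MORS:10}. I would treat the general case by splitting $w'$ into a ``head'' consisting of the coefficients of size $\Theta(r)$ and an $O(r/\sigma)$-Fourier-regular ``tail'', conditioning on the head, applying the regular estimate to the tail when its variance is $\Omega(r^2)$, and falling back on the sharp local estimate for the head (a sum of few coefficients each of size $\Theta(r)$) otherwise. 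Combining this anti-concentration bound with the localization of $\theta$ from the previous paragraph yields $\Inf_1(f) = \Omega(\gamma w_1)$, and the theorem follows.

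I expect the main obstacle to be exactly this anti-concentration lemma with the correct \emph{linear} dependence on both $\gamma$ and $w_1$. The Fourier-regular case is routine, but the irregular/clumpy case — where one cannot afford the additive error inherent in Berry--Ess\'een — requires the sharp Rademacher-sum tail estimates and the head/tail decomposition described above, and is where essentially all of the difficulty lies; one could in principle follow the structure of \cite{MORS:10}'s Theorem~39, but recovering the sharp $O(\tau/\gamma)$ bound (rather than a weaker power of $1/\gamma$) is what forces the use of the sharper input.
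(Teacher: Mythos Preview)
The paper does not prove this theorem: it is quoted as Theorem~20 of \cite{Dzindzalieta14} in the ``Useful structural results from prior work'' subsection (Section~\ref{sec:useful-structural}) and used as a black box throughout. There is therefore no proof in the paper to compare your proposal against.

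As to the proposal itself: the reduction to an influence lower bound for the heaviest-weight variable, via $|\hat f(i)|=\Inf_i(f)$ for unate $f$, is correct and is essentially the skeleton of Theorem~39 of \cite{MORS:10}; likewise the localization of $\theta$ by Hoeffding from the balance hypothesis is fine. You also correctly identify that the entire content of the sharpening from \cite{MORS:10} to $O(\tau/\gamma)$ lives in the local anti-concentration bound for Rademacher sums, and that Berry--Ess\'een is too lossy in the ``clumpy'' regime. But your treatment of that regime --- a head/tail split together with an unspecified ``sharp local estimate for the head'' --- is a restatement of the difficulty rather than a resolution of it: you have not said what that estimate is, under what hypotheses it holds, or how to prove it, and this is precisely the substance of Dzindzalieta's result. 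So the proposal is an accurate map of where the work lies, but it does not actually carry out the hard step; as written it is a plan, not a proof.
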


\noindent
{\bf Making LTFs Fourier-regular by fixing high-influence variables.}  Finally, we will need the following simple result (Proposition~62 from
\cite{MORS:10}), which shows that LTFs typically become Fourier-regular when their highest-influence variables are fixed to constants:

\begin{proposition}  \label{prop:its-cool} Let $f \colon \bits^n \to \bits$ be an LTF and let
$J \supseteq \{i : |\hat{f}(i)| \geq \beta\}$.  Then $f_{\rho}$ is not
$(\beta/\eta)$-Fourier-regular for at most an $\eta$-fraction of all $2^{|J|}$
restrictions $\rho$ that fix variables in $J$.
\end{proposition}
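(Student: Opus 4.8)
The plan is to reduce the statement to a statement about a single high-influence coordinate at a time, and then apply a union bound over the coordinates in $J$. Fix an LTF $f(x) = \sign(w \cdot x - \theta)$ and suppose without loss of generality (by the well-known normalization for LTFs, see \cite{MORS:10}) that the representation $(w,\theta)$ is chosen so that $|w_i|$ is, up to a constant factor, proportional to $|\hat f(i)|$ for the high-influence coordinates; more robustly, I would avoid relying on weights altogether and argue purely in terms of Fourier coefficients. The key structural fact I want is this: for a restriction $\rho$ that fixes exactly the variables in $J$, the coefficient $\smash{\widehat{f_\rho}(i)}$ for a surviving coordinate $i \notin J$ satisfies $\smash{\Ex_{\brho}[\widehat{f_\brho}(i)^2] = \sum_{S : S \cap J = \emptyset,\, S \ni i} \hat f(S)^2}$, by Parseval applied coordinate-by-coordinate over the fixed block $J$ — averaging over a uniformly random assignment $\brho$ to the variables in $J$ exactly reproduces the squared Fourier mass of $f$ on sets disjoint from $J$.

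The next step is to bound the left-hand side using the hypothesis $J \supseteq \{i : |\hat f(i)| \geq \beta\}$. Since the coordinates outside $J$ all have $|\hat f(i)| < \beta$, I would use the fact (which holds for LTFs, or more generally via the standard $\sum_i \Inf_i(f) \cdot \hat f(i)^2$-type estimates, or simply via Parseval together with a level-$1$ inequality) that $\sum_{i \notin J} \hat f(i)^2 \leq \beta \cdot \sum_{i \notin J} |\hat f(i)| \le \beta \cdot \sum_{i \notin J} \Inf_i(f) \leq \beta \cdot 1$ — here I am using $|\hat f(i)| = \Inf_i(f)$ for unate functions (recalled in Section~\ref{sec:fourier-regularity}) and $\sum_i \Inf_i(f) = \mathbf{W}^1[f] \cdot (\text{something}) \le O(1)$; if the crude bound $\sum_i \Inf_i(f) \le \sqrt n$ is not good enough, I would instead invoke the sharper fact that for an LTF $\sum_i \hat f(i)^2 \le 1$ directly and note that $\sum_{i \notin J}\hat f(i)^2 \le \max_{i \notin J}|\hat f(i)| \cdot \sum_i |\hat f(i)| $ is still the wrong direction — so the clean route is: $\sum_{i \notin J} \hat f(i)^2 \le \beta \sum_{i\notin J} |\hat f(i)| $ and then bound $\sum_{i\notin J}|\hat f(i)|$ by the LTF structure. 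Combining with the coordinate-wise Parseval identity above, $\smash{\sum_{i \notin J} \Ex_{\brho}[\widehat{f_\brho}(i)^2] \le \sum_{i \notin J} \hat f(i)^2 \le O(\beta)}$.

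Finally I would convert this bound on the expected total squared degree-$1$ Fourier weight (outside $J$) of $f_{\brho}$ into the desired probability bound. By Markov's inequality, the probability over $\brho$ that $\smash{\sum_{i \notin J}\widehat{f_\brho}(i)^2 > \beta/\eta \cdot (\text{const})}$ is at most $O(\eta)$; and whenever this sum is at most $(\beta/\eta)^2$, in particular every individual $\smash{|\widehat{f_\brho}(i)| \le \beta/\eta}$, i.e. $f_\brho$ is $(\beta/\eta)$-Fourier-regular. Adjusting constants (the paper says ``not $(\beta/\eta)$-Fourier-regular for at most an $\eta$-fraction,'' so I would track the constants to match, or simply cite the statement as-is since it is Proposition~62 of \cite{MORS:10} and reproduce its short argument). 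The main obstacle I anticipate is getting the ``wrong-direction'' $\ell_2$-versus-$\ell_1$ estimate right: bounding $\sum_{i \notin J}\hat f(i)^2$ by something like $O(\beta)$ requires using that these coefficients are small \emph{and} that their $\ell_1$-sum (or influence-sum) is controlled, and for a general LTF one should lean on $\sum_i \Inf_i(f)^2 \le$ (degree-$1$ weight) $\le 1$ combined with $|\hat f(i)|<\beta$ to get $\sum_{i\notin J}\hat f(i)^2 = \sum_{i\notin J}\hat f(i) \cdot \hat f(i) \le \beta \sum_{i\notin J}|\hat f(i)|$ — so the real content is arguing $\sum_{i}|\hat f(i)| = O(1/\beta \cdot \text{stuff})$ or avoiding that sum entirely by a more careful level-$1$ argument. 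I expect the cleanest writeup simply invokes Proposition~62 of \cite{MORS:10} verbatim, since it is exactly this statement.
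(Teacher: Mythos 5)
There is a genuine gap, and it begins with the Parseval-type identity you state. For a restriction $\rho$ of the coordinates in $J$ and $i\notin J$, the correct identity is
\[
\Ex_{\brho}\big[\widehat{f_{\brho}}(i)^2\big] \;=\; \sum_{S\subseteq J} \hat f\big(\{i\}\cup S\big)^2,
\]
i.e.\ you recover the Fourier mass of $f$ on all sets $T$ with $T\setminus J=\{i\}$, \emph{not} the mass on sets disjoint from $J$. In particular this quantity is at least $\hat f(i)^2$ and in general can be much larger (think of $\mathsf{Maj}_n$ restricted on half the coordinates: the surviving degree-$1$ coefficients typically grow). So the inequality you need goes in exactly the wrong direction, and the hypothesis $|\hat f(i)|<\beta$ for $i\notin J$ does not give you an upper bound on $\Ex_{\brho}[\widehat{f_{\brho}}(i)^2]$. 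Even setting that aside, the proposed Markov/threshold bookkeeping doesn't close: bounding the expected sum $\sum_{i\notin J}\widehat{f_{\brho}}(i)^2$ by $O(\beta)$ (which isn't true) and thresholding at $\beta/\eta$ doesn't imply each coordinate is below $\beta/\eta$ unless you threshold at $(\beta/\eta)^2$, which Markov then doesn't deliver for small $\beta/\eta$.

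The actual argument (Proposition~62 of \cite{MORS:10}; the paper cites it and does not reprove it) avoids any union bound and any $\ell_2$ estimate by using a structural fact specific to LTFs: for $f=\sign(w\cdot x-\theta)$ the influences are monotone in the weight magnitudes, i.e.\ $|w_i|\ge|w_j|$ implies $\Inf_i(f)\ge\Inf_j(f)$, and this holds for every choice of threshold. Since each $f_\rho$ is an LTF with the same weights $(w_i)_{i\notin J}$ and only a shifted threshold, the maximum degree-$1$ Fourier coefficient of $f_\rho$ is attained at the single fixed coordinate $i^\ast = \arg\max_{i\notin J}|w_i|$, for \emph{every} $\rho$. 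Thus $f_\rho$ is $(\beta/\eta)$-Fourier-regular iff $|\widehat{f_\rho}(i^\ast)|<\beta/\eta$, and
\[
\Ex_{\brho}\big[\big|\widehat{f_{\brho}}(i^\ast)\big|\big]
\;=\;\Ex_{\brho}\big[\Inf_{i^\ast}(f_{\brho})\big]
\;=\;\Inf_{i^\ast}(f)
\;=\;|\hat f(i^\ast)|\;<\;\beta,
\]
using unateness ($|\hat f(i)|=\Inf_i(f)$) and the decomposition of influence under a uniformly random restriction of $J$. One application of Markov's inequality then gives $\Pr_{\brho}[\,|\widehat{f_{\brho}}(i^\ast)|\ge\beta/\eta\,]<\eta$, which is the claim. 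So the idea you are missing is the reduction to a single ``worst'' coordinate via the weight/influence monotonicity of LTFs; without it, the per-coordinate bounds you have are too weak to union-bound.
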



\section{New structural results about LTFs}
\label{sec:new-structural} 
Our analysis requires a few new structural results about LTFs.  
We collect and prove these results in
this section. Readers who are eager to proceed to the algorithm and its analysis
  of correctness can skip this section and refer back to it as needed.

\subsection{Far-from-monotone weight-regular LTFs have significant 
   squared negative\\ weights, and vice versa}

The main idea of this subsection is that for weight-regular LTFs, the distance to monotonicity~corresponds (approximately) to its total amount of squared weights
  of negative coefficients (under any representation $(w,\theta)$).
  Lemma \ref{lem:reg-neg-coeff} shows that if $f$ is far from monotone then this quantity is large, and Lemma \ref{lem:lem3.1-conv} establishes a converse (both for weight-regular LTFs). We note that Lemma \ref{lem:reg-neg-coeff}~is essentially equivalent to a lemma proved in \cite{BB15}.

We introduce some notation.
Given an LTF $f\colon\{-1,1\}^n\rightarrow \{-1,1\}$ with   $f(x)=\sign(w\cdot x-\theta)$,  we use $P=P(f)$ and $N=N(f)$ to
   denote the set of non-negative and negative indices, respectively: 
 $P = \{i \in [n] \colon w_i \geq 0\}$ and $N = \{ j\in [n] \colon w_j < 0 \}$. 
We also let $\pos(f)$ and $\negg(f)$ denote the sum of squared weights of positive and negative coefficients, respectively: 
\begin{align*}
\pos(f) = \sum_{i \in P} w_i^2 \quad\ \text{and}\ \quad
\negg(f) = \sum_{j \in N} w_j^2.
\end{align*}
Recall that we say $f$ has $\lambda$-\emph{significant squared negative weights}
  if $\negg(f)/(\pos(f)+\negg(f))\ge \lambda$. 
   
We start with the proof of Lemma \ref{lem:reg-neg-coeff}.   
   
\begin{lemma} \label{lem:reg-neg-coeff}
Let $f\colon\{-1,1\}^n\rightarrow \{-1,1\}$ be an LTF given by $f(x)=\sign(w\cdot x-\theta)$. 
If $f$ is~both $\eps$-far from monotone and $\tau$-weight-regular 
  for some $\tau\leq \eps/16$,  then $f$ must have $\lambda$-significant 
  squared negative weights, where
$
\lambda= {\eps^2}/(16\ln(8/\eps))$. 
\end{lemma}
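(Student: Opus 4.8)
The plan is to prove the contrapositive: if $f = \sign(w\cdot x - \theta)$ is $\tau$-weight-regular and has only a small amount of squared negative weight (i.e.\ $\negg(f)/(\pos(f)+\negg(f)) < \lambda$), then $f$ is $\eps$-close to monotone. The natural candidate monotone function is $g(x) = \sign\bigl(\sum_{i \in P} w_i x_i - \theta\bigr)$, obtained by simply deleting all the negative-weight terms; $g$ is monotone because all its weights are non-negative. So it suffices to bound $\dist(f,g) = \Pr_{\bx}[f(\bx) \ne g(\bx)]$. This event requires that the linear form $L_P(\bx) := \sum_{i\in P} w_i \bx_i$ lies within $|L_N(\bx)|$ of the threshold $\theta$, where $L_N(\bx) := \sum_{j\in N} w_j\bx_j$; more precisely $f(\bx)\neq g(\bx)$ forces $L_P(\bx) - \theta$ and $L_P(\bx) + L_N(\bx) - \theta$ to have opposite signs, so $|L_P(\bx) - \theta| \le |L_N(\bx)|$.

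The two-step bound I would carry out: first, use Hoeffding's inequality on $L_N(\bx)$ to argue that $|L_N(\bx)|$ is unlikely to be large. Writing $\sigma^2 = \pos(f) + \negg(f) = \|w\|_2^2$ and $\|w_N\|_2^2 = \negg(f) < \lambda\sigma^2$, Hoeffding gives $\Pr[|L_N(\bx)| \ge t] \le 2\exp(-t^2/(2\lambda\sigma^2))$; choosing $t = \Theta(\sigma\sqrt{\lambda \ln(1/\eps)})$ makes this probability at most, say, $\eps/2$. Second, condition on $|L_N(\bx)| \le t$ and use anti-concentration of $L_P(\bx)$ around $\theta$: by Berry--Ess\'een (Theorem~\ref{thm:be}) applied to $L_P/\|w_P\|_2$, together with the weight-regularity $\max_i|w_i| \le \tau\sigma$, the distribution of $L_P(\bx)$ is within $O(\tau\sigma/\|w_P\|_2)$ in CDF-distance of a Gaussian of standard deviation $\|w_P\|_2$. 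Since $\|w_P\|_2^2 = \pos(f) \ge (1-\lambda)\sigma^2$ is a constant fraction of $\sigma^2$, this error is $O(\tau)$, and the Gaussian anti-concentration fact bounds $\Pr[|L_P(\bx) - \theta| \le t]$ by $O(t/\|w_P\|_2) + O(\tau) = O(\sqrt{\lambda\ln(1/\eps)}) + O(\tau)$. Combining, $\dist(f,g) \le \eps/2 + O(\sqrt{\lambda\ln(1/\eps)}) + O(\tau)$, and plugging in $\lambda = \eps^2/(16\ln(8/\eps))$ and $\tau \le \eps/16$ should drive the whole thing below $\eps$ after tracking constants carefully.

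The main obstacle I anticipate is bookkeeping the constants so that the final bound is exactly $\eps$ rather than $C\eps$ for some absolute $C$, since the statement asks for a clean $\eps$ with the specific value $\lambda = \eps^2/(16\ln(8/\eps))$. This forces care in the choice of the cutoff $t$ in the Hoeffding step and in how the Berry--Ess\'een error and the Gaussian anti-concentration term are charged; in particular one has to handle the mild subtlety that $L_P(\bx)$ is a sum over only the positive coordinates, so its normalizing variance $\|w_P\|_2^2$ is slightly below $\sigma^2$ — but since $1-\lambda$ is very close to $1$ for the given $\lambda$, this loses only a negligible factor. A secondary minor point is the degenerate case where $P$ or $N$ is small or the form $L_P$ is itself a single dominant term, but $\tau$-weight-regularity together with $\lambda < 1$ rules out the pathological cases and makes Berry--Ess\'een applicable with a good bound. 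I expect the argument to go through cleanly modulo this constant-chasing.
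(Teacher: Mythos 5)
Your proposal is essentially the paper's proof read in the contrapositive: same comparison function $g$ obtained by deleting negative-weight coordinates, same independence-based split into the negative part $L_N$ (controlled by Hoeffding) and the positive part $L_P$ (controlled by Berry--Ess\'een plus Gaussian anti-concentration), and the same role for $\tau$-weight-regularity and the lower bound $\pos(f)\ge 1/2$. The paper fixes the cutoff at $t=\eps\sqrt{\pos(f)}/2$ and multiplies the two probabilities (rather than union-bounding), which makes the constants close cleanly at $\lambda=\eps^2/(16\ln(8/\eps))$, so the bookkeeping you flagged as the remaining obstacle does go through with that specific choice of $t$.
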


\begin{proof}  For convenience, we normalize all the weights so that $\pos(f) + \negg(f)=1$. Then it suffices to show that $\negg(f) \geq  {\eps^2}/({16\ln(8/\eps))}.$  Since $f$ is $\tau$-weight-regular, we have
  that $\max_i |w_i|\le \tau$. We  also assume that  $\pos(f) > 1/2$, since otherwise $\negg(f) \geq 1/2$ and we are already done.

Let $g\colon\{-1, 1\}^n \rightarrow \{-1, 1\}$, 
  $g(x) = \sign(\sum_{i \in P} w_i\cdot x_i - \theta)$ be the LTF obtained by removing the negative weights from $f$.  By independence of 
$(\bx_i)_{i \in P}$ and $(\bx_j)_{j \in N}$ for uniform $\bx \in \{-1,1\}^n$,  
\begin{align}
\mathop{\Pr}_{\bx}[f(\bx) = g(\bx)] &\geq \mathop{\Pr}_{\bx}\left[\hspace{0.06cm}\left| \sum_{i \in P} w_i \cdot\bx_i - \theta \right| \geq \frac{\eps\sqrt{\pos(f)}}{2}\hspace{0.06cm}\right] \cdot \mathop{\Pr}_{\bx}\left[\hspace{0.06cm} \left| \sum_{j \in N} w_j\cdot \bx_j \right| \leq \frac{\eps\sqrt{\pos(f)}}{2}\hspace{0.06cm}\right] \label{eq:reg-neg-1}.
\end{align}
We can lower bound the first probability by
\begin{align}
\mathop{\Pr}_{\bx}\left[\hspace{0.06cm}\left| \sum_{i \in P} w_i\cdot \bx_i - \theta \right| \geq \frac{\eps\sqrt{\pos(f)}}{2}\hspace{0.06cm}\right]
                      &\geq \mathop{\Pr}_{\bz \sim \calN(0, \pos(f))}\left[\hspace{0.03cm}|\bz - \theta| \geq \frac{\eps\sqrt{\pos(f)}}{2}\hspace{0.06cm}\right] - \frac{2\tau}{\sqrt{\pos(f)}} \label{eq:be1}\\[0.2ex]
                      &\geq  1 - \frac{\eps}{2} - \frac{2\tau}{\sqrt{\pos(f)}} \geq 1 - \frac{3\eps}{4}, \label{eq:gac}
                      \end{align}
by using the Berry--Ess\'een theorem for (\ref{eq:be1}). Note that even though the error term from the Berry--Ess\'een theorem is $ {\tau}/{\sqrt{\pos(f)}}$, the set
we are interested in is indeed  the union of \emph{two} intervals,~giving us $ {2\tau}/{\sqrt{\pos(f)}}$. In (\ref{eq:gac}) we used Gaussian anti-concentration, $\pos(f)\ge 1/2$, and~$\tau \leq \eps/16$.
On~the other hand, we can lower bound the second probability
  of (\ref{eq:reg-neg-1}) by 
                    \begin{align}
                    \mathop{\Pr}_{\bx}\left[ \hspace{0.06cm}\left| \sum_{j \in N} w_j\cdot \bx_j \right| \leq \frac{\eps\sqrt{\pos(f)}}{2}\hspace{0.06cm}\right]
                    &\geq 1 - 2\cdot\exp\left({-\frac{\eps^2\hspace{0.03cm}\pos(f)}{  8\hspace{0.03cm}\negg(f)}}\right), \label{eq:reg-neg-4}
\end{align}
using Hoeffding's Inequality. 
Moreover, since $g$ is a monotone function and $f$ is $\eps$-far from monotone we have $\Pr_{\bx}[f(\bx)=g(\bx)] \leq 1 - \epsilon$. Combining all these inequalities, we get
\[ 1 - \eps \geq \left( 1 - \frac{3\eps}{4} \right) \left( 1 - 2\cdot\exp\left({-\frac{\eps^2\hspace{0.03cm}\pos(f)}{  8\hspace{0.03cm} \negg(f)}}\right)\right),\]
which, together with $\pos(f) \geq {1/2}$, implies that $\negg(f) \geq  {\eps^2}/({16\ln(8/\eps)})$ as claimed.
\end{proof}

The following lemma, which will be used to prove a converse to Lemma \ref{lem:reg-neg-coeff}, says that if $f$~is~an LTF that is close to a monotone function, then $f$ must be close to the LTF obtained by erasing all its negative weights.  Recall  $\dist(f, \textsc{Mono})$ is the distance from $f$ to the closest monotone function.

\begin{lemma}\label{lem:closest-mon-ltf}
Let $f\colon \{-1,1\}^n \to \{-1,1\}$ be an LTF given by 
  $f(x) = \sign(\sum_{i \in [n]} w_i\cdot x_i  - \theta)$ 
and~let $g$ denote the
(monotone) LTF given by $g(x) = \sign(\sum_{i \in P} w_i\cdot x_i - \theta)
$.
Then $$\dist(f,g) = \dist(f, \textsc{Mono}).$$
\end{lemma}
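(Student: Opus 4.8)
The plan is to prove the two inequalities $\dist(f,g)\le \dist(f,\textsc{Mono})$ and $\dist(f,\textsc{Mono})\le\dist(f,g)$. The second inequality is immediate: $g$ is a monotone LTF (its weights $w_i$ for $i\in P$ are all nonnegative and the variables in $N$ simply do not appear), so $\dist(f,\textsc{Mono})\le \dist(f,g)$ by definition of $\dist(f,\textsc{Mono})$ as the minimum over all monotone functions. The content is therefore in showing $\dist(f,g)\le\dist(h,f)$ for \emph{every} monotone $h$; taking $h$ to be the closest monotone function gives the claim.

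\textbf{Main step.} Fix an arbitrary monotone $h\colon\{-1,1\}^n\to\{-1,1\}$. The idea is to exhibit, for each $x$ with $g(x)\ne f(x)$, a ``witness'' point $y$ with $h(y)\ne f(y)$, in an injective fashion, so that $\Pr_{\bx}[g(\bx)\ne f(\bx)]\le \Pr_{\by}[h(\by)\ne f(\by)]$. Partition the coordinates as $P\cup N$ and write $x=(x_P,x_N)$. The key observation is that $g$ ignores $x_N$, and that flipping a coordinate $j\in N$ from $-1$ to $+1$ can only \emph{decrease} the argument $\sum_i w_ix_i-\theta$ of $f$ (since $w_j<0$), hence can only flip $f$ from $+1$ to $-1$. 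So consider the map that sends $x$ to the point $y$ obtained by setting all $N$-coordinates of $x$ to their ``$f$-maximizing'' value, namely $y_P=x_P$ and $y_j=-1$ for all $j\in N$ (so that the negative-weight terms contribute their most positive amount). On the slice $\{x: y_j=-1\ \forall j\in N\}$ we have $f(y)=g(y)=g(x)$ exactly, because with $x_N$ all set to $-1$ the linear form of $f$ equals $\sum_{i\in P}w_ix_i+\sum_{j\in N}(-w_j)-\theta$; hmm, this is not literally $g$. Let me instead set the $N$-coordinates to the value making $\sum_{j\in N}w_jy_j$ as small as possible in magnitude — actually the clean approach is: for each $x_P\in\{-1,1\}^P$ consider the whole sub-cube $C_{x_P}=\{x_P\}\times\{-1,1\}^N$. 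On $C_{x_P}$, $g$ is constant (equal to $g(x_P)$), while $f$ restricted to $C_{x_P}$ is itself a monotone-or-antimonotone LTF in the $N$-variables — in fact it is \emph{monotone decreasing} in each $x_j$, $j\in N$. The number of points of $C_{x_P}$ where $f\ne g(x_P)$ is therefore at most the number of points where $f$ disagrees with \emph{any} monotone function restricted to $C_{x_P}$; but we need to compare against $h$ restricted to $C_{x_P}$, which is monotone \emph{increasing}. Here one uses: if $f|_{C}$ is monotone decreasing and takes value $g(x_P)$ on a set $B\subseteq C$, and $h|_C$ is monotone increasing, then $|\{z\in C: f(z)\ne h(z)\}|\ge |B|$ — this is a one-dimensional chain-counting / antichain argument using that a decreasing and an increasing Boolean function on a cube must disagree somewhere along every maximal chain whenever the decreasing one is nonconstant in the right direction.

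\textbf{Cleaner version of the main step.} Rather than fight with the slice geometry, I would argue as follows. For $x\in\{-1,1\}^n$ let $\sigma(x)$ be the point with $\sigma(x)_i=x_i$ for $i\in P$ and $\sigma(x)_j=-1$ for $j\in N$. Then (i) $f(\sigma(x))=g(\sigma(x))$ for all $x$ — indeed, setting $x_N\equiv -1$ turns the linear form of $f$ into $\sum_{i\in P}w_ix_i+\sum_{j\in N}|w_j|-\theta$, which is $\ge \sum_{i\in P}w_ix_i-\theta$, and one checks $g(\sigma(x))=f(\sigma(x))$ because whenever $g(\sigma(x))=-1$ we need $\sum_{i\in P}w_ix_i<\theta$, in which case the larger quantity... — so this particular $\sigma$ does not literally work either; the right fixed choice for $j\in N$ is whatever sign makes $w_jy_j$ closest to $0$ from above, but since $w_j<0$ that is $y_j=-1$ only if... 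Let me not nail the direction here. The correct statement, which the authors surely prove, is that there is a fixed assignment $a\in\{-1,1\}^N$ such that $f(x_P,a)=g(x_P)$ for all $x_P$, \emph{and} the coordinate-projection $x\mapsto(x_P,a)$ maps disagreements of $(f,g)$ into disagreements of $(f,h)$ injectively on each fiber, using monotonicity of $h$ and the fact that $f$ is monotone-decreasing in the $N$-coordinates. Summing over $x_P$ gives $\dist(f,g)\le\dist(f,h)$.

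\textbf{Expected obstacle.} The routine parts — that $g$ is monotone, that $f$ is non-increasing in each $j\in N$, and the reduction ``closest monotone $\le$ arbitrary monotone'' — are painless. The one step needing genuine care is the combinatorial heart: showing that on each sub-cube $C_{x_P}$ (or via the fiber structure of the projection), the number of points where $f$ disagrees with the constant $g(x_P)$ is at most the number where $f$ disagrees with $h|_{C_{x_P}}$, for \emph{every} monotone-increasing $h$. This is where the monotone-decreasing structure of $f$ in the $N$-variables and the monotone-increasing structure of $h$ must be played against each other — e.g. via a maximal-chain argument or a direct injection matching each $f/g$-disagreement to an $f/h$-disagreement. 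I expect the proof to formalize exactly this, and it is the only place where one could get the direction of an inequality backwards, so I would set it up with explicit sign bookkeeping on $\sum_{j\in N}w_jx_j$ rather than by appeal to ``monotonicity'' in the abstract.
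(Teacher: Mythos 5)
Your scaffolding is sound: the inequality $\dist(f,\textsc{Mono})\le\dist(f,g)$ is indeed immediate from monotonicity of $g$, and the reduction of $\dist(f,g)\le\dist(f,h)$ to a per-slice comparison on each sub-cube $C_{x_P}=\{x_P\}\times\{-1,1\}^N$ is the right move. In the notation $p_{x_P}=\Pr_{\by}[f(x_P,\by)=1]$, the quantity you need to show per slice is
\begin{equation*}
\Pr_{\by}\big[f(x_P,\by)\ne h(x_P,\by)\big]\;\ge\;\min\{p_{x_P},\,1-p_{x_P}\},
\end{equation*}
and summing this over $x_P$ finishes. Your detour in the ``cleaner version'' paragraph, trying to find a single fixed assignment $a\in\{-1,1\}^N$ with $f(x_P,a)=g(x_P)$ for all $x_P$, is a dead end — no such $a$ exists in general, since $g$ drops the $N$-coordinates entirely rather than fixing them — and you correctly back away from it. But the proposal then stops at exactly the load-bearing step, which you yourself flag as ``the one step needing genuine care.'' That step is not filled in, and the sketch you offer (a maximal-chain argument) does not close it directly: the per-chain inequality is fine, but when you sum over maximal chains the number of chains through a point depends on its Hamming level, so chain-counting gives a weighted tally rather than a cardinality bound; and if you instead use a symmetric chain decomposition, the aggregation fails in the other direction because $\min(\sum a_c,\sum b_c)\ge\sum\min(a_c,b_c)$, which is the wrong inequality to propagate.

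The clean way to close the gap is a correlation inequality. Writing $S_1=\{y:f(x_P,y)=1\}$ (a down-set in $\{-1,1\}^N$, since $f$ is non-increasing in the $N$-coordinates) and $D=\{y:h(x_P,y)=-1\}$ (also a down-set, since $h$ is non-decreasing), the disagreement set is $(S_1\cap D)\cup(S_1^c\cap D^c)$. Taking WLOG $|S_1|\le 2^{|N|-1}$, the per-slice claim reduces to $|S_1^c\cap D^c|\ge|S_1\cap D^c|$, i.e.\ $\Pr[S_1^c\mid D^c]\ge 1/2$, which is precisely the Harris--Kleitman inequality applied to the two up-sets $S_1^c$ and $D^c$ (giving $\Pr[S_1^c\mid D^c]\ge\Pr[S_1^c]\ge 1/2$). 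With that ingredient your route is correct and genuinely self-contained. By contrast, the paper computes $\dist(f,g)=\E_{\bx'}[\min\{p_{\bx'},1-p_{\bx'}\}]$ directly (using symmetry of the $\by$-distribution to show $g(x')=1$ iff $p_{x'}\ge 1/2$) and then invokes, as a black box, the identity $\dist(f,\textsc{Mono})=\E_{\bx'}[\min\{p_{\bx'},1-p_{\bx'}\}]$ for unate functions, citing Lemma~3.11 of \cite{KMS15}. So the two arguments bottom out at the same nontrivial fact; the paper outsources it, while your route would make it explicit — but as written, you have only named the obstacle without surmounting it.
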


\red{\begin{proof}
We view an assignment $x \in \{-1, 1\}^n$ as the concatenation of $x' \in \{-1, 1\}^P$ and $y \in \{-1, 1\}^N$, and we write $f(x', y)$ for $f(x)$. We denote
\[ p_{x'} = \Ex_{y} \left[ \frac{1}{2} + \frac{f(x', y)}{2} \right] \]
as the fraction of $1$ inputs. 
It is clear that $g$ depends only on $x'$ (so we may write $g(x', y)$ simply as $g(x')$) and that $g(x')$ is monotone. Additionally, (by symmetry) we have that 
\[ g(x') = 1 \iff p_{x'} \geq \frac{1}{2}. \]
Thus, $\dist(f, g) = \E_{x'}[ \min \{ p_{x'}, 1 - p_{x'} \} ]$. Implicitly in Lemma 3.11 in \cite{KMS15}, it is shown that when $f$ is unate (which is the case for LTFs), $\dist(f, \textsc{Mono}) = \E_{x'} [\min\{ p_{x'}, 1 - p_{x'}\}]$ which finishes the proof.
\ignore{The proof closely follows that of Lemma~3.11 in \cite{KMS15}.  For convenience we view both $f$ and $g$ as 0/1-valued, i.e., the output $-1$ is replaced by 0.
We also view an assignment $x\in \{-1,1\}^n$ as the concatenation of
  $x'\in \{-1,1\}^P$ and $y\in \{-1,1\}^N$ and write
  $f(x',y)$ for $f(x)$.

It is clear that $g$ depends only on $x'$ (so we may write $g(x',y)$ simply as $g(x')$) and that~$g(x')$~is monotone.  For each fixed $x' \in \{-1,1\}^P$ we have that $g(x')$ is a fixed value in $\{0,1\}$ so 
\[ p_{x'} \coloneqq \mathop{\Pr}_{\by\in\{-1,1\}^N}\big[f(x',\by) \neq g(x')\big]= \dist\big(f(x',\cdot),g(x')\big),\]
and $\dist(f,g)=\Ex_{\bx'}[p_{\bx'}].$  Now the fact that 
\begin{eqnarray*}
 \mathop{\Pr}_{\by}\left[\sum_{j \in N}w_j\cdot \by_j \geq 0\right],\ \mathop{\Pr}_{\by}\left[\sum_{j \in N}w_j\cdot \by_j \leq 0\right] \ge 1/2
\end{eqnarray*}
(by symmetry) implies that $ p_{x'} \leq 1/2$ for each $x'\in \{-1,1\}^P$, so $1-p_{x'} \geq 1/2$ and we get
\[
2\Ex_{\bx'}\big[\hspace{-0.03cm}\Var[f(\bx',\cdot)]\big]=
\Ex_{\bx'}\big[2p_{\bx'}(1-p_{\bx'})\big] \geq \Ex_{\bx'}\big[p_{\bx'}\big]=\dist(f,g).
\]
Since $f(x',y)$ is an anti-monotone function of $y$  for each fixed $x'$, Lemma~3.12 of \cite{KMS15}
implies that 
$\Ex_{\bx'}[\Var[f(\bx',\cdot)]]\leq \dist(f, \textsc{Mono})$, and the proof is concluded.}
\end{proof} 
}

Here is the converse to Lemma \ref{lem:reg-neg-coeff}:
\begin{lemma}\label{lem:lem3.1-conv}
Let $f(x) = \sign(\sum_{i \in [n]} w_i\cdot x_i - \theta)$ be $(\tau,\gamma,\lambda)$-non-monotone with $\tau\le \sqrt{\lambda}/16$. 
Then
\[
\dist(f,\textsc{Mono})\ge \min\left\{  \Omega\big({  {\sqrt{\lambda}\hspace{0.03cm}\gamma^2} }\big) - O(\tau),\hspace{0.08cm} \Omega\left(\frac {\gamma^3}{ \ln(8/\gamma)}\right)- O(\tau\gamma)
\right\}.
\]
\end{lemma}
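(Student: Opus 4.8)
The plan is to establish a converse to Lemma~\ref{lem:reg-neg-coeff}: starting from a $(\tau,\gamma,\lambda)$-non-monotone LTF $f$, I want to show it is far from monotone, and by Lemma~\ref{lem:closest-mon-ltf} it suffices to lower bound $\dist(f,g)$ where $g(x)=\sign(\sum_{i\in P}w_i x_i - \theta)$ is the LTF obtained by deleting the negative weights. As in the proof of Lemma~\ref{lem:reg-neg-coeff}, I normalize so that $\pos(f)+\negg(f)=1$, so that $\negg(f)\ge\lambda$ by condition $(c)$. The key quantity to control is $\Pr_{\bx}[f(\bx)\ne g(\bx)]$, which (splitting $\bx$ into its $P$-coordinates $\bx'$ and $N$-coordinates $\by$, and using $g=g(\bx')$) equals $\E_{\bx'}[\min\{p_{\bx'},1-p_{\bx'}\}]$ in the notation of Lemma~\ref{lem:closest-mon-ltf}. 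I expect to split into two cases according to whether the negative-weight mass $\negg(f)$ is ``large'' (say $\ge$ some constant) or ``small'' — these will correspond to the two branches of the $\min$ in the statement.

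In the \emph{large-$\negg(f)$ case}, intuitively the negative linear form $\sum_{j\in N}w_j\by_j$ has constant standard deviation, so with constant probability over $\bx'$ the shifted threshold $\theta - \sum_{i\in P}w_i\bx'_i$ lands in a window where $\sum_{j\in N}w_j\by_j$ puts both a $\Omega(1)$ fraction of mass above and below it — i.e.\ $p_{\bx'}$ and $1-p_{\bx'}$ are both bounded away from $0$. To make this quantitative I would use the $\tau$-weight-regularity of $f$ (via Berry--Ess\'een, Theorem~\ref{BerryEsseen}) to approximate both $\sum_{i\in P}w_i\bx'_i$ and $\sum_{j\in N}w_j\by_j$ by Gaussians with error $O(\tau/\sqrt{\negg(f)}) = O(\tau/\sqrt{\lambda})$; the hypothesis $\tau\le\sqrt\lambda/16$ keeps this error small. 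The $\gamma$-balancedness (condition $(b)$) is what guarantees $\theta$ itself is not too extreme: it forces $\theta/\sigma$ to lie in a bounded range so that the relevant Gaussian window has constant probability mass. Carrying this through should give $\dist(f,g) = \Omega(\sqrt{\lambda}\,\gamma^2) - O(\tau)$, the first term in the $\min$; the powers of $\gamma$ come from needing the threshold to land favorably for the $P$-side (one factor) and the window on the $N$-side to be wide enough (another factor), roughly.

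In the \emph{small-$\negg(f)$ case}, the cleaner route is to observe that $f$ is still $\gamma$-balanced and $\tau$-weight-regular, and apply the distance bound of Lemma~\ref{lem:reg-neg-coeff} \emph{in reverse}: actually that lemma goes the wrong way, so instead I would argue directly. Here $\pos(f)\ge 1-\negg(f)$ is close to $1$, so $g$ is essentially $f$ with a tiny perturbation; the point is that the $\gamma$-balance of $f$ together with Berry--Ess\'een forces $g$ to be $\Omega(\gamma)$-balanced as well, and then the anti-monotone structure on the $N$-coordinates contributes: for a $\gamma$-balanced LTF on the $P$-side, a constant fraction of $\bx'$ has $p_{\bx'}$ in a middle range, and the variance $\E_{\bx'}[\operatorname{Var}_{\by}[f(\bx',\cdot)]]$ is $\Omega(\negg(f))$-ish scaled appropriately — I need to be careful here since $\negg(f)$ is small, so I instead lower bound $\min\{p_{\bx'},1-p_{\bx'}\}$ using that each individual $p_{\bx'}$ differs from a Gaussian tail by $O(\tau/\sqrt{\negg(f)})$ and the negative form has some spread. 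This should yield the second branch $\Omega(\gamma^3/\ln(8/\gamma)) - O(\tau\gamma)$, where the $\ln(8/\gamma)$ arises exactly as in Lemma~\ref{lem:reg-neg-coeff} from converting a Hoeffding/Gaussian tail bound on $\sum_{j\in N}w_j\by_j$ into a lower bound on $\negg(f)$ when the balance is only $\gamma$.

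The main obstacle, I expect, is the small-$\negg(f)$ case: when the negative weights are a vanishing fraction, $g$ and $f$ agree on most inputs, and one has to extract a $\mathrm{poly}(\gamma)$ lower bound on $\dist(f,g)$ from the interplay between (i) $f$'s balance being exactly on the boundary $1-\gamma$ and (ii) the negative weights being spread out (regularity). Getting the dependence on $\gamma$ right — and seeing why a $\ln(8/\gamma)$ rather than a polynomial factor suffices — requires mimicking the tail-bound manipulation in Lemma~\ref{lem:reg-neg-coeff} carefully in this regime, and making sure the $O(\tau)$ and $O(\tau\gamma)$ error terms from Berry--Ess\'een are genuinely subtracted off rather than swallowing the main term (which is why the hypothesis is $\tau\le\sqrt\lambda/16$ and not merely $\tau\le\lambda$). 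The large-$\negg(f)$ case, by contrast, is essentially a direct two-sided Gaussian-window computation and I expect it to be routine once the normalization and Berry--Ess\'een setup are in place.
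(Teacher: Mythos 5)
Your high-level framework is the right one and matches the paper: reduce to $\dist(f,g)$ via Lemma~\ref{lem:closest-mon-ltf}, normalize $\pos(f)+\negg(f)=1$, case-split on the size of $\negg(f)$, and combine Berry--Ess\'een with Gaussian/Hoeffding tail estimates. However, you have the case-to-bound correspondence reversed, and this propagates into the per-case reasoning in a way that would not yield the stated bounds.

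The paper's small-$\negg(f)$ case (there: $\pos(f)\geq 2/3$, so $\lambda\leq\negg(f)\leq 1/3$) is what produces the $\Omega(\sqrt{\lambda}\,\gamma^2)-O(\tau)$ branch. The mechanism is: fix a window $[\theta,\theta+\sqrt{\negg(f)})$ for the $P$-side sum; its width is exactly the std dev of the $N$-side, so conditional on $\bx'$ landing there, the $N$-side has constant probability of being more negative than $-\sqrt{\negg(f)}$ and thereby flipping the sign. The $\sqrt{\lambda}$ enters because the window width $\sqrt{\negg(f)}$ is only guaranteed to be $\geq\sqrt{\lambda}$; the $\gamma^2$ enters from the Gaussian density on the $P$-side at height $\theta$, using $\theta\leq\sqrt{2\ln(2/\gamma)}$ (Hoeffding plus $\gamma$-balance). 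Your proposal instead assigns this bound to the large-$\negg(f)$ case, reasoning that ``with constant probability over $\bx'$ the shifted threshold lands in a window where $\sum_{j\in N}w_j\by_j$ puts $\Omega(1)$ mass on both sides'' --- but that reasoning, pushed through, would produce a constant (or $\Omega(\gamma)$) bound with no $\sqrt{\lambda}$ factor at all, because when $\negg(f)=\Omega(1)$ the $N$-side std dev is $\Omega(1)$ and $\lambda$ is no longer the binding constraint.

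Conversely, the large-$\negg(f)$ case ($\pos(f)<2/3$, hence $\negg(f)\geq 1/3$) is where the $\Omega(\gamma^3/\ln(8/\gamma))-O(\tau\gamma)$ branch arises, and for a different reason than you sketch: one first reduces to the subcase $|\E[g]|\leq 1-\gamma/2$, observes a $\gamma/8$-fraction of $\bx'$ satisfy $\theta\leq\sum_{i\in P}w_i\bx'_i\leq\theta+\sqrt{2\ln(8/\gamma)\cdot\pos(f)}$ (one factor of $\gamma$, plus Hoeffding to cap the $P$-side overshoot), and then the $N$-side must overshoot $-\sqrt{2\ln(8/\gamma)\cdot\pos(f)}$, which a Gaussian tail lower bound makes $\Omega(\gamma^2/\ln(8/\gamma))-O(\tau)$ since $\negg(f)>\pos(f)/2$. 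Your proposed route in the small-$\negg(f)$ case --- bounding $\E_{\bx'}[\Var_{\by}[f(\bx',\cdot)]]$ as $\Omega(\negg(f))$ --- would produce a bound scaling as $\Omega(\lambda)$ rather than $\Omega(\gamma^3/\ln(8/\gamma))$, which is not the second branch of the $\min$. In short: the $\sqrt{\lambda}$ factor is a small-$\negg(f)$ phenomenon (narrow window), and the $\gamma^3/\ln(8/\gamma)$ factor is a large-$\negg(f)$ phenomenon (tail-vs-tail estimate with $\lambda$ playing no role). Swapping them back and rederiving each bound with the mechanism that actually produces it would repair the proof.
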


\begin{proof}  We may assume without loss of generality that
  $\pos(f)+\negg(f)=1$ (by definition, we have $\pos(f)\le 1-\lambda$ and 
  $\negg(f)\ge \lambda$)~and $\theta \geq 0$.
Using Lemma~\ref{lem:closest-mon-ltf}, it suffices to lower bound $\dist(f,g)$ where $g(x) = \sign( \sum_{i \in P} w_ix_i - \theta)$.
Similar to the proof of Lemma \ref{lem:closest-mon-ltf}, we 
  view $x\in \{-1,1\}^n$ as the concatenation of $x'\in \{-1,1\}^P$
  and $y\in \{-1,1\}^N$.

The proof has two cases depending on whether or not $\pos(f) \geq 2/3$:\vspace{0.05cm}
\medskip

\noindent {\bf Case 1:} $\pos(f) \geq 2/3.$  We begin by observing that
\begin{equation} 
\label{eq:aaa}
\mathop{\Pr}_{\by}\left[\sum_{j \in N} w_j\cdot \by_j < -  \sqrt{\negg(f)} \right] > \frac{1}{8} - \frac{\tau}{\sqrt{\negg(f)}}\ge 
\frac{1}{8} - \frac{\tau}{\sqrt{\lambda}}\ge \frac{1}{16},
\end{equation}
using the Berry--Ess\'een theorem as well as the fact that a Gaussian distribution has at least $1/4$ of its mass at least one standard deviation away from its mean.
Next we establish the following:

\begin{claim} \label{claim:pos1}
If $\pos(f) \geq 2/3$ then $\Pr_{\bx'}\big[ 0 \leq \sum_{i \in P} w_i\cdot
  \bx'_i - \theta < \sqrt{\negg(f)} \big] \geq \Omega({\sqrt{\lambda} }  \gamma^2) - 3\tau.$  \end{claim}
\begin{proof}
This holds because (using Berry--Ess\'een) we have
\begin{align*}
\mathop{\Pr}_{\bx'}\left[0 \leq \sum_{i \in P} w_i\cdot\bx'_i - \theta < \sqrt{\negg(f)} \right] &\geq \mathop{\Pr}_{\bz \sim \calN(0, \pos(f))}\left[ \bz\in \big[\theta, \theta + \sqrt{\negg(f)}\big]\right] - {\frac {2\tau}{\sqrt{\pos(f)}}} \\
                   &\geq \sqrt{\negg(f)} \left(\frac{1}{\sqrt{2\pi \cdot \pos(f)}}\cdot \exp\left({-\frac{(\theta+\sqrt{\negg(f)})^2}{2\hspace{0.03cm}\pos(f)}}\right) \right) - 
                 3\tau \\
                   &\ge \sqrt{\frac{\lambda}{{2\pi}}}\cdot
                   \exp\left(- \frac{(\theta+\sqrt{1/3})^2}{4/3} \right)-3\tau =\Omega\big(\sqrt{\lambda}\gamma^2\big)-3\tau.
\end{align*}
where we used $\pos(f)\in [2/3,1]$, $\negg(f)\le 1/3$, $\theta \leq \sqrt{2\ln (2/\gamma)}$ (which we prove immediately) and   
$$
\exp\left(- \frac{(\theta+\sqrt{1/3})^2}{4/3} \right)=\Omega( \gamma^2).
$$
This can be shown by using $\theta \leq \sqrt{2\ln (2/\gamma)}$ 
  and considering the two cases of $\theta=O(1)$ and 
  $\theta=\omega(1)$.
The case when $\theta=O(1)$ is trivial since $\gamma\le 1$, and when $\theta=\omega(1)$, we have $ 3(\theta+\sqrt{1/3})^2/4< \theta^2$.
  
Finally, the upper bound of $\sqrt{2\ln (2/\gamma)}$
   on $\theta$ follows directly from 
\[
{\frac \gamma 2} \leq 
\mathop{\Pr}_{\bx}\left[ \sum_{i\in [n]} w_j\cdot  \bx_j  > \theta \right] \leq e^{-\theta^2/2},\]
where we have used Hoeffding's inequality and the fact that $\pos(f)+\negg(f)=1.$
\end{proof}

Combining Claim \ref{claim:pos1} and (\ref{eq:aaa}), we get that if $\pos(f) \geq  2/3$ then
\begin{align*}
\mathop{\Pr}_{\bx}[f(\bx) \neq g(\bx)] &\geq \mathop{\Pr}_{\by}\left[\sum_{j \in N} w_j\cdot \by_j <  -\sqrt{\negg(f)}\right] \cdot \mathop{\Pr}_{\bx'}\left[0 \leq \sum_{i \in P} w_i\cdot \bx'_i - \theta < \sqrt{\negg(f)} \right]\label{eq:lem3.1convlast},                   \end{align*}
which is at least 
$\Omega(\sqrt{\lambda}\gamma^2) -  O(\tau)$
as desired.\vspace{0.05cm}

\medskip
\noindent {\bf Case 2:}   $\pos(f) < 2/3$. 
We can assume $g$ satisfies $|\E[g]| \leq 1 - {\gamma}/{2} $ since otherwise $\dist(f,g) \geq \gamma/4$ by virtue of the difference in their expectations.
Because $|\E[g]| \leq 1 - {\gamma}/{2} $, 
  at least a~$ {\gamma}/{4}$ fraction~of $x' \in \{-1, 1\}^P$ satisfy $g(x') = 1$, i.e., $\theta \leq \sum_{i \in P} w_i\cdot x_i'.$
{By Hoeffding's inequality, we have
\begin{equation}\label{hehe1}
\mathop{\Pr}_{\bx'}\left[\sum_{i\in P}w_i\cdot \bx_i'\ge \sqrt{2 \ln (8/\gamma)} \cdot \sqrt{\pos(f)} \right]\le \gamma/8.
\end{equation}
This means that at least a $ \gamma/ 8$ fraction of $x' \in \{-1,1\}^P$ satisfy (since $\theta\ge 0$)
$$\theta \leq \sum_{i \in P} w_i\cdot x'_i \leq \theta + \sqrt{2 \ln (8/\gamma)} \cdot \sqrt{\pos(f)}.$$  Next,  recall that the variance of $\sum_{j \in N} w_j\cdot \by_j$ is $\negg(f)$. So the Gaussian tail lower bound (\ref{gatail}) from Section \ref{sec:useful-structural} together with Berry--Ess\'een as well as $\negg(f) \geq 1/3 > \pos(f)/2$ gives
\begin{align}
\mathop{\Pr}_{\by}\left[\sum_{j \in N} w_j\cdot \by_j < -\sqrt{2\ln(8/\gamma) \cdot \pos(f)}\right] \nonumber&\geq
\mathop{\Pr}_{\by}\left[\sum_{j \in N} w_j\cdot\by_j < -2\sqrt{ \ln(8/\gamma) \cdot \negg(f)}\right] \\[0.2ex]
 &\geq
 \Omega\left(\frac{\gamma^2}{\sqrt{\ln (8/\gamma)}}\right)
  - {\frac \tau {\sqrt{\negg(f)}}}\nonumber \\[0.3ex]
 &\geq 
 \Omega\left(\frac{\gamma^2}{\ln(8/\gamma)}\right)
 - 2 \tau .
\label{hehe2}\
 \end{align}
As a result, $\Pr_{\bx}[f(\bx)\ne g(\bx)]$ is at least 
  the product of (\ref{hehe1}) and (\ref{hehe2}),
  concluding the proof.
}
\end{proof}
  
\subsection{Restrictions of monotonically non-decreasing variables}
\def\brho{\boldsymbol{\rho}}

Our goal in this subsection is to show that for any \red{LTF} $f\colon\{-1,1\}^n\rightarrow
  \{-1,1\}$, a random restriction that fixes variables of $f$ that are 
  monotonically non-decreasing 
  has, in expectation, the same distance~to monotonicity as the original function $f$.  We will use this in the proof of Lemma \ref{lem:regularize-and-balance}.

\begin{lemma}\label{lem:restriction}
Let $f\colon\{-1,1\}^n\rightarrow \{-1,1\}$ be an \red{LTF} 
 and let 
  $S\subseteq [n]$~be~a set of variables of $f$ that are monotonically non-decreasing. 
Then a random restriction $\brho$ that~fixes each variable in $S$ independently and uniformly to a random
element of $\{-1,1\}$ satisfies $$
\mathop{\E}_{\brho}\Big[\hspace{0.02cm}\dist(f_{\brho},\textsc{Mono})\Big] = \dist(f, \textsc{Mono}).
$$
\end{lemma}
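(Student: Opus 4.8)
The plan is to establish the result by linearity of expectation, averaging a \emph{local} version of the distance-to-monotonicity identity over all restrictions of the variables in $S$. The key observation is that, since the variables in $S$ are monotonically non-decreasing for $f$, fixing them to any values yields a function whose distance to monotonicity is computed over the remaining variables, and these local distances ``add up'' correctly to the global distance. Concretely, write $\bar S = [n] \setminus S$ and view an input $x \in \{-1,1\}^n$ as a pair $(x_S, x_{\bar S})$. For each fixing $\alpha \in \{-1,1\}^S$, let $f_\alpha \colon \{-1,1\}^{\bar S} \to \{-1,1\}$ denote $f$ with the $S$-coordinates set to $\alpha$; a random $\brho$ as in the statement is exactly a uniform random choice of $\alpha$. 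So what must be shown is
\[
\mathop{\E}_{\balpha \in \{-1,1\}^S}\big[\dist(f_{\balpha}, \textsc{Mono})\big] = \dist(f, \textsc{Mono}).
\]

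First I would prove the ``$\le$'' direction, which is the easy one: given any monotone $g \colon \{-1,1\}^n \to \{-1,1\}$ achieving $\dist(f,g) = \dist(f,\textsc{Mono})$, each restriction $g_\alpha$ is monotone over $\{-1,1\}^{\bar S}$, and $\dist(f,g) = \E_{\balpha}[\dist(f_{\balpha}, g_{\balpha})] \ge \E_{\balpha}[\dist(f_{\balpha},\textsc{Mono})]$. This direction does not even use the hypothesis on $S$. The ``$\ge$'' direction is where the structure of $S$ (and of $f$ being an LTF, hence unate) is essential. The natural approach is to build a single monotone function $g^\star$ on $\{-1,1\}^n$ out of the optimal monotone approximators $g^{(\alpha)}$ of the individual $f_\alpha$'s, by setting $g^\star(x_S, x_{\bar S}) = g^{(x_S)}(x_{\bar S})$, and to argue that $g^\star$ is monotone. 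Monotonicity in the $\bar S$-directions is immediate since each $g^{(\alpha)}$ is monotone. The content is monotonicity in the $S$-directions: if $\alpha \preceq \alpha'$ differ in one $S$-coordinate, we need $g^{(\alpha)}(y) \le g^{(\alpha')}(y)$ for all $y$. This is not automatic for arbitrary optimal approximators, so the real work is to \emph{choose} the $g^{(\alpha)}$ coherently.

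The cleanest way to get this coherence — and the step I expect to be the main obstacle — is to use the explicit form of the optimal monotone approximator. Following exactly the idea already used in the proof of Lemma \ref{lem:closest-mon-ltf} (and Lemma 3.11 of \cite{KMS15}): since $f$ is unate, for each fixing $\alpha$ and each $z \in \{-1,1\}^{\bar S \setminus N'}$ of the non-decreasing $\bar S$-variables, the optimal monotone approximator is obtained by ``rounding'' according to the fraction of $1$-inputs, and one can write $\dist(f_\alpha, \textsc{Mono})$ as an average over such $z$ of $\min\{p, 1-p\}$ where $p$ is the corresponding conditional probability of $f_\alpha = 1$. Pushing this one level further, $\dist(f,\textsc{Mono})$ itself equals $\E[\min\{p_w, 1-p_w\}]$ where $w$ ranges over all non-decreasing coordinates of $f$ (the union of $S$ with the non-decreasing $\bar S$-coordinates) and $p_w$ is the conditional probability of $f=1$ given the non-decreasing coordinates set to $w$. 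The point is that $S$ consists of non-decreasing coordinates, so conditioning on $\balpha$ and then on the remaining non-decreasing coordinates is the \emph{same} conditioning as conditioning on all non-decreasing coordinates at once; hence the tower property of conditional expectation gives $\E_{\balpha}[\E_{\text{rest}}[\min\{p,1-p\}]] = \E_{\text{all}}[\min\{p,1-p\}] = \dist(f,\textsc{Mono})$, and simultaneously the inner expectation is exactly $\dist(f_{\balpha},\textsc{Mono})$ by the Lemma \ref{lem:closest-mon-ltf}-style characterization applied to $f_{\balpha}$. So the lemma follows by combining this identity with the ``$\le$'' direction (or simply by reading the identity directly, with the ``$\le$'' direction serving as a sanity check). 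The one subtlety to be careful about is that the characterization ``$\dist(f,\textsc{Mono}) = \E_w[\min\{p_w, 1-p_w\}]$ for unate $f$, with $w$ ranging over non-decreasing coordinates'' must be stated and invoked at the right level of generality — it is implicit in \cite{KMS15} and used in the proof of Lemma \ref{lem:closest-mon-ltf}, so I would cite it in that form and apply it twice: once to $f$ over all its non-decreasing coordinates, and once to each $f_{\balpha}$ over its non-decreasing $\bar S$-coordinates.
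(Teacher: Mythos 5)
Your proof is correct and follows essentially the same route as the paper's: the paper's one-line argument applies Lemma~\ref{lem:closest-mon-ltf} to write $\dist(f_{\brho},\textsc{Mono}) = \dist(f_{\brho},g_{\brho})$ (where $g$ is the positive-weight LTF and $g_{\brho}$ is exactly the Lemma~\ref{lem:closest-mon-ltf} approximator of $f_{\brho}$, since $S\subseteq P$), then averages over $\brho$, which is precisely the tower-property computation you describe via the $\E[\min\{p,1-p\}]$ characterization. The $g^\star$-stitching digression in your write-up is unnecessary once you settle on the formula-based argument, but your final argument is sound.
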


\red{
\begin{proof}
We let $g \colon \{-1, 1\}^n \to \{-1, 1\}$ be the LTF from Lemma~\ref{lem:closest-mon-ltf}. 
\begin{align*}
\Ex_{\brho}\Big[\dist(f_{\brho},\textsc{Mono})\Big] = \Ex_{\brho}\left[ \dist(f_{\brho}, g_{\brho})\right] 
									   = \dist(f, g) 
									   = \dist(f, \textsc{Mono}).
\end{align*}
\end{proof}
}

\ignore{
\begin{proof}
Without loss of generality, assume that $1 \in S$. Let $f_1\colon \{-1, 1\}^{n-1} \to \{-1, 1\}$ given by $f_1 =$ $ f(1, \cdot)$, and $f_{-1}\colon\{-1, 1\}^{n-1} \to \{-1, 1\}$ given by $f_{-1} = f(-1, \cdot)$. We show that
\begin{equation}\label{hehe3}\dfrac{\dist(f_{1}, \textsc{Mono}) + \dist(f_{-1}, \textsc{Mono})}{2} = \dist(f, \textsc{Mono}).\end{equation}
The claim then follows by an induction on the size of $S$, noting that a variable $j \neq 1$ is monotonically non-decreasing in $f$ if and only if it is monotonically {non}-decreasing in both $f_{1}$ and $f_{-1}.$
Note that the LHS of (\ref{hehe3}) is trivially bounded from above by its RHS. 
Hence it suffices to show that the LHS is at least as large as the RHS.

For this purpose, recall the following equivalent definition of
  distance to monotonicity (Lemma 4 of \cite{FLN+02}). A function 
$f\colon \{-1, 1\}^n \to \{-1, 1\}$ has $\dist(f,\textsc{Mono})=\eps$ iff
  the largest set $M$~of disjoint violating pairs of points $(x, x')$,
  where $x \preceq x'$, $f(x) = 1$, and $f(x') = -1$, has size $\eps\hspace{0.02cm}2^n$.~Below for convenience  we abuse notation a bit and use $M$ as mapping the elements: when~$(x, x') \in M$,~we let $M(x)$ denote $x'$ and $M(x')$ denote $x$.

Given a set $M$ of $\eps\hspace{0.02cm}2^n$ many disjoint violating pairs, we show that there exists another set $M^\star$ of disjoint violating pairs with $|M^\star|=|M|$, in which every pair $(x, x')\in M^\star$ has $x_1 = x_1'$. This means we can split $M^\star$ into violating pairs from $f_1$ and from $f_{-1}$, giving us our desired claim (\ref{hehe3}).

We construct $M^\star$ from $M$ in a series of steps.
Given a set $M$ of disjoint violating pairs which has some violating pair $(x,x')$ where $x_1 = -1$ and $x_1' = 1$, we show below that there is a set $M'$ of disjoint violating pairs with $|M'|=|M|$ but with one less pair that disagrees in the first coordinate (in particular $M'$ will not contain $(x,x')$). Note that since $1 \in S$, if $(x, x') \in M$ and $x_1 \neq x_1'$, then $x \preceq x'$, and therefore $x_1 = -1$ and $x_1' =1$.

Suppose that $(x, x') \in M$ with $x_1 \neq x_1'$ at the beginning of a step.
Let us define the operator $\pi(x_1,\dots,x_n) = (-x_1, x_2, \ldots, x_n)$.
Consider the following sequence of points $(a^{(i)})$, starting 
  with $a^{(0)}=x$ and defined as follows:
\begin{itemize}
\item If $i$ is even and $a^{(i)}$ is paired to something in $M$, $a^{(i+1)} = M(a^{(i)})$.\vspace{-0.1cm} 
\item If $i$ is even and $a^{(i)}$ is not paired with anything in $M$, stop. 
  \vspace{-0.1cm}
\item If $i$ is odd, then $a^{(i+1)} = \pi(a^{(i)})$.
\end{itemize}
For now it is not even  clear whether the sequence $(a^{(i)})$ is finite or not
  (in Claim \ref{hehe5} below we prove that it is finite by showing that no point appears twice in $(a^{(i)})$).
The next two claims follow from an induction, the second one using the fact that $f$ is monotone non-decreasing in coordinate 1 (see Figure~1 for an illustration):

\begin{claim}\label{haha1}
For each pair $(a^{(i)},a^{(i+1)})$ in the sequence, we have
  $(a^{(i)},a^{(i+1)})\in M$ when $i$ is even, and $a^{(i+1)}=\pi(a^{(i)})$
  when $i$ is odd.
\end{claim}

\begin{claim}\label{haha2}
We have $f(a^{(0)})=1$ and $a^{(0)}_1=-1$, and for each $i>0$:
$$
\Big(f(a^{(i)}),a^{(i)}_1\Big)=\begin{cases}
(-1,1) & \text{when $i\equiv 1 {\pmod 4}$};\\
(-1,-1) & \text{when $i\equiv 2 {\pmod 4}$};\\
(1,-1) & \text{when $i\equiv 3 {\pmod 4}$};\\
(1,1) & \text{when $i\equiv 0 {\pmod 4}$}.
\end{cases}
$$
\end{claim}

We are ready to prove the following claim:

\begin{claim}\label{hehe5}
No point appears more than once in the sequence $(a^{(i)})$.
\end{claim}
\begin{proof}
Assume for contradiction that $a^{(\ell)}$ is the first point
  that appears twice in the sequence.

Consider the case of $\ell\equiv 1 \pmod{4}$. By Claim \ref{haha1},
  we have $a^{(\ell-1)}=M(a^{(\ell)})$ and by Claim~\ref{haha2},
  $f(a^{(\ell)})=-1$ and $\smash{a^{(\ell)}_1=1}$.
But for this point to appear earlier as, say $a^{(k)}$ for some $k<\ell$, in the sequence, we must 
  have $k\equiv 1 \pmod{4}$ and then $a^{(k-1)}=M(a^{(k)})$ by Claim \ref{haha1},
  which contradicts the assumption that $a^{(\ell)}$ is the first repetition. 

The case of $\ell\equiv 3\pmod{4}$ is similar, except that we 
  need to rule out the case of $a^{(\ell)}=a^{(0)}$.
This cannot happen because
  $\smash{(M(a^{(\ell)}))_1= a^{(\ell-1)}_1=-1}$ by Claim \ref{haha1} and 
  \ref{haha2}, while $(M(a^{(0)}))_1=1$.
  
Consider the case of $\ell\equiv 2\pmod{4}$. By Claim \ref{haha1},
  we have $a^{(\ell-1)}=\pi(a^{\ell-1})$ and by Claim~\ref{haha2},
  $\smash{f(a^{(\ell)})=-1}$ and $\smash{a^{(\ell)}_1=-1}$.
But for this point to appear earlier as, say $a^{(k)}$ for some
  $k<\ell$,~in the sequence, we must have
  $k\equiv 2\pmod{4}$ and then $a^{(k-1)}=\pi(a^{(k-1)})$, contradicting 
  the assumption that it was the first point that appears twice.
  
The case of $\ell\equiv 0\pmod{4}$ is similar.
This finishes the proof of the claim.
\end{proof}

Claim \ref{hehe5} implies that the sequence $(a^{(0)},\ldots,a^{(2\ell)})$
  has an odd number of points: $(a^{(0)},a^{(1)}), \ldots $ $
  (a^{(2\ell-2)},a^{(2\ell-1)})$ are $\ell$ distinct pairs in $M$
  and $a^{(2\ell)}$ does not appear in $M$.
Using an induction, we obtain the following claim that helps us
  shift these pairs.

\begin{claim}\label{haha3}
$a^{(0)}\preceq a^{(2)}$.
For each $k \equiv 1 \pmod{4}$ such that $k+3\le 2\ell$,
  we have $a^{(k)}\succeq a^{(k+3)}$.
For each $k\equiv 3 \pmod{4}$ such that $k+3\le 2\ell$,
  we have $a^{(k)} \preceq a^{(k+3)}$.
\end{claim}

We obtain $M'$ by replacing the 
   $\ell$ violating pairs 
  $(a^{(0)},a^{(1)}), \ldots ,
  (a^{(2\ell-2)},a^{(2\ell-1)})$ of $M$ by 
\begin{itemize}
\item $(a^{(0)}, a^{(2)})$;\vspace{-0.1cm}
\item $(a^{(k)}, a^{(k+3)})$ when $k\equiv 1 \pmod{4}$ and $k+3\le 2\ell$.\vspace{-0.1cm}
\item $(a^{(k+3)}, a^{(k)})$ when $k\equiv 3\pmod{4}$ and $k+3\le 2\ell$
\end{itemize} 
It follows from Claim \ref{haha2} and \ref{haha3} that there are $\ell$ such pairs in total and they
  are all violating pairs that have equal first coordinates.
As a result, $M'$ is a set of disjoint violating 
  pairs of the same size as $M$ but has one less pair  that disagrees in the first coordinate.
This finishes the proof.\qedhere
\end{proof}

\begin{figure}
\centering
\begin{picture}(600, 240)
    \put(0,0){\includegraphics[width=\linewidth]{lemma3.pdf}}
    \put(100, 45){$a^{(0)}$}
    \put(375, 150){$a^{(1)}$}
    \put(110, 150){$a^{(2)}$}
    \put(120, 88){$a^{(3)}$}
    \put(390, 90){$a^{(4)}$}
    \put(370, 117){$a^{(5)}$}
    \put(85, 117){$a^{(6)}$}
    \put(80, 23){$a^{(7)}$}
    \put(352, 30){$a^{(8)}$}
    \put(50, 180){$f_{-1}$}
    \put(330, 180){$f_1$}
  \end{picture}\vspace{0.18cm}
  \caption{Example sequence. Filled in circles represent points $x \in \{-1, 1\}^n$ with $f(x) = 1$ and\\ unfilled circles represent points $x \in \{-1, 1\}^n$ with $f(x) = -1$. Arrows represent mappings of $M$\\ and dashed lines represent mappings of $\pi$.}
\end{figure}
}



\section{Algorithmic tools for LTFs}
\label{sec:new-algorithmic} 
Our algorithm uses a few simple subroutines that may be viewed as relatively low-level algorithmic tools for working with LTFs. We present and analyze those   tools in this section.

\subsection{Finding high-influence variables}
\label{sec:find-heavy-weight}

We start with a subroutine that finds high-influence variables
  of a function. 
\begin{figure}[t]\begin{framed}
\noindent Subroutine {\tt Find-Hi-Influence-Vars}$(f,\rho,\tau,\delta)$\vspace{-0.16cm}
\begin{flushleft}
\noindent {\bf Input:} Black-box oracle access to  $f\colon \{-1,1\}^n \to \{-1,1\}$, a restriction $\rho \in \{-1,1,\ast\}^{n}$, and  parameters $\tau,\delta > 0$.

\noindent {\bf Output:}  Set $H \subseteq \stars(\rho).$

\begin{enumerate}

\item \vskip -.02in If $|\stars(\rho)|$ is not a power of 2, augment $\stars(\rho)$ with ``dummy'' variables (that are irrelevant in $f_\rho$) to bring its size to the next power of 2.  Let $X'$ be this set of variables.\vspace{-0.04cm}

\item \vskip -.02in Initialize the collection ${\cal S}$ of sets of variables to be ${\cal S} = \{ X' \}.$\vspace{-0.04cm}

\item \vskip -.02in While ${\cal S}$ contains an element (i.e., a set of variables) which is of size $>1$:\vspace{-0.04cm}
      \begin{enumerate}
        \item  Remove an arbitrary element $X$ from ${\cal S}$ that is of maximum size (any one will do).\vspace{0.04cm}
        \item  Partition $X$ into equal-size subsets $A$ and $B$ (any partition will do).\vspace{0.04cm}
        \item  Let $\delta' = { {\tau^2}\delta/({8 \log n}})$.
        Call {\tt Estimate-Sum-of-Squares}$(f_\rho,A,\tau^2/10,\delta')$   and let\\ $\hat{a}$ be the value it returns, and call {\tt Estimate-Sum-of-Squares}$(f_\rho,B,\tau^2/10,\delta')$\\ and let $\smash{\hat{b}}$ be the value it returns.
        (If the total number of calls made to {\tt Estimate-Sum-of-Squares} ever exceeds ${ {8 \log n}/{\tau^2}}$, halt and output
        ``fail.'')\vspace{0.06cm}
         \item  If $\hat{a} > 3\tau^2/4$ then add $A$ to ${\cal S}$.  Similarly if $\hat{b}>3\tau^2/4$ then add $B$ to ${\cal S}$.
      \end{enumerate} 

\item \vskip -.02in Return the set $H$ that consists of all elements $i$ such that the set $\{i\}$ is an element of ${\cal S}.$ 
\end{enumerate}
\end{flushleft}\vskip -0.1in
\end{framed}\vspace{-0.06in}
\caption{Subroutine {\tt Find-Hi-Influence-Vars}.}
\end{figure}

\begin{lemma} \label{lem:find-hi-influence-vars}
Suppose that the subroutine {\tt Find-Hi-Influence-Vars}$(f,\rho,\tau,\delta)$ is called on a function $f\colon \{-1,1\}^n \to \{-1,1\}$, a restriction $\rho \in \{-1,1,\ast\}^n$, and parameters $\tau,\delta>0$. Then it runs
in $\tilde{O}( { {\log n \cdot \log(1/\delta)}/{\tau^{10}}})\cdot n$ time, makes at most $\tilde{O}({ {\log n \cdot \log(1/\delta)}/{\tau^{10}}})$ queries, 
and with probability at least $1-\delta$ it outputs a set $H \subseteq \stars(\rho)$ such that 
\begin{itemize}
\item If $|\widehat{f_\rho}(i)| \geq \tau$ then $i \in H$;\vspace{-0.1cm}
\item If $|\widehat{f_\rho}(i)| < \tau/2$ then $i \notin H$.
\end{itemize}

\end{lemma}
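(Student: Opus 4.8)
The plan is to condition everything on the ``good event'' $E$ that every call the subroutine makes to {\tt Estimate-Sum-of-Squares} returns a value within $\pm\tau^2/10$ of the true sum $\sum_{i\in T}\widehat{f_\rho}(i)^2$ it is estimating. Since the algorithm caps the number of such calls at $8\log n/\tau^2$ with certainty, and each individual call errs with probability at most $\delta' = \tau^2\delta/(8\log n)$ by Lemma~\ref{lem:estdeg1}, a union bound over the at most $8\log n/\tau^2$ possible calls gives $\Pr[E]\ge 1-(8\log n/\tau^2)\cdot\delta' = 1-\delta$. The query and time bounds are then immediate: by Lemma~\ref{lem:estdeg1} each call makes $O(\log(1/\delta')/(\tau^2/10)^4)=\tilde O(\log(1/\delta)/\tau^8)$ queries and runs in $\tilde O(n\log(1/\delta)/\tau^8)$ time (using $\log(1/\delta')=O(\log\log n+\log(1/\tau)+\log(1/\delta))=\tilde O(\log(1/\delta))$), there are at most $8\log n/\tau^2$ of them, and the only other work is the $O(n)$-per-set bookkeeping of maintaining and halving the sets in $\mathcal S$; this totals $\tilde O(\log n\cdot\log(1/\delta)/\tau^{10})$ queries and $\tilde O(\log n\cdot\log(1/\delta)/\tau^{10})\cdot n$ time as claimed.

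It then remains to show that, conditioned on $E$, the subroutine never outputs ``fail'' and its output $H$ obeys the two bullets. For the first point I would use Parseval plus a level-counting argument. Under $E$, any set $A$ added to $\mathcal S$ in step (d) has true weight $\sum_{i\in A}\widehat{f_\rho}(i)^2 > 3\tau^2/4 - \tau^2/10 > \tau^2/2$ (dummy variables contribute $0$, since they are irrelevant in $f_\rho$). Viewing the sets that ever enter $\mathcal S$ as nodes of the recursion tree rooted at $X'$ and grouping them by depth, the sets at any fixed depth are pairwise disjoint subsets of $X'$, so by Parseval ($\sum_i\widehat{f_\rho}(i)^2\le 1$) at most $2/\tau^2$ of them can have weight exceeding $\tau^2/2$; with at most $\log_2|X'|\le\log_2(2n)$ depths, at most $O(\log n/\tau^2)$ sets ever enter $\mathcal S$, hence at most that many sets of size $>1$ are processed, each triggering exactly two calls to {\tt Estimate-Sum-of-Squares}. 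The constant $8$ in the cap $8\log n/\tau^2$ is chosen precisely so that this $O(\log n/\tau^2)$ total stays strictly below the cap (for $n$ beyond a small absolute constant), so the ``fail'' branch is never taken under $E$ and the subroutine terminates with some $H\subseteq X'$; the same weight argument shows no dummy lands in $H$, so $H\subseteq\stars(\rho)$.

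For the two bullets, fix $i\in\stars(\rho)$ (taking $|\stars(\rho)|\ge 2$; otherwise the statement is trivial) and trace the unique path in the recursion from $X'$ down to $\{i\}$. If $|\widehat{f_\rho}(i)|\ge\tau$, then at each set $X$ on this path with $|X|>1$ — which must be removed and halved before the while loop can terminate — the half $A'$ containing $i$ has $\sum_{j\in A'}\widehat{f_\rho}(j)^2\ge\widehat{f_\rho}(i)^2\ge\tau^2$, so under $E$ its estimate is at least $\tau^2-\tau^2/10>3\tau^2/4$ and $A'$ is added to $\mathcal S$; inducting down the path, $\{i\}$ ends up in $\mathcal S$ and, being a singleton, is never removed, so $i\in H$. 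Conversely, if $|\widehat{f_\rho}(i)|<\tau/2$ but $i\in H$, then $\{i\}$ was added in some step (d), so its estimate exceeded $3\tau^2/4$, which under $E$ forces $\widehat{f_\rho}(i)^2>3\tau^2/4-\tau^2/10>\tau^2/4$, contradicting $|\widehat{f_\rho}(i)|<\tau/2$. The step I expect to be the main obstacle is the counting argument showing the call budget is never exhausted under $E$: one has to see that ``heavy'' sets at a common depth are disjoint and hence few by Parseval, and then check that the explicit constant $8\log n/\tau^2$ is calibrated to the resulting $O(\log n/\tau^2)$ bound; the rest is a routine combination of Lemma~\ref{lem:estdeg1} with the definitions.
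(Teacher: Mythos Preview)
Your proposal is correct and follows essentially the same approach as the paper: condition on all {\tt Estimate-Sum-of-Squares} calls being $\pm\tau^2/10$-accurate via a union bound, then use Parseval together with disjointness of the sets at each level to show the $8\log n/\tau^2$ call budget is never exhausted, after which the two bullets follow directly from the accuracy guarantee. The only cosmetic differences are that the paper organizes the counting by ``stages'' (using the max-size removal rule so that all live sets share a size) rather than by recursion depth, and it simply declares the two bullets ``clear'' rather than tracing the root-to-$\{i\}$ path as you do.
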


\begin{proof}
The query complexity and running time of the subroutine are immediate given the ``escape condition'' in Step~3(c) and the bounds of Lemma \ref{lem:estdeg1}.  Below we establish the claimed performance guarantee.
First notice that given the ``escape condition'', with probability
  at least $1-\delta$, every call to {\tt Estimate-Sum-of-Squares} returns 
  an estimate that is additively $\pm \tau^2/10$-accurate. 
We assume this is the case and show below that the subroutine
  returns a set $H$ with the claimed property.
  
Next, since the size of the initial set $X'$ is a power of two (say $2^r$), it is clear that every set $X$ that ever belongs to
${\cal S}$ will have size $2^\ell$ for some integer $\ell \leq r.$  It is also clear that the elements of ${\cal S}$ are always disjoint sets.

We may divide the execution of the Step~3 loop into $r$ stages $0,\dots,r-1$ where in the $t$-th stage each execution of Step~3(a) selects a set $X$ of size $2^{r-t}.$  Right before the start of the $t$-th stage every element of ${\cal S}$ is a set of size $2^{r-t}$, and after the $t$-th stage every element is of size $2^{r-t-1}.$ 

Consider the state of the collection ${\cal S}$ right before the beginning of stage $t$. Because all calls to {\tt Estimate-Sum-of-Squares} in stage $t-1$ returned estimates that are additively $\pm \tau^2/10$-accurate,  every $X \in {\cal S}$ right before the beginning
of stage $t$ has $\sum_{i \in X} \widehat{f}(i)^2 \geq 0.65\tau^2.$~By Parseval's identity we have that $\smash{\sum_{i \in \stars(\rho)}\widehat{f}(i)^2 \leq 1}$  so since the elements of ${\cal S}$ are disjoint sets, it follows that before the beginning of Stage $t$ the number of elements of ${\cal S}$ is at~most~$1/(0.65\tau^2) \leq 2/\tau^2,$~and hence there are at most $4/\tau^2$ calls to {\tt Estimate-Sum-of-Squares} made in stage $t$.  Since the number of stages $r$ is at most $1+\log n < 2 \log n$, 
there are at most ${ {8 \log n}/{\tau^2}}$ calls to {\tt Estimate-Sum-of-Squares} made in total (so the ``escape condition'' in Step~3(c) is not triggered).

Finally, given that all calls to {\tt Estimate-Sum-of-Squares} return estimates that are additively $\pm \tau^2/10$-accurate, it is clear that the set $H$ will have the claimed property.
\end{proof}

\subsection{Checking that high-influence variables have positive weight} 
\label{sec:check-weight-positive}

Given an LTF,
the next subroutine checks whether the weight of a variable 
  is positive.

\begin{framed}
\noindent Subroutine 
{\tt Check-Weight-Positive}$(f,\rho,i,\tau,\delta)$
\vspace{-0.16cm}
\begin{flushleft}\noindent {\bf Input:} Black-box oracle access to an LTF $f(x) = \sign( \sum_{i=1}^n w_i\cdot x_i - \theta)$, a restriction $\rho$, 
  an element $i \in \stars(\rho)$, and  parameters $\tau,\delta > 0$.

\noindent {\bf Output:}  Either ``negative,'' ``positive,'' or ``fail.''

\begin{enumerate}

\item \vskip -.02in Draw $O( {\log(1/\delta)}/{\tau})$ uniform random edges from the $2^{| \stars(\rho)| - 1}$
edges in direction\\ $i$ that are consistent with $\rho$.\vspace{-0.06cm}

\item \vskip -.02in If $f$ (equivalently, $f_\rho$) is bi-chromatic and monotone on any of these edges, return ``positive;''
if $f$ is bi-chromatic and anti-monotone on any of these edges, return ``negative;'' if $f$ is not bi-chromatic on any of these
edges, return ``fail.''
\end{enumerate}
\end{flushleft}\vskip -0.1in
\end{framed}

(Note that since $f$ is an LTF, it is a unate function and thus it is impossible for the set of edges drawn in Step~1 to contain both a monotone edge and an anti-monotone edge.)

\begin{lemma} \label{lem:check-weight-positive}
Suppose that {\tt Check-Weight-Positive} is called on an LTF  $f(x)= \sign(\sum_{i=1}^n w_i x_i - \theta),$  a restriction $\rho \in \{-1,1,\ast\}^n$, $i \in  \stars(\rho)$, and $\tau,\delta>0$ such that $\smash{|\hat{f_\rho}(i)| \geq \tau}$  (note that the~latter implies that $w_i \neq 0$). Then it runs
in $O( {\log (1/\delta)}/{\tau}) \cdot n$ time, makes $O( {\log(1/\delta)}/{\tau})$ queries, and
\begin{itemize}
\item If it does not output ``fail'', which happens with probability at most $\delta$, \vspace{-0.12cm}
\item It outputs ``positive'' if $w_i > 0$, and it outputs ``negative'' if $w_i < 0$.   
\end{itemize}
\end{lemma}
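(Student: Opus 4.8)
The plan is to verify the three assertions of the lemma — the time and query bounds, the $\le\delta$ bound on the probability of outputting ``fail,'' and the correctness of any non-``fail'' output — essentially directly from the description of {\tt Check-Weight-Positive}.

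First I would dispose of the running time and query complexity, which are immediate: the subroutine draws $O(\log(1/\delta)/\tau)$ random edges in direction $i$ consistent with $\rho$, and for each edge it queries $f$ at the two endpoints, where preparing and evaluating each oracle input costs $O(n)$ time. This gives the claimed $O(\log(1/\delta)/\tau)$ queries and $O(\log(1/\delta)/\tau)\cdot n$ time.

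Next I would analyze the probability of outputting ``fail.'' The key observation is that a uniformly random edge in direction $i$ consistent with $\rho$ is bi-chromatic for $f$ precisely when it is bi-chromatic for $f_\rho$, and the probability of this event is exactly $\Inf_i(f_\rho)$. Since $f$ is an LTF it is unate, so by the identity recalled in Section~\ref{sec:fourier-regularity} we have $\Inf_i(f_\rho)=|\widehat{f_\rho}(i)|$, which is at least $\tau$ by hypothesis. (In particular this forces $w_i\neq 0$: if $w_i$ were $0$ then $f_\rho$ would not depend on coordinate $i$ and $\Inf_i(f_\rho)$ would be $0$.) Hence each drawn edge is monochromatic with probability at most $1-\tau$, and by independence over the $O(\log(1/\delta)/\tau)$ draws, the probability that every drawn edge is monochromatic — the only way to output ``fail'' — is at most $(1-\tau)^{\Omega(\log(1/\delta)/\tau)}\le\delta$, provided the hidden constant in Step~1 is chosen large enough.

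Finally I would argue correctness of any non-``fail'' output using the single-coordinate monotone structure of an LTF. If $w_i>0$, then flipping $x_i$ from $-1$ to $+1$ can only increase $\sum_j w_j x_j-\theta$, so $f$ can only change from $-1$ to $+1$ along an edge in direction $i$; thus every bi-chromatic edge in direction $i$ is a monotone edge and the subroutine can only return ``positive.'' The case $w_i<0$ is symmetric and yields only ``negative.'' Unateness of $f$ guarantees that the sampled edges cannot simultaneously contain a monotone edge and an anti-monotone edge, so there is no ambiguity. I do not expect a genuine obstacle here; the only points requiring a little care are invoking the identity $\Inf_i(f_\rho)=|\widehat{f_\rho}(i)|$ for unate functions and fixing the constant in the number of sampled edges so that the failure probability drops below $\delta$, both of which are routine.
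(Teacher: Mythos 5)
Your proposal is correct and takes essentially the same route as the paper's (very terse) proof: both rest on the identity $\Inf_i(f_\rho)=|\widehat{f_\rho}(i)|$ for unate functions and on the observation that the sign of $\widehat{f_\rho}(i)$ (equivalently, the direction in which any bi-chromatic edge in coordinate $i$ flips) matches the sign of $w_i$. You have simply spelled out the fail-probability calculation and the edge-direction argument that the paper leaves implicit.
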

\begin{proof}  This follows from  $\Inf_i(f)= |\hat{f}(i)|$ for any LTF as well as the  fact that if $\smash{\hat{f}(i) \neq 0}$ in an LTF $f=\sign(\sum_{i=1}^n w_i x_i - \theta),$  then $\smash{\hat{f}(i)>0}$ iff $w_i>0$ and $\smash{\hat{f}(i)<0}$ iff $w_i<0.$
\end{proof}



\section{Detailed description of the algorithm} \label{sec:detailed}

We present our algorithm and its analysis in this section.\vspace{-0.1cm}
\subsection{The algorithm}
Our main testing algorithm, {\tt Mono-Test-LTF}, is presented in Figure \ref{fig:main}.  Its main components are~two procedures called {\tt Regularize-and-Balance} and {\tt Main-Procedure}, described and analyzed in Sections \ref{sec:regularize-and-balance} and \ref{sec:main-procedure}.
As will become clear later, {\tt Mono-Test-LTF} is one-sided, i.e., it always outputs 
  ``monotone'' when the input function $f$ is monotone (because it only outputs
  ``non-monotone'' when an anti-monotone edge is found, via {\tt Check-Weight-Positive} or {\tt Edge-Tester}).
Thus,~our analysis of correctness below focuses on the case when $f$ is an LTF that is 
  $\eps$-far from monotone, and shows that in this case {\tt Mono-Test-LTF} outputs
  ``non-monotone'' with probability at least 2/3.

\begin{figure}
\begin{framed}
\noindent Algorithm  {\tt Mono-Test-LTF}$(f,\eps)$
\vspace{-0.16cm}
\begin{flushleft}\noindent {\bf Input:} Black-box oracle access to an LTF $f\colon \{-1,1\}^n \to \{-1,1\}$ and 
  a parameter $\eps > 0$.

\noindent {\bf Output:}  Returns ``monotone'' or ``non-monotone.'' 
\begin{enumerate}

\item \vskip -.02in 
Call {\tt Regularize-and-Balance}$(f,\eps)$.  If it returns a restriction $\rho\in \{-1,1,\ast\}^{[n]}$ then
continue to Step~2; if it returns ``non-monotone,'' halt and output ``non-monotone;''\\ if it returns ``monotone,'' halt and output ``monotone.''

\item \vskip -.02in Call {\tt Main-Procedure}$(f,\rho,\eps)$.  If 
  it returns ``non-monotone,'' halt and output ``non-monotone;''
  if it returns ``monotone,''  halt and output ``monotone.''

\end{enumerate}\end{flushleft}\vskip -0.1in
\end{framed}\vspace{-0.12cm}
\caption{Main algorithm {\tt Mono-Test-LTF}.
If $f$ is monotone it outputs ``monotone'' with\\ probability 1; if $f$ is $\eps$-far from monotone, it outputs ``non-monotone'' with probability $\geq 2/3$.}\label{fig:main}\end{figure} 

\subsection{Key properties of procedure {\tt Regularize-and-Balance}} \label{sec:regularize-and-balance}

Let $f\colon\{-1,1\}^n\rightarrow \{-1,1\}$ be an LTF,
  given by $f(x)=\sign(w\cdot x-\theta)$.
Assume that $f$ is $\eps$-far~from monotone.
The goal of~the procedure  {\tt Regularize-and-Balance}$\hspace{0.06cm}(f,\epsilon)$
  is to return a restriction $\rho\in $ $\{-1,1,*\}^{[n]}$ such that
  $f_\rho$ is a $(\tau,\eps,\lambda)$-non-monotone LTF (with respect to $(w,\theta)$), where 
\begin{equation}\label{parameters}
\lambda=\frac{\eps^{2}}{36\ln (12/\eps)}\quad\text{and}\quad\tau=\frac{\lambda\hspace{0.03cm}\eps}{\log^2 n}.
\end{equation}

Here is some intuition that may be helpful in understanding   {\tt Regularize-and-Balance}.  If~the procedure halts and outputs ``monotone'' in
Step~2, this signals that the (low-probability) \mbox{failure} event~of {\tt Find-Hi-Influence-Variables} has taken place (since it has spuriously identified more variables as having high influence than is possible given Parseval's identity;
  see Lemma \ref{lem:find-hi-influence-vars}).
The procedure halts and outputs ``non-monotone'' in Step~3 only if {\tt Check-Weight-Positive} has unambiguously found an 
anti-monotone edge.  If the procedure outputs ``monotone'' in Step~3, this signals the (low-probability) event that {\tt Check-Weight-Positive} failed to identify some index $i \in H$ (which was supposed to have high influence) as either having $w_i>0$ or $w_i<0$.  Finally if it outputs ``monotone'' in Step~4, this signals that $f$ appears to be close to monotone.\hspace{0.05cm}\footnote{This will become clear later in the proof 
  of Lemma \ref{lem:regularize-and-balance} where we show that
  Step 4 fails with
  low probability when $f$ is far from monotone.}

\begin{figure}[t]\begin{framed}
\noindent Procedure  {\tt Regularize-and-Balance}$(f,
  \eps)$\vspace{-0.2cm}
\begin{flushleft}
\noindent {\bf Input:}
  Parameter $\eps>0$ and black-box oracle access to an LTF $f\colon\{-1,1\}^n \to\{-1,1\}$
  of the form $f(x)=\sign(w\cdot x-\theta)$, with unknown weights $w$ and threshold $\theta$.

\noindent {\bf Output:}  Either ``non-monotone,'' ``monotone,'' or a restriction $\rho \in \{-1,1,\ast\}^{{[n]}}$. 

\begin{enumerate}
\item \vskip -.02in Let $C_{RB}>0$ be a large enough constant, and let 
  $\tau'$ and $\delta$ be the following parameters:
$$ 
\tau'= \tau^2\eps^3/C_{RB}\quad\text{and}\quad \delta= \tau'^2/C_{RB}.
$$
\item \vskip -.02in Call {\tt Find-Hi-Influence-Vars}$(f,(*)^n, {\tau'},\delta)$ 
and let $H$ be the set it returns.\\
If $|H|>4/\tau'^2$, halt and output ``monotone.''

\item \vskip -.02in For each $i \in H$, call {\tt Check-Weight-Positive}$(f,(*)^n,i, {\tau'/2}
,\delta)$.  If any call returns ``negative,'' halt and output ``non-monotone;'' if any call returns ``fail,'' halt and output ``monotone;''  otherwise (the case that every call returns ``positive'') continue to Step~4.

\item \vskip -.02in  {Repeat $C_{RB}/\eps$ times: 
\begin{flushleft}\begin{enumerate}\item[]\vskip -.1in Draw a restriction $\rho$, which has support $H$ and 
  is obtained by selecting a random assignment from $\{-1,1\}^H$.
Call {\tt Check-Fourier-Regular}$(f_\rho,[n]\setminus H, \sqrt{\red{12}\tau'/\eps},\delta/2)$
and {\tt Estimate-Mean}$(f_\rho,\eps/6,\delta/2)$.
\end{enumerate}\end{flushleft}
\vskip -.09in Halt and output the first $\rho$ where {\tt Check-Fourier-Regular} outputs ``regular'' and {\tt Estimate-Mean}
  returns a number of absolute value $\le 1-7\eps/6$. If the procedure fails to find such a restriction $\rho$, halt and output ``monotone.''}
\end{enumerate}\end{flushleft}
\vskip -.1in
\end{framed}\vspace{-0.12cm}
\caption{Procedure {\tt Regularize-and-Balance}.  Our analysis (Lemma \ref{lem:regularize-and-balance}) focuses on the case\\ when $f$ is $\eps$-far from monotone.}\label{fig:regularizebalance}
\end{figure}

It is clear that {\tt Regularize-and-Balance} is one-sided.

\begin{fact}\label{one-sided-1}
{\tt Regularize-and-Balance}$(f,\eps)$ never returns ``non-monotone'' if $f$ is monotone.
\end{fact}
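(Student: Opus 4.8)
The plan is to trace through the pseudocode of {\tt Regularize-and-Balance} and observe that there is exactly one place in it that can produce the output ``non-monotone,'' namely Step~3, and that this output occurs only when a genuine anti-monotone edge of $f$ has been exhibited. First I would note that Steps~1 and~2 can only output ``monotone'' (or fall through to Step~3), and Step~4 can only output ``monotone'' or return a restriction $\rho$; hence the string ``non-monotone'' can be emitted solely in Step~3. Step~3 returns ``non-monotone'' precisely when some call to {\tt Check-Weight-Positive}$(f,(*)^n,i,\tau'/2,\delta)$ returns ``negative.''

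Next I would unpack the definition of {\tt Check-Weight-Positive}: by inspection of its Step~2, it returns ``negative'' only if, among the random edges in direction~$i$ it drew, it encountered an edge on which $f$ (equivalently $f_{(*)^n}=f$, since the restriction is empty) is bi-chromatic and anti-monotone — i.e., a genuine anti-monotone edge of $f$. But if $f$ is monotone, then by definition $f$ has no anti-monotone edges whatsoever, so {\tt Check-Weight-Positive} can never return ``negative'' when invoked on a monotone $f$. Consequently Step~3 never outputs ``non-monotone,'' and therefore {\tt Regularize-and-Balance}$(f,\eps)$ never returns ``non-monotone'' when $f$ is monotone. There is no real obstacle here: the statement is essentially immediate from the structure of the procedure together with the specification of {\tt Check-Weight-Positive}, and the only point requiring a modicum of care is the case analysis over Steps~1--4 confirming that no other branch of the procedure can emit ``non-monotone.''
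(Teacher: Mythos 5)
Your proof is correct and matches the paper's reasoning exactly: the paper states Fact~\ref{one-sided-1} without proof, calling it ``clear,'' and earlier justifies one-sidedness by noting that the algorithm only outputs ``non-monotone'' upon finding an anti-monotone edge via {\tt Check-Weight-Positive} (or {\tt Edge-Tester}). Your step-by-step unpacking of the pseudocode --- isolating Step~3 as the sole source of ``non-monotone'' and observing that a monotone $f$ has no anti-monotone edges for {\tt Check-Weight-Positive} to find --- is precisely the argument the paper leaves implicit.
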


We also have the following upper bound for the number of queries it uses (which can be straight forwardly 
verified by tracing through procedure calls and parameter settings):
  
\begin{fact}\label{query-comp-1}
The number of queries used by {\tt Regularize-and-Balance}$(f,\eps)$ is $\tilde{O}\hspace{0.03cm}({\log^{41}n}/{\eps^{90}})$. 
\end{fact}

We now prove the main property of the procedure {\tt Regularize-and-Balance}.

\begin{lemma} \label{lem:regularize-and-balance}
If $f(x)=\sign(w\cdot x-\theta)$ is $\eps$-far from monotone,
  then with probability at least $9/10$,  {\tt Regularize-and-Balance}$(f,\eps)$ 
  returns either ``non-monotone,'' or a restriction $\rho$ such that 
  $f_\rho$ is~a $(\tau,\eps,\lambda)$-non-monotone LTF with respect to $(w,\theta)$.
\end{lemma}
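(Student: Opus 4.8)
The plan is to analyze {\tt Regularize-and-Balance} step by step, conditioning on the correctness of each randomized subroutine it calls, and to show that the conjunction of the ``good'' events happens with probability at least $9/10$ and forces the desired conclusion. Throughout, assume $f(x)=\sign(w\cdot x-\theta)$ is $\eps$-far from monotone, with $(w,\theta)$ the fixed representation. Since the procedure is one-sided (Fact~\ref{one-sided-1}), the only thing to rule out is that it outputs ``monotone'' when it should not. There are three places this can happen: Step~2 (spurious overcount of high-influence variables by {\tt Find-Hi-Influence-Vars}), Step~3 (a ``fail'' returned by {\tt Check-Weight-Positive}), and Step~4 (no good restriction $\rho$ is found). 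The first two are handled by the low failure probabilities guaranteed in Lemmas~\ref{lem:find-hi-influence-vars} and~\ref{lem:check-weight-positive}: with the parameter settings in Step~1 (so that $\delta$ is a sufficiently small polynomial in $\tau,\eps$), the probability that {\tt Find-Hi-Influence-Vars}$(f,(*)^n,\tau',\delta)$ fails is at most $\delta$, and conditioned on its success the returned set $H$ satisfies $\{i:|\hat f(i)|\ge\tau'\}\subseteq H\subseteq\{i:|\hat f(i)|\ge \tau'/2\}$; in particular $|H|\le 4/\tau'^2$ by Parseval, so Step~2 does not output ``monotone.'' Similarly, each of the $|H|$ calls to {\tt Check-Weight-Positive} in Step~3 uses a variable $i\in H$ with $|\hat{f}(i)|\ge\tau'/2$, hence by Lemma~\ref{lem:check-weight-positive} returns ``fail'' with probability at most $\delta$ and otherwise correctly reports $\mathrm{sign}(w_i)$. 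Taking a union bound over these $O(1/\tau'^2)$ events, with probability $\ge 1-O(\delta/\tau'^2)\ge 1-o(1)$ we reach Step~4 with $H\supseteq\{i:|\hat f(i)|\ge\tau'\}$ and with the sign of every $w_i$, $i\in H$, correctly determined; if any such weight is negative we will have already output ``non-monotone.''

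So assume we are in Step~4 and (if we have not halted) every variable in $H$ is non-decreasing, i.e. has $w_i>0$. The key structural claim is that a uniformly random restriction $\brho$ with support $H$ makes $f_{\brho}$ a $(\tau,\eps,\lambda)$-non-monotone LTF with constant probability, so that repeating $C_{RB}/\eps$ times and accepting the first $\rho$ passing both tests succeeds with high probability. To see this I would verify the three defining conditions of Definition~\ref{def:non-mon-LTF} for $f_{\brho}$:

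\emph{Weight-regularity (a).} By choice, $H\supseteq\{i:|\hat f(i)|\ge \tau'\}$; by Theorem~\ref{thm:38} (weight-regularity implies Fourier-regularity) this $H$ contains all variables with weight exceeding a corresponding weight-threshold, so applying Proposition~\ref{prop:its-cool} with $\beta\asymp\tau'$ and $\eta$ a small constant, all but an $\eta$-fraction of restrictions $\rho$ that fix $H$ leave $f_\rho$ $(\beta/\eta)$-Fourier-regular, i.e. $O(\tau')$-Fourier-regular. Combined with the balance condition obtained below and Theorem~\ref{Dzindzalieta} (Fourier-regular $+$ balanced $\Rightarrow$ weight-regular), this upgrades to $O(\tau'/\eps)$-weight-regularity; with $\tau'=\tau^2\eps^3/C_{RB}$ chosen small enough relative to $\tau$, this is at most $\tau$-weight-regular. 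The test {\tt Check-Fourier-Regular}$(f_\rho,[n]\setminus H,\sqrt{12\tau'/\eps},\delta/2)$ is what we use to \emph{verify} this on the actual run, and by Lemma~\ref{lem:testnonregular} it passes whenever $f_\rho$ is genuinely $O(\tau'/\eps)$-Fourier-regular and fails whenever some coordinate has Fourier coefficient $\ge\sqrt{12\tau'/\eps}$, so a passing $\rho$ is indeed $\tau$-weight-regular after invoking Theorem~\ref{Dzindzalieta}.

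\emph{Balance (b), $\gamma=\eps$.} Here I would use Lemma~\ref{lem:restriction}: since every variable in $H$ is monotonically non-decreasing, $\Ex_{\brho}[\dist(f_{\brho},\textsc{Mono})]=\dist(f,\textsc{Mono})\ge\eps$, so in particular a constant fraction of the $\rho$'s have $f_\rho$ far from monotone, hence (as $|\Ex[f_\rho]|\ge 1-\gamma'$ would force $\dist(f_\rho,\textsc{Mono})\le\gamma'/2$) a constant fraction have $|\Ex[f_\rho]|\le 1-2\eps$ or so. For such $\rho$, {\tt Estimate-Mean}$(f_\rho,\eps/6,\delta/2)$ returns a value of absolute value $\le 1-7\eps/6$ with probability $\ge 1-\delta/2$ (Fact~\ref{fact:estimate-mean}), so the Step~4 acceptance test is passed; conversely any accepted $\rho$ has true mean of absolute value $\le 1-7\eps/6+\eps/6=1-\eps$, i.e. is $\eps$-balanced.

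\emph{Significant squared negative weights (c), $\lambda$ as in~(\ref{parameters}).} For a typical $\rho$ that is $O(\tau'/\eps)$-weight-regular and $\eps$-balanced, $f_\rho$ is far from monotone (constant fraction, from the previous paragraph, or more carefully: $\dist(f_\rho,\textsc{Mono})$ is not much below $\dist(f,\textsc{Mono})=\Omega(\eps)$ on a constant fraction of $\rho$). Applying Lemma~\ref{lem:reg-neg-coeff} to $f_\rho$ — which is legitimate since $f_\rho$ is $\tau$-weight-regular with $\tau\le\eps/16$ and is $\Omega(\eps)$-far from monotone — yields that $f_\rho$ has $\Omega(\eps^2/\ln(1/\eps))$-significant squared negative weights, which is at least $\lambda=\eps^2/(36\ln(12/\eps))$ once the constants are tracked. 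Note fixing variables in $H$, all of which have positive weight, only removes positive mass from $\pos(f)$ and leaves $\negg(f)$ untouched, so this is consistent.

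Putting it together: with probability $\ge 1-o(1)$ we pass Steps~2--3 with the sign information, and then a single random $\rho$ in Step~4 simultaneously satisfies (a), (b), (c) and passes both tests with some constant probability $c>0$; repeating $C_{RB}/\eps\gg 1/c$ times, and union-bounding the $O(1/\eps)$ calls of each test each failing with probability $\le\delta/2$, the total failure probability is below $1/10$. The main obstacle I expect is the tight bookkeeping of constants in Step~4: one must show that the \emph{same} $\rho$ is simultaneously regular (so that Theorem~\ref{Dzindzalieta} applies and {\tt Check-Fourier-Regular} accepts), balanced (so {\tt Estimate-Mean} accepts and so Theorem~\ref{Dzindzalieta} is applicable), and far-from-monotone (so Lemma~\ref{lem:reg-neg-coeff} gives condition (c)), with enough slack that the verification thresholds $\sqrt{12\tau'/\eps}$ and $1-7\eps/6$ are passed by the good $\rho$'s and failed by the bad ones — this is where the interlocking choices $\lambda=\eps^2/(36\ln(12/\eps))$, $\tau=\lambda\eps/\log^2 n$, $\tau'=\tau^2\eps^3/C_{RB}$, $\delta=\tau'^2/C_{RB}$ all get used, and the circular dependence between regularity and balance (each of Theorems~\ref{thm:38} and~\ref{Dzindzalieta} needs the other's conclusion as hypothesis) has to be broken carefully by first getting Fourier-regularity unconditionally from Proposition~\ref{prop:its-cool}, then balance from Lemma~\ref{lem:restriction}, then weight-regularity from Theorem~\ref{Dzindzalieta}.
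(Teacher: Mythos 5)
Your proposal follows essentially the same route as the paper: first control the failure events in Steps 2--3 so that with high probability $H$ correctly captures the high-Fourier-influence variables and their weight-signs are learned (outputting ``non-monotone'' if any is negative), then argue that a random restriction over $H$ is simultaneously Fourier-regular (via Proposition~\ref{prop:its-cool}) and far from monotone (via Lemma~\ref{lem:restriction}) with probability $\Omega(\eps)$, so that one of the $C_{RB}/\eps$ draws passes both tests, and finally combine Theorem~\ref{Dzindzalieta} with Lemma~\ref{lem:reg-neg-coeff} to conclude the output is $(\tau,\eps,\lambda)$-non-monotone.

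Two small corrections to your write-up. First, your invocation of Theorem~\ref{thm:38} in the weight-regularity paragraph is a red herring: Proposition~\ref{prop:its-cool} is stated directly in terms of Fourier coefficients, so $H\supseteq\{i:|\hat f(i)|\ge\tau'\}$ already satisfies its hypothesis with $\beta=\tau'$, and no weight-to-Fourier translation is needed to apply it. Second, and more importantly, your argument for property~(c) needs to be made explicit at one point: Lemma~\ref{lem:reg-neg-coeff} requires $f_\rho$ to be far from monotone, but the tests in Step~4 only verify regularity and balance, and the \emph{first} $\rho$ that passes both may not be one of the far-from-monotone restrictions supplied by Lemma~\ref{lem:restriction}. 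The fix is exactly the observation you hint at in your parenthetical: since every $i\in H$ has $w_i>0$, the quantity $\negg(f_\rho)/(\pos(f_\rho)+\negg(f_\rho))$ is \emph{independent of the values} $\rho$ assigns (it depends only on which variables are in $H$), so establishing $(c)$ for the one far-from-monotone, regular $\rho$ guaranteed to exist immediately gives $(c)$ for \emph{every} restriction with support $H$, including whichever one the algorithm happens to output first. State that invariance explicitly rather than ``so this is consistent''; with that made precise, the soundness argument goes through. Also note that ``a constant fraction'' of restrictions are far from monotone should be ``an $\Omega(\eps)$ fraction,'' which is exactly what the $C_{RB}/\eps$ repetitions are sized for.
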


\begin{proof}
Using Lemma~\ref{lem:find-hi-influence-vars}, with probability $1-\delta$, {\tt Find-Hi-Influence-Vars} in Step 2 returns
  a set $H\subseteq [n]$ of indices that satisfies the following property:
\begin{equation}\label{eventevent}
\text{If $|\hat{f}(i)| \geq \tau'$ then $i\in H$;  
If $|\hat{f}(i)| < \tau'/2$ then $i \notin H$.}
\end{equation}
When this happens,
 we have by Parseval $|H|\le 4/\tau'^2$, and the procedure continues to Step 3.

We consider two subevents:\hspace{-0.02cm}
  $E_0'$:\hspace{-0.02cm} $H$ satisfies (\ref{eventevent}) but  contains an elements $i$ with $w_i < 0$; and~$E_0$: $H$ satisfies (\ref{eventevent}) and every $i \in H$ has $w_i > 0$.
We have $\Pr[E_0']+\Pr[E_0]\ge 1-\delta$ as discussed above. Below we show that the procedure 
  returns ``non-monotone'' with high probability, conditioning on $E_0'$,
  and it returns a restriction with the desired property with high probability, conditioning on~$E_0$.
By the end we combine the two cases to conclude that
\begin{align*}
&\Pr[E_0']\cdot \Pr\big[\text{\hspace{0.02cm}the procedure returns ``non-monotone''}\hspace{0.05cm}|\hspace{0.05cm}E_0'\hspace{0.03cm}\big]\\[0.2ex]
&\ \ \ \ \ +\Pr[E_0]\cdot
\Pr\big[\text{\hspace{0.02cm}it returns $\rho$ such that $f_\rho$ is $(\tau,\eps,\lambda)$-non-monotone}\hspace{0.07cm}|\hspace{0.05cm}E_0\hspace{0.03cm}\big]\ge 9/10.
\end{align*}
 
We first address the (easier) case of $E_0'$.
Assume $i\in H$ satisfies $w_i<0$.
From (\ref{eventevent}), $|\hat{f}(i)| \ge \tau'/2$ and thus, {\tt Check-Weight-Positive}$(f, (*)^n, i, \tau'/2, \delta)$ in Step 3 returns ``negative'' with~probability $1-\delta$, 
and the procedure returns ``non-monotone'' with~probability 
   $1- \delta$, conditioning on $E_0'$.

Next we address the (harder) case of $E_0$. First we use $E_1$ to denote the event that every~call~to {\tt Check-Weight-Positive} in Step 3 returns the correct answer, i.e., it returns ``positive'' for every $i\in H$.
By a union bound we have $\Pr[E_1\hspace{0.03cm}|\hspace{0.03cm}E_0]\ge 
  1-4\delta/\tau'^2$. 
  
Assuming that $E_1$ happens, the procedure then proceeds to Step 4 and we use $E_2$ to denote the event 
  that {\tt Check-Fourier-Regular} and {\tt Estimate-Mean} 
  in Step 4 return the correct answer, i.e.:
\begin{enumerate}
\item {\tt Check-Fourier-Regular} outputs ``not regular''\vspace{0.005cm} if $|\hat{f}_{\rho}(i)|\ge \sqrt{\red{12}\tau'/\eps}$ for some $i\in [n]\setminus H$, and outputs ``regular'' if $\smash{|\hat{f}_\rho(i)|\le {\red{3}\tau'}/\eps}$ for all $i\in [n]\setminus H$, {for every $\rho$ in Step 4, and} \vspace{-0.06cm}
\item {\tt Estimate-Mean} returns a number $a$ with $|a-\E[f_\rho]|\le \eps/6$,
  for every $\rho$ in Step 4.  
\end{enumerate}
We also write $E_3$ to denote the event that one of the restrictions $\rho$ drawn
  in Step 4 satisfies~that~$f_\rho$ is both $(2\eps/3)$-far from monotone and $(3\tau'/ \eps )$-Fourier-regular.
By a union bound we have $$\Pr[E_2\hspace{0.03cm}|\hspace{0.03cm}E_0\land E_1] \ge 1-C_{RB}\delta/\eps.$$
In the rest of the proof we show that 1) $\Pr[E_3\hspace{0.03cm}|\hspace{0.03cm}E_0\land E_1]\ge 99/100$ and 
  2) given $E_0,E_1,E_2$~and~$E_3$ the procedure always returns a restriction
  $\rho$ such that $f_\rho$ is $(\tau,\eps,\lambda)$-non-monotone.
Together we~have that the procedure returns such a $\rho$ with probability
  at least (conditioning on $E_0$) 
  $$
(1-4\delta/\tau'^2) \cdot (1-C_{RB}\delta/\eps-1/100).
$$

 Summarizing the two cases of $E_0'$ and $E_0$  we have that {\tt Regularize-and-Balance}
  returns either ``non-monotone'' or
  a restriction $\rho$ such that $f_\rho$ is $(\tau,\eps,\lambda)$-non-monotone with 
  probability at least
$$
\Pr[E_0']\cdot (1-\delta)+\Pr[E_0]  
  \cdot (1-4\delta/\tau'^2) \cdot (1-C_{RB}\delta/\eps-1/100) > 9/10,
$$
using $\Pr[E_0']+\Pr[E_0]\ge 1-\delta$ and our choice of $\delta$ (by letting $C_{RB}$ be large
  enough).

We use the following claim to show that $\Pr[E_3\hspace{0.03cm}|\hspace{0.03cm}E_0\land E_1]\ge 99/100$.

\begin{claim}\label{hehe}
A random restriction $\rho$ over $H$ satisfies that $f_\rho$ is both $(2\eps/3)$-far from monotone and 
  $( 3\tau'/ \eps)$-Fourier-regular with probability at least $\eps/3$.
\end{claim}
\begin{proof}
For each of the two properties, we have
\begin{flushleft}\begin{enumerate}
\item Proposition \ref{prop:its-cool}:
 with probability at least $1-(\eps/3)$, $f_\rho$ is $( 3\tau'/ \eps)$-Fourier-regular.\vspace{-0.08cm}
\item Lemma \ref{lem:restriction}: with probability at least $2\eps/3$, $f_\rho$ is $(2\eps/3)$-far
  from monotone. To see this, let $c$  denote the probability of 
  $f_\rho$ being $(2\eps/3)$-far from monotone. Then $c\ge 2\eps/3$ follows from
$$
(1-c)\cdot (2\eps/3)+c\cdot (1/2) \ge \eps,
$$
where we used the fact that distance to monotonicity is always at most $ {1}/{2}$.
\end{enumerate}\end{flushleft}
The claim then follows from a union bound.
\end{proof}

By choosing $C_{RB}$ to be a large enough constant,
  we have $\Pr[E_3\hspace{0.03cm}|\hspace{0.03cm}E_0\land E_1]\ge 99/100$.

Finally we show that conditioning on all four events $E_0,E_1,E_2,E_3$  
  the procedure always returns a restriction $\rho$ such that $f_{\rho}$ is a $(\tau, \eps, \lambda)$-non-monotone LTF. 
We do this in two steps: 
\begin{flushleft}\begin{enumerate}
\item First, given $E_3$, one of the restrictions $\rho$ drawn in Step 4
  is both $(2\eps/3)$-far from monotone and $\smash{(3\tau'/\eps)}$-Fourier-regular.
Given $E_2$,
  $\rho$ must 
  pass both tests, i.e., {\tt Check-Fourier-Regular} outputs ``regular'' and {\tt Estimate-Mean} returns a number of absolute value at most $\smash{1 - 7\eps / 6}$ in Step 4. The former is trivial; to see the latter, note that being 
  $(2\eps/3)$-far from monotone implies that $|\E[f_\rho]|\le 1-4\eps/3$ and therefore, the number
  returned by {\tt Estimate-Mean} is at most $1-7 \eps/6$, given $E_2$.
\ignore{Moreover, $f_\rho$ must be a $(\tau,\eps,\lambda)$-non-monotone LTF.
This follows from Theorem \ref{Dzindzalieta} and Lemma \ref{lem:reg-neg-coeff} (using our choice of $\lambda$).
(One can think of this as a completeness property of the procedure which guarantees the algorithm ``makes progress.")}
\item Second,  we show that if a restriction $\rho$ passes both tests in Step 4 of the procedure, then $f_{\rho}$ must be $(\tau, \eps, \lambda)$-non-monotone. One can think of this as a soundness property, saying that if the procedure halts and returns some restriction $\rho$, that it returns a correct one.
To see this, note that by $E_2$, $f_{\rho}$ is both $\smash{\sqrt{\red{12}\tau'/\eps}}$-Fourier regular and $\eps$-balanced. By Theorem~\ref{Dzindzalieta}, $f_\rho$ is $O(\sqrt{ \tau'/\eps^3})$-weight-regular, and
  $\tau$-weight-regular by letting $C_{RB}$ be large enough. 
It also follows from Lemma \ref{lem:reg-neg-coeff} that $f_\rho$ has $\lambda$-significant squared negative weights.
\end{enumerate}\end{flushleft} 
This finishes the proof of the lemma.
\end{proof}

\subsection{Key properties of {\tt Main-Procedure}} \label{sec:main-procedure}

{\tt Main-Procedure} is presented in Figure \ref{fig:mainprocedure}.
Given Lemma \ref{lem:regularize-and-balance}  we may assume that the input 
   $(f,\rho ,\eps)$ satisfies
  that $f_\rho$ is a $(\tau,\eps,\lambda)$-non-monotone LTF (see the choices of $\tau$ and $\lambda$ 
  in (\ref{parameters})).
  
We prove the following main lemma in this section.

\begin{figure}[t]\begin{framed}
\noindent Procedure  {\tt Main-Procedure}$(f,\rho,\eps)$\vspace{-0.16cm} 
\begin{flushleft}
\noindent {\bf Input:} 
Parameter $\eps>0$, black-box oracle access to an LTF $f \colon \{-1, 1\}^n\rightarrow \{-1, 1\}$ 
  of the form $f(x) = \sign(w\cdot x-\theta)$ with unknown weights $w$ and threshold $\theta$,
  and a restriction $\rho$.\\
\noindent {\bf Output:}  Either  ``non-monotone'' {or ``monotone.'' }
\begin{enumerate}
\item \vskip -.02in Set $t=0$ and $\rho^{(0)}=\rho$.
\item \vskip -.02in While {$|\stars(\rho^{(t)})| \geq 1/\tau^2$}, repeat the following steps:

\begin{enumerate}

\item\vskip -.02in Construct a subset $A_t \subseteq \stars(\rho^{(t)})$ by independently putting each index $i \in
\stars(\rho^{(t)})$ into $A_t$ with probability $1/{2}$.
\vspace{0.1cm}

\item Call {\tt Find-Balanced-Restriction}$(f,\rho^{(t)},A_t,\eps)$.  If it returns ``monotone''\\ then halt and return ``monotone;'' 
otherwise, it returns a restriction $\rho'$ with $\supp(\rho')=\supp(\rho^{(t)}) \cup A_t$.
\vspace{0.1cm}

\item Call {\tt Maintain-Regular-and-Balance}$(f,\rho',\eps)$.  If it returns
``non-monotone''\\ then halt and output ``non-monotone;'' if it returns
``monotone'' then halt and output ``monotone;''  otherwise, it returns a restriction $\eta$ and
  we set $\rho^{(t+1)}$ to ${\rho' \eta}$.\vspace{0.1cm}

\item Increment $t$ by 1.
{If $t > 4\log n$, halt and output ``monotone;''} otherwise proceed\\ to the next iteration of step (a) of the loop,  
\end{enumerate}

\item \vskip -.02in 
Let {$\smash{\eps'= { \eps^{3}}/(C{\log(1/\eps)}})$} for some large constant $C$; run {\tt Edge-Tester}$\smash{(f_{\rho^{(t)}},\eps',1/10)}$ \\and output what it outputs (either ``monotone'' or ``non-monotone''). 
\end{enumerate}\end{flushleft}

\vskip -.1in
\end{framed}\vspace{-0.12cm}\caption{Procedure {\tt Main-Procedure}.   Our analysis in Section \ref{sec:main-procedure} 
  focuses on the case when $f_\rho$ is\\ a $(\tau,\eps,\lambda)$-non-monotone LTF.}\label{fig:mainprocedure}
\end{figure}

\begin{figure}\begin{framed}
\noindent Subroutine  {\tt Find-Balanced-Restriction}$(f,\rho^{(t)},A_t,\eps)$\vspace{-0.16cm} 
\begin{flushleft}
\noindent {\bf Input:} oracle access to $f \colon \{-1, 1\}^n\rightarrow \{-1, 1\}$, restriction
$\rho^{(t)}$, $A_t \subseteq \stars(\rho^{(t)}),$ and $\eps > 0.$

\noindent {\bf Output:} ``monotone'' or a restriction $\rho'$ with $\supp(\rho')=\supp(\rho^{(t)})\cup A_t$
  that extends $\rho^{(t)}$.
\begin{enumerate}
\item[] {Repeat $C_{BR}\cdot \log n/\eps^3$ times for some large enough constant $C_{BR}$: 
\begin{flushleft}\begin{enumerate}\item[]\vskip -.1in Draw a restriction $\rho^*$, which has support $A_t$ and 
  is obtained by selecting a random assignment from $\{-1,1\}^{A_t}$, and let $\rho'=\rho^{(t)} \rho^*$.
Call 
{\tt Estimate-Mean}$(f_{\rho'},0.01,\delta)$, where $\delta=\eps^3/(200\hspace{0.03cm}C_{BR}\log^2n)$.
If it returns a number of absolute value at most $0.03$, halt and output $\rho'$.
\end{enumerate}\end{flushleft}
\vskip -.08in Otherwise, output ``monotone.''}
\end{enumerate}
\end{flushleft}\vskip -.1in
\end{framed}\vspace{-0.12cm}
\caption{Subroutine {\tt Find-Balanced-Restriction}. We are interested in the case when $f_{\rho^{(t)}}$ is a\\ $(\tau,\eps ,\lambda(1-t/(8\log n)))$-non-monotone LTF, 
  and $A_t$ satisfies the conditions of Lemma~\ref{lem:FBR}.\ignore{\red{Li-Yang: This used to say ``and $A_t$ and $B_t$ satisfy both $|A_t|\ge m/4$ and  (\ref{eq:1}).'' but the definitions of $B_t$ and $m$ do not appear till the next page.}}}\label{fig:findbalancedrestriction}
\end{figure}

\begin{lemma} \label{lem:main-procedure}
{\tt Main-Procedure}$(f,\rho,\eps)$ never returns ``non-monotone'' when
  $f$ is monotone.
When $f_\rho$ is a $(\tau,\eps ,\lambda)$-non-monotone LTF, it
  returns ``non-monotone'' with probability at least $81/100.$
\end{lemma}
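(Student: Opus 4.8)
The plan is to establish the two claims of Lemma~\ref{lem:main-procedure} separately, with the one-sidedness being essentially immediate and the probabilistic guarantee requiring the bulk of the work via a stage-by-stage invariant argument.

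\textbf{One-sidedness.} First I would observe that {\tt Main-Procedure} outputs ``non-monotone'' only in two places: in Step~2(c) when {\tt Maintain-Regular-and-Balance} returns ``non-monotone,'' or in Step~3 when {\tt Edge-Tester} returns ``non-monotone.'' Tracing through the subroutines, {\tt Maintain-Regular-and-Balance} (which internally calls {\tt Check-Weight-Positive}) only reports ``non-monotone'' upon actually exhibiting an anti-monotone edge, and {\tt Edge-Tester} by Theorem~\ref{fact:edge-tester} is one-sided. Since $f_\rho$ being monotone means $f_{\rho^{(t)}}$ is monotone for every extension $\rho^{(t)}$, no anti-monotone edge exists, so the procedure never outputs ``non-monotone'' on monotone $f$.

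\textbf{The main probabilistic guarantee.} The heart of the proof is to maintain the invariant that, entering stage $t+1$, the restriction $\rho^{(t)}$ is such that $f_{\rho^{(t)}}$ is a $(\tau,\eps,\lambda_t)$-non-monotone LTF with $\lambda_t = \lambda(1 - t/(8\log n))$, while $|\stars(\rho^{(t)})|$ shrinks by roughly a factor of $2$ each stage. I would argue as follows. At stage $t+1$ with $g = f_{\rho^{(t)}}$: (i) Step~2(a) draws $A_t$ by including each live index with probability $1/2$; a Chernoff/martingale argument using $\tau \ll \lambda_t$ shows that with high probability $A_t$ splits both $\pos$ and $\negg$ roughly evenly, so the complement $\stars(\rho^{(t)}) \setminus A_t$ retains $\lambda_{t+1}$-significant squared negative weights (and $|A_t| \geq |\stars(\rho^{(t)})|/4$, say). (ii) By Lemma~\ref{lem:FBR} (the guarantee for {\tt Find-Balanced-Restriction}, which I would invoke), since $g$ is $(\tau,\eps,\lambda_t)$-non-monotone and $A_t$ is good, with high probability it returns $\rho'$ with $g_{\rho'}$ being $0.96$-balanced. (iii) {\tt Maintain-Regular-and-Balance} applied to $f_{\rho'}$: by the (stated but not-yet-excerpted) guarantee for that procedure, it either finds an anti-monotone edge (correctly outputs ``non-monotone'') or returns $\eta$ so that $f_{\rho'\eta} = f_{\rho^{(t+1)}}$ is again $(\tau,\eps,\lambda_{t+1})$-non-monotone; here $\tau$-weight-regularity comes from removing the high-influence set $H_t$ and re-balancing to $\eps$, and the $\lambda_{t+1}$-significance survives because Step~4 verified all of $H_t$ has positive weight. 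A union bound over the $O(\log n)$ stages, with per-stage failure probability made $O(1/\log n)$ by the constants in the subroutines, keeps the total failure probability small. When the loop exits (either $|\stars(\rho^{(t)})| < 1/\tau^2 = \poly(\log n, 1/\eps)$, or $t$ reaches $4\log n$), the invariant plus the fact that $\lambda_t \geq \lambda/2 = \Omega(\poly(\eps))$ lets me apply Lemma~\ref{lem:lem3.1-conv} to conclude $f_{\rho^{(t)}}$ is $\eps'$-far from monotone for $\eps' = \poly(\eps)$; then {\tt Edge-Tester}$(f_{\rho^{(t)}}, \eps', 1/10)$ on $m = O(1/\tau^2)$ variables outputs ``non-monotone'' with probability $\geq 9/10$ using $O(m/\eps') = \poly(\log n, 1/\eps)$ queries. (I should check the case $t$ hits $4\log n$ first: the halving invariant guarantees this does not happen before $|\stars(\rho^{(t)})|$ is small, except on the small-probability failure event, so it is absorbed into the union bound.) Combining, the total success probability exceeds $81/100$.

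\textbf{Main obstacle.} The delicate step is (i)--(iii) above, specifically ensuring that \emph{all three} parameters of the $(\tau,\gamma,\lambda)$-non-monotone invariant are simultaneously restored after the stage, since the operations interact: random restriction preserves $\lambda$-significance only because $\tau \ll \lambda_t$ (requiring the weight-regularity to have been maintained), weight-regularity is only restored after pruning $H_t$ \emph{and} re-balancing (Theorem~\ref{Dzindzalieta} converts Fourier-regularity back to weight-regularity, but needs the $\eps$-balance), and the $\lambda_{t+1}$-significance after pruning relies on Step~4 certifying $H_t \subseteq P(f)$. Getting the constants to line up so that each subroutine's failure probability is $O(1/\log n)$ while the slack $\lambda - \lambda_t$ stays bounded by $\lambda/2$ across all $O(\log n)$ stages is the careful bookkeeping this proof demands; I expect the bulk of the argument to route through the not-yet-stated lemmas for {\tt Find-Balanced-Restriction} and {\tt Maintain-Regular-and-Balance}, which I would invoke as black boxes here.
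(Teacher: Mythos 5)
Your proposal follows essentially the same structure as the paper's proof: one-sidedness via the observation that anti-monotone edges are the only triggers for a ``non-monotone'' output, then a per-stage invariant lemma with the same $\lambda_t = \lambda(1-t/(8\log n))$ bookkeeping, a union bound over $O(\log n)$ stages, and a final application of Lemma~\ref{lem:lem3.1-conv} plus {\tt Edge-Tester} on the $\poly(\log n,1/\eps)$ surviving variables. The paper packages the per-stage invariant as Lemma~\ref{key-lemma} (proved via Lemmas~\ref{lem:a}, \ref{lem:FBR}, \ref{lem:lastlem} for steps~2(a), 2(b), 2(c) respectively), which is exactly the decomposition you invoke as black boxes.
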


The procedure only returns ``non-monotone'' when~it finds 
  an anti-monotone edge in the subroutine {\tt Check-Weight-Positive}. Hence we may focus on the case when $f_{\rho}$ is $(\tau, \eps, \lambda)$-non-monotone.
For this purpose, we analyze the three steps $\text{2(a)}, \text{2(b)}, \text{2(c)}$ of each while loop of {\tt Main-Procedure}, and prove
  the following lemma.
 
\begin{lemma} \label{key-lemma}
Let $t\le 4\log n$. Suppose that at the beginning of the $(t+1)$th loop of {\tt Main-Procedure}, 
  $f_{\rho^{(t)}}$~is $\smash{(\tau,\eps,\lambda(1-t/(8\log n)))}$-non-monotone.
Then with probability at least $\smash{1-1/(40\log n)}$,~it~either  returns ``non-monotone'' within this loop 
  or obtains~a set $\smash{A_t\subseteq [n]\setminus \supp(\rho^{(t)})}$ and a restriction $\smash{\rho^{(t+1)}}$  extending $\smash{\rho^{(t)}}$
  at the end of this loop such that 
\begin{enumerate}
\item $|A_t|\ge |\stars(\rho^{(t)})|/4$; \vspace{-0.08cm}
\item $\supp(\rho^{(t)})\cup A_t\subseteq \supp(\rho^{(t+1)})$;\ignore{\enote{Is it clear here that not only does the next restriction have support of the previous restriction and $A_t$, but also maintains the values at $\rho^{(t)}$? i.e the restriction $\rho^{(t+1)}$ ``extends" $\rho^{(t)}$? This should be clear from the proof, but maybe it will make it more readable to include it in the lemma statement or algorithm? Might make it easier to see what's going on.}} and\vspace{-0.1cm}
\item $f_{\rho^{(t+1)}}$
  is a $(\tau,\eps,\lambda(1-(t+1)/(8\log n)))$-non-monotone LTF.  
\end{enumerate}
\end{lemma}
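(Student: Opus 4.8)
The plan is to track the three conditions of $(\tau,\eps,\lambda')$-non-monotonicity (weight-regularity, $\eps$-balancedness, $\lambda'$-significant squared negative weights, where I abbreviate $\lambda_t := \lambda(1-t/(8\log n))$ and $\lambda_{t+1} := \lambda(1-(t+1)/(8\log n))$) through the three subroutine calls 2(a), 2(b), 2(c) of the loop, and to argue that each call either exposes an anti-monotone edge (whence we are done) or preserves/restores exactly the invariant we need, all with total failure probability $\le 1/(40\log n)$. I would allocate, say, a $1/(120\log n)$ failure budget to each of the three stages.

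\textbf{Step 1 (analysis of $A_t$, Step 2(a)).} First I would show that with high probability the random subset $A_t$ — where each free index is included independently with probability $1/2$ — simultaneously (i) contains at least a quarter of $\stars(\rho^{(t)})$ (a Chernoff bound, using $|\stars(\rho^{(t)})|\ge 1/\tau^2$ which is large, so the deviation from $1/2$ is tiny), giving condition~1 of the lemma; and (ii) splits the squared negative weight roughly evenly, so that the \emph{remaining} coordinates $\stars(\rho^{(t)})\setminus A_t$ still carry a $\lambda_{t+1}$-fraction of the total squared weight. For (ii) I would use that $\sum_{j\in N, j\in \stars(\rho^{(t)})\setminus A_t} w_j^2$ is a sum of independent bounded variables with mean half the total, and each $w_j^2 \le \tau^2 \sum_k w_k^2 \ll \lambda \sum_k w_k^2$ by weight-regularity of $f_{\rho^{(t)}}$; a Hoeffding/Bernstein bound then shows the fraction deviates from $\lambda_t/2$ by much less than the gap $\lambda_t/2 - \lambda_{t+1} = \lambda/(16\log n)$, which is $\gg \tau$ for our parameter settings. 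The same computation controls the total squared weight in $\stars(\rho^{(t)})\setminus A_t$ from below, so the \emph{ratio} (negative over total) restricted to the surviving coordinates stays $\ge \lambda_{t+1}$. This is the step I expect to be the main obstacle: getting the concentration clean enough that the slack $\lambda/(16\log n)$ genuinely dominates the error, and making sure the weight-regularity hypothesis on $f_{\rho^{(t)}}$ is strong enough (it is, since $\tau = \lambda\eps/\log^2 n$) to feed Hoeffding with a good enough ratio of ``max term'' to ``sum.''

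\textbf{Step 2 (Steps 2(b) and 2(c)).} Conditioned on the good event for $A_t$, I would invoke the guarantee of \texttt{Find-Balanced-Restriction} (Lemma~\ref{lem:FBR}, referenced in Figure~\ref{fig:findbalancedrestriction}): since $f_{\rho^{(t)}}$ is $(\tau,\eps,\lambda_t)$-non-monotone and $A_t$ meets that lemma's hypotheses, with probability $\ge 1 - 1/(120\log n)$ it returns a restriction $\rho'$ fixing exactly $A_t$ such that $f_{\rho'}$ is $0.96$-balanced (the ``$0.03$ mean'' threshold in the subroutine); crucially $f_{\rho'}$ still has $\lambda_{t+1}$-significant squared negative weights because fixing the $A_t$-coordinates only changes the threshold, not the weights, and Step~1 guaranteed the surviving coordinates retain the negative-weight fraction. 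Then I would invoke the guarantee of \texttt{Maintain-Regular-and-Balance}: on input $(f,\rho',\eps)$ with $f_{\rho'}$ $0.96$-balanced and carrying $\lambda_{t+1}$-significant squared negative weights, it either finds an anti-monotone edge (via \texttt{Check-Weight-Positive} on the high-influence variables $H_t$ — possible because those variables are genuinely heavy) and returns ``non-monotone,'' or it returns an extension $\eta$ fixing $A_t\cup H_t$ such that $f_{\rho'\eta}$ is $\tau$-weight-regular (the heavy variables have been stripped, so Fourier-regularity holds, and then Theorem~\ref{Dzindzalieta} plus $\eps$-balancedness upgrades this to weight-regularity) and $\eps$-balanced. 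Since every index in $H_t$ was verified to have positive weight, removing them does not decrease the negative-weight fraction, so $f_{\rho^{(t+1)}}$ with $\rho^{(t+1)} = \rho'\eta$ retains $\lambda_{t+1}$-significant squared negative weights. This establishes condition~3; condition~2 ($\supp(\rho^{(t)})\cup A_t \subseteq \supp(\rho^{(t+1)})$) is immediate since $\rho'$ fixes $A_t$ and $\eta$ only adds to the support.

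\textbf{Step 3 (union bound).} Finally I would union-bound the three failure events — $A_t$ bad, \texttt{Find-Balanced-Restriction} fails, \texttt{Maintain-Regular-and-Balance} fails — each at most $1/(120\log n)$, for a total of at most $1/(40\log n)$, which is the claimed bound. (Failures internal to the subroutines, e.g.\ \texttt{Estimate-Mean} or \texttt{Check-Fourier-Regular} returning a wrong answer, are folded into their respective lemma guarantees via the small $\delta$ parameters chosen in the figures.) A remark I would include: the case where a subroutine halts the whole algorithm with ``monotone'' is subsumed in its failure probability, since when $f_{\rho^{(t)}}$ really is $(\tau,\eps,\lambda_t)$-non-monotone (hence far from monotone by Lemma~\ref{lem:lem3.1-conv}), outputting ``monotone'' is exactly a failure event we have already bounded.
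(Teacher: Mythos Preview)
Your proposal is correct and follows essentially the same decomposition as the paper: the paper's proof of Lemma~\ref{key-lemma} is literally the one-line statement that it ``follows directly from Lemmas~\ref{lem:a},~\ref{lem:FBR}, and~\ref{lem:lastlem},'' which are precisely the per-step guarantees for 2(a), 2(b), 2(c) that you outline, combined by a union bound. The paper's failure budgets are $\exp(-\Omega(\log^2 n))$ for Step~2(a) and $1/(100\log n)$ for each of Steps~2(b) and~2(c), but your uniform $1/(120\log n)$ allocation works just as well.

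One small arithmetic wrinkle in your Step~1: when you write that the fraction ``deviates from $\lambda_t/2$ by much less than the gap $\lambda_t/2-\lambda_{t+1}$,'' note that both the negative squared weight and the total squared weight in $B_t$ are (in expectation) halved, so the \emph{ratio} stays near $\lambda_t$, not $\lambda_t/2$; the relevant gap is $\lambda_t-\lambda_{t+1}=\lambda/(8\log n)$. The paper handles this (equation~(\ref{eq:1})) by separately controlling $\sum_{i\in B_t:w_i<0}w_i^2$ and $\sum_{i\in B_t}w_i^2$ via two Hoeffding bounds, exactly as you propose, so this is only a cosmetic slip in your write-up.
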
 

We use Lemma \ref{key-lemma} to prove Lemma \ref{lem:main-procedure}.

\begin{proof}[Proof of the Second Part of Lemma \ref{lem:main-procedure} using Lemma \ref{key-lemma}]
We consider the event $E$ where the conclusion of Lemma \ref{key-lemma} holds for every iteration of the while loop of {\tt Main-Procedure}.
As the condition of Lemma \ref{key-lemma} holds for the first loop ($t=0$) and there 
  are at most $4\log n$ many loops, this happens with probability at least $9/10$. {Since $E$ implies  $|A_t| \geq |\stars(\rho^{(t)})|/4$, we can also assume that the procedure never halts and outputs ``monotone'' due to line $\text{2(d)}$.}

Given $E$, {\tt Main-Procedure} either returns ``non-monotone'' as desired 
  or reaches line 3. Further\-more, if it reaches line 3, $\smash{f_{\rho^{(t)}}}$ must be $(\tau,\eps,\lambda/2)$-non-monotone by Lemma \ref{key-lemma} and have at~most $ {1}/{\tau^2}$ variables. It  follows from
  Lemma~\ref{lem:lem3.1-conv} that $f_{\rho^{(t)}}$  is $\eps'$-far from monotone, where~{$\eps'=\eps^{3}/(C{\log(1/\eps)})$}
  for some large enough constant~$C$. Finally, by Theorem~\ref{fact:edge-tester}, {\tt Edge-Tester} outputs ``non-monotone'' (by finding an anti-monotone edge) with probability at least $9/10$ and the proof is complete. 
  \ignore{
  \xnote{We also need to say that with high probability,
  the restriction $\rho^{(t)}$ in Step 3 has less than $\log n$ stars; Lemma 5.6 
  actually implies that most likely we do not reach Step 3. 
Should we change the number of rounds in Step 2 to something like 
  $\log n-\log \log n$? This seems a little awkward.}}
\end{proof}

\subsubsection{Proof of Lemma \ref{key-lemma}}

The proof of Lemma \ref{key-lemma} consists of three lemmas, one for each steps
  2(a), 2(b) and 2(c). 
Below~we assume that the condition of Lemma \ref{key-lemma} holds at the beginning
  of the $(t+1)^{\text{th}}$ loop, for some~$t\le 4\log n$.
We introduce the following notation for convenience.
We 
  let $I=\stars(\rho^{(t)})$, with $m=|I|$.
Given the random subset $A_t$ of $I$ found in Step 2(a), we let $B_t=I\setminus A_t$.
Also note that $\smash{m\ge 1/\tau^2}$. 

We start with the lemma for Step 2(a), which states that with high probability,
  $A_t$ is large and splits the weights (both positive and negative) in $I$ evenly.

\begin{lemma} \label{lem:a}
Assume that $f_{\rho^{(t)}}$ is a $(\tau,\eps,\lambda(1-t/(8\log n))$-non-monotone LTF. With probability~at least $\smash{1-\exp(-\Omega(\log^2 n))}$, $A_t$ and $B_t$ satisfy $|A_t|\ge m/{4}$,
\begin{equation}\label{eq:1}
\frac{1}{2}-\frac{1}{32\log n}\le \frac{\sum_{i\in A_t} w_i^2}{\sum_{i\in I} w_i^2}\le \frac{1}{2}+\frac{1}{32\log n}
\ \quad\text{and}\ \quad 
\frac{\sum_{i\in B_t: w_i<0} w_i^2}{\sum_{i\in B_t}\hspace{-0.03cm} w_i^2}\ge \lambda\left(1-\frac{t+1}{8\log n}\right).
\end{equation}
\end{lemma}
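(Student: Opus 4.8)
\textbf{Proof plan for Lemma \ref{lem:a}.}
The plan is to treat each of the three claimed bounds as a concentration statement for a sum of independent random variables, one for each index $i\in I$, obtained by the independent coin flips that decide membership in $A_t$. Throughout I will write $W=\sum_{i\in I}w_i^2$, $W^-=\sum_{i\in I:\,w_i<0}w_i^2$, and recall that $(\tau,\eps,\lambda')$-non-monotonicity with $\lambda'=\lambda(1-t/(8\log n))$ gives us two facts to exploit: the $\tau$-weight-regularity $\max_{i\in I}w_i^2\le \tau^2 W$, and the significant negative weight $W^-\ge \lambda' W$. Since $t\le 4\log n$, we have $\lambda'\ge \lambda/2$, so $\lambda'$ is bounded below by a fixed polynomial in $\eps$, and the key quantitative input will be that $\tau\ll \lambda$ (indeed $\tau=\lambda\eps/\log^2 n$ by \eqref{parameters}), i.e.\ the per-coordinate ``granularity'' $\tau^2$ is far smaller than $1/\log^2 n$ times the relevant totals.

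First I would handle $|A_t|\ge m/4$: this is a Chernoff bound on $|A_t|=\sum_{i\in I}\mathbf{1}[i\in A_t]$, a sum of $m$ independent $\mathrm{Bernoulli}(1/2)$ variables with mean $m/2$, and since $m\ge 1/\tau^2$ is large, $\Pr[|A_t|<m/4]\le \exp(-\Omega(m))\le \exp(-\Omega(1/\tau^2))$, which is certainly $\exp(-\Omega(\log^2 n))$ given the value of $\tau$. Next, for the middle inequality of \eqref{eq:1}, I would apply Hoeffding's inequality to $\bX:=\sum_{i\in I}\mathbf{1}[i\in A_t]\,w_i^2$, whose entries lie in $[0,w_i^2]$ with $\sum_i (w_i^2)^2\le (\max_i w_i^2)\cdot W\le \tau^2 W^2$, and whose mean is $W/2$. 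Hoeffding then gives
\[
\Pr\Bigl[\,\bigl|\bX-\tfrac12 W\bigr|\ge \tfrac{W}{32\log n}\,\Bigr]\le 2\exp\!\Bigl(-\Omega\Bigl(\tfrac{(W/\log n)^2}{\tau^2 W^2}\Bigr)\Bigr)=2\exp\!\Bigl(-\Omega\Bigl(\tfrac{1}{\tau^2\log^2 n}\Bigr)\Bigr),
\]
which is $\exp(-\Omega(\log^2 n))$ because $\tau^2\log^4 n = \lambda^2\eps^2 = \poly(\eps)$ is a constant independent of $n$. Dividing through by $W$ gives the stated two-sided bound on $\sum_{i\in A_t}w_i^2/\sum_{i\in I}w_i^2$.

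For the last inequality of \eqref{eq:1}, the cleanest route is to control the \emph{negative} mass on each side. Let $\bY:=\sum_{i\in B_t:\,w_i<0}w_i^2=\sum_{i\in I:\,w_i<0}\mathbf{1}[i\in B_t]\,w_i^2$, with mean $W^-/2\ge \lambda'W/2$ and, as above, sub-Gaussian parameter at most $\tau^2 W^2$; and recall $\bZ:=\sum_{i\in B_t}w_i^2=W-\bX$ concentrates around $W/2$ by the middle bound just proved. By Hoeffding, with probability $1-\exp(-\Omega(1/(\tau^2\log^2 n)))$ we simultaneously have $\bY\ge W^-/2 - W/(64\log n)$ and $\bZ\le W/2 + W/(32\log n)$. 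It then suffices to check the deterministic inequality
\[
\frac{W^-/2 - W/(64\log n)}{\,W/2 + W/(32\log n)\,}\ \ge\ \lambda\Bigl(1-\frac{t+1}{8\log n}\Bigr),
\]
using $W^-\ge \lambda(1-t/(8\log n))W$; this reduces, after dividing numerator and denominator by $W/2$, to a comparison of the form $\frac{\lambda(1-t/(8\log n)) - 1/(32\log n)}{1+1/(16\log n)}\ge \lambda(1-(t+1)/(8\log n))$, i.e.\ to showing that the slack $\lambda/(8\log n)$ gained in the numerator exponent dominates the $O(1/\log n)$ losses from the $-1/(32\log n)$ term and from the denominator being slightly above $1$ — this holds because $\lambda=\poly(\eps)$ is a fixed constant, so $\lambda/(8\log n)$ beats $c/\log n$ for the relevant absolute constant $c$ once we are careful with the constants (and one may, if needed, absorb a harmless constant factor into the $1/(32\log n)$ slack appearing in the middle inequality). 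Finally I would union-bound the three failure events, each of probability $\exp(-\Omega(\log^2 n))$, to conclude.

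The main obstacle is not any single concentration bound — those are routine Hoeffding/Chernoff applications whose error terms are controlled precisely because $\tau^2\log^4 n=\poly(\eps)$ is an absolute constant — but rather the bookkeeping in the last step: one must verify that the $\lambda/(8\log n)$ additive gain in the ``$\lambda$-budget'' per stage genuinely absorbs all the additive $O(1/\log n)$ deviations incurred in splitting both $W$ and $W^-$, and in particular that replacing the exact halves by $(1/2)\pm \Theta(1/\log n)$ does not erode the budget. Choosing the slack constants ($32$, $64$, $16$, etc.) consistently so that this inequality is provably true for all $t\le 4\log n$ is the only delicate part, and it is purely a matter of arithmetic once the concentration statements are in hand.
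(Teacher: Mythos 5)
Your decomposition into three Hoeffding/Chernoff events and a union bound is exactly the paper's approach, and your handling of $|A_t|\ge m/4$ and of the middle inequality of (\ref{eq:1}) is correct. However, there is a genuine gap in the third step: you controlled $\bY=\sum_{i\in B_t:w_i<0}w_i^2$ with the deviation $W/(64\log n)$, so after dividing by $W/2$ your numerator loses an \emph{absolute}-scale $1/(32\log n)$. Your reconciliation inequality
\[
\frac{\lambda(1-t/(8\log n)) - 1/(32\log n)}{1+1/(16\log n)}\ \ge\ \lambda\Bigl(1-\tfrac{t+1}{8\log n}\Bigr)
\]
then reduces (to leading order) to $\lambda/(16\log n)\ge 1/(32\log n)$, i.e.\ $\lambda\ge 1/2$; but $\lambda=\eps^2/(36\ln(12/\eps))\le 1/36$ is a small $\poly(\eps)$ quantity, not a universal constant, so this is false. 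The claim that ``$\lambda=\poly(\eps)$ is a fixed constant, so $\lambda/(8\log n)$ beats $c/\log n$'' is incorrect: the per-stage budget $\lambda/(8\log n)$ is $\lambda$-scale, and it cannot absorb a $1$-scale loss of $1/(32\log n)$. The parenthetical suggestion to shrink the $1/(32\log n)$ slack in the \emph{middle} inequality does not help either, since that slack ends up in the denominator, where it is harmless (it gets multiplied by the small $\lambda$ in the numerator anyway).

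The fix, which is what the paper actually does, is to apply Hoeffding to $\bY$ with a $\lambda$-scale deviation, e.g.\ $\bY\ge W^-/2 - \lambda W/(64\log n)$, incurring failure probability $\exp(-\Omega(\lambda^2/(\tau^2\log^2 n)))$; this is still $\exp(-\Omega(\log^2 n))$ because the parameters in (\ref{parameters}) give $\tau\le\lambda/\log^2 n$. With the $\lambda$-scale deviation the numerator becomes $\lambda\bigl((1-t/(8\log n)) - 1/(32\log n)\bigr)$, the $\lambda$ factors out, and the required inequality becomes
\[
(1-t/(8\log n)) - 1/(32\log n)\ \ge\ \bigl(1-(t+1)/(8\log n)\bigr)\bigl(1+1/(16\log n)\bigr),
\]
whose left-minus-right side is $1/(8\log n)-1/(32\log n)-1/(16\log n)+O(1/\log^2 n)=1/(32\log n)+O(1/\log^2 n)>0$. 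In short: the concentration machinery is fine, but you must match the deviation scale of the negative-mass sum to the $\lambda$-scale budget in the lemma statement, and that is precisely where your argument breaks.
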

\begin{proof}
We consider the three events separately and then apply a union bound.

First by Chernoff bound, $|A_t|\ge m/4$ holds with probability at least $1-e^{-\Omega(m)}$.

Next for the first inequality in (\ref{eq:1}), assume without loss of generality that 
  $\sum_{i\in I} w_i^2=1$ {(as $f_{\rho^{(t)}}$ cannot
  be all-$1$ or all-$(-1)$)}. By Hoeffding bound the probability
  that it does not hold is at most 
$$
2 \exp\left(-\Omega\left(\frac{1/ \log^2 n }{\sum_{i\in I} w_i^4}\right)\right).$$
Since $f_{\rho^{(t)}}$ is $\tau$-weight-regular (over $I$), we have that $|w_i|\le \tau$ for all $i\in I$ and thus,
$$
\sum_{i\in I} w_i^4 \le \tau^2\cdot \sum_{i\in I} w_i^2=\tau^2.
$$
As a result, the second inequality holds with probability at least $1-\exp({-\Omega(1/(\tau^2\log^2 n))})$.

For the last inequality, note that 
  $\sum_{i\in I:w_i<0} w_i^{2}\ge \lambda(1-t/(8\log n))$.
Similarly by Hoeffding, 
$$
\Pr\left[\sum_{i\in B_t:w_i<0}w_i^2 <\left(\frac{\lambda}{2}\right)\left(1-\frac{t+0.5}{8\log n}\right)\right]
\le \exp\left(-\Omega\left(\frac{\lambda^2/ \log^2n }{\sum_{i\in I:w_i<0}w_i^4}\right)\right)\le \exp\left(-\Omega\left( \log^2n\right)\right).
$$
Combining the above with the analysis of the first inequality in (\ref{eq:1}), 
  the last inequality holds with probability at least $1-\exp(-\Omega(\log^2 n))$.
The lemma follows from a union bound.
\end{proof}

We give {\tt Find-Balanced-Restriction} in Figure \ref{fig:findbalancedrestriction} and show the following lemma for Step 2(b).  {(The {\tt Find-Balanced-Restriction} subroutine is similar to Algorithm~1 
  of \cite{RonServedio15}, and Lemma \ref{lem:FBR} and its proof are reminiscent of Lemma~7 of \cite{RonServedio15}; 
  however, because of some technical differences we cannot directly apply those results, so we give a self-contained presentation here.)} 

\begin{lemma} \label{lem:FBR}
Assume that $f_{\rho^{(t)}}$ is a $(\tau,\eps,\lambda(1-t/(8\log n)))$-non-monotone LTF,
  and sets $A_t$ and $B_t$ satisfy $|A_t|\ge m/4$ and \emph{(\ref{eq:1})}.
With probability at least $1/(100\log n)$, {\tt Find-Balanced-Restriction} 
  outputs a $\rho'$ with
  $\supp(\rho')=\supp(\rho^{(t)})\cup A_t$ such that $\rho'$ extends $\rho^{(t)}$ and
  $f_{\rho'}$ is $0.96$-balanced.
\end{lemma}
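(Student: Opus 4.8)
The plan is to establish the (stronger, and downstream-needed) bound that {\tt Find-Balanced-Restriction} succeeds with probability at least $1-1/(100\log n)$, which a fortiori gives the stated bound. This reduces to three things: (i) a single random assignment $\rho^{*}\in\{-1,1\}^{A_t}$ yields $\rho'=\rho^{(t)}\rho^{*}$ with $|\E[f_{\rho'}]|\le 0.02$ with probability at least some fixed polynomial in $\eps$; (ii) over the $C_{BR}\log n/\eps^3$ iterations such a ``good'' draw occurs at least once except with probability $\le 1/(200\log n)$; and (iii) all the {\tt Estimate-Mean} calls are $\pm0.01$-accurate except with probability $\le 1/(200\log n)$, so that whatever restriction the procedure outputs is genuinely $0.96$-balanced.

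\noindent\textbf{The effective threshold.} Normalize $\sum_{i\in I}w_i^2=1$ (valid since the $\eps$-balanced LTF $f_{\rho^{(t)}}$ is not constant) and write $f_{\rho^{(t)}}(x)=\sign(\sum_{i\in I}w_ix_i-\theta')$ for a fixed $\theta'\in\R$. For a draw $\rho^{*}$, the restricted function on $\{-1,1\}^{B_t}$ is $f_{\rho'}(y)=\sign\!\big(\sum_{i\in B_t}w_iy_i-(\theta'-c)\big)$ where $c:=\sum_{i\in A_t}w_i\rho^{*}(i)$. Put $\sigma_A^2=\sum_{i\in A_t}w_i^2$ and $\sigma_B^2=\sum_{i\in B_t}w_i^2$; by (\ref{eq:1}) both lie in $[\tfrac12-\tfrac1{32\log n},\tfrac12+\tfrac1{32\log n}]$, so $\sigma_A,\sigma_B=\Theta(1)$, and $|w_i|\le\tau$ for all $i\in I$ by $\tau$-weight-regularity. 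First I would bound $|\theta'|$ using $\eps$-balancedness: by Berry--Ess\'een (Theorem~\ref{BerryEsseen}), $|1-2\Phi(\theta')|\le|\E[f_{\rho^{(t)}}]|+2\tau\le1-\eps+2\tau$, hence $\Pr_{\bz\sim\calN(0,1)}[\bz\ge|\theta'|]\ge\eps/2-\tau\ge\eps/4$, so the Gaussian tail bound $\Pr[\bz\ge t]\le e^{-t^2/2}$ from (\ref{gatail}) gives $|\theta'|\le\sqrt{2\ln(4/\eps)}$.

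\noindent\textbf{The main point: anti-concentration of $c$ near $\theta'$.} Fix a small absolute constant $\beta$ (e.g.\ $\beta=1/50$). I would show that for a single random $\rho^{*}$,
\[
\Pr_{\rho^{*}}\!\big[\,|c-\theta'|\le\beta\sigma_B\,\big]\ \ge\ \Omega(\eps^{3}).
\]
Here $c$ is a sum of independent $\pm w_i$ ($i\in A_t$) with $|w_i|\le\tau$ and variance $\sigma_A^2=\Theta(1)$, so Berry--Ess\'een puts the c.d.f.\ of $c/\sigma_A$ within $\tau/\sigma_A=O(\tau)$ of $\Phi$. Rescaling by $\sigma_A$, the interval $[\theta'-\beta\sigma_B,\theta'+\beta\sigma_B]$ becomes an interval of length $\Theta(\beta)$ all of whose points $t$ satisfy $|t|=O(\sqrt{\ln(1/\eps)})$ (using $|\theta'|=O(\sqrt{\ln(1/\eps)})$ and $\sigma_A,\sigma_B=\Theta(1)$); lower-bounding $\int\phi$ over it by its length times its value at the farther endpoint, and using $\phi(t)=\tfrac1{\sqrt{2\pi}}e^{-t^2/2}$, gives $\Pr_{\rho^{*}}[\,|c-\theta'|\le\beta\sigma_B\,]\ge\Omega(\eps^3)-O(\tau)$. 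Since $\lambda=\eps^2/(36\ln(12/\eps))$ we have $\tau=\lambda\eps/\log^2n\le\eps^3/(36\log^2n)=o(\eps^3)$ once $n$ exceeds an absolute constant---and $n$ does, since {\tt Find-Balanced-Restriction} is invoked only when the number of live variables is at least $1/\tau^2=\poly(\log n,1/\eps)$---so the displayed probability is $\Omega(\eps^3)$. I expect this anti-concentration estimate to be the crux: it is exactly where all the hypotheses enter together---$\eps$-balancedness keeps the effective threshold $\theta'$ within $O(\sqrt{\ln(1/\eps)})$ standard deviations of the mean, $\tau$-weight-regularity licenses Berry--Ess\'een, and (\ref{eq:1}) pins $\sigma_A,\sigma_B$ to $\Theta(1)$ so that the Gaussian density at distance $|\theta'|/\sigma_A$ is only polynomially small in $\eps$.

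\noindent\textbf{From ``$c$ near $\theta'$'' to balancedness, and finishing.} If $|c-\theta'|\le\beta\sigma_B$, then Berry--Ess\'een applied to $\big(\sum_{i\in B_t}w_iy_i\big)/\sigma_B$ (error $O(\tau)$) gives $\E[f_{\rho'}]=1-2\Phi\!\big((\theta'-c)/\sigma_B\big)$ up to $\pm O(\tau)$, and since $|1-2\Phi(s)|=2\big|\int_0^s\phi\big|\le2|s|\phi(0)$ we get $|\E[f_{\rho'}]|\le2\beta\phi(0)+O(\tau)\le0.02$ for $\beta=1/50$ and $\tau$ small; in particular $f_{\rho'}$ is $0.96$-balanced, and {\tt Estimate-Mean}$(f_{\rho'},0.01,\delta)$---if $\pm0.01$-accurate---returns a value of absolute value at most $0.03$, so the procedure outputs $\rho'$. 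To conclude, let $T=C_{BR}\log n/\eps^3$. With $\delta=\eps^3/(200C_{BR}\log^2n)$, the probability that some {\tt Estimate-Mean} call fails is at most $T\delta=1/(200\log n)$; call the complementary event $\mathcal{E}_{\mathrm{acc}}$. By the previous two steps each of the $T$ independent draws is ``good'' with probability $p\ge\Omega(\eps^3)$, so $\Pr[\text{no good draw}]\le(1-p)^T\le e^{-pT}\le1/(200\log n)$ for $C_{BR}$ a large enough constant; call the complement $\mathcal{E}_{\mathrm{good}}$. On $\mathcal{E}_{\mathrm{acc}}\cap\mathcal{E}_{\mathrm{good}}$, which has probability at least $1-1/(100\log n)$: the procedure outputs some restriction (at the first good draw, if not earlier), and any restriction $\rho'$ it outputs had its {\tt Estimate-Mean} value $\le0.03$, hence---by $\mathcal{E}_{\mathrm{acc}}$---true $|\E[f_{\rho'}]|\le0.04$, i.e.\ $f_{\rho'}$ is $0.96$-balanced; moreover $\supp(\rho')=\supp(\rho^{(t)})\cup A_t$ and $\rho'$ extends $\rho^{(t)}$ by construction of $\rho'=\rho^{(t)}\rho^{*}$. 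This proves the lemma.
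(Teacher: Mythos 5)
Your proof follows essentially the same route as the paper's: anti-concentrate the partial sum over $A_t$ near the effective threshold $\theta'$ with probability $\Omega(\eps^3)$ per draw, argue that a landed draw yields a restriction with $|\E[f_{\rho'}]|\le 0.02$, and combine the $T=C_{BR}\log n/\eps^3$ iterations with the accuracy of {\tt Estimate-Mean} via a union bound. You also correctly observed that the stated ``at least $1/(100\log n)$'' is a typo for ``at least $1-1/(100\log n)$,'' which is what the paper actually proves (via events $E_1,E_2$) and what is needed in Lemma~\ref{key-lemma}. The bound $|\theta'|\le\sqrt{2\ln(4/\eps)}$ you derive via Berry--Ess\'een can also be obtained directly from Hoeffding, as the paper does, but both are fine.

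One technical caution on the anti-concentration step. You lower-bound the Gaussian mass of $[\theta'-\beta\sigma_B,\,\theta'+\beta\sigma_B]$ by (length)$\cdot\phi\big((\theta'+\beta\sigma_B)/\sigma_A\big)$. But $\phi\big((\theta'+\beta\sigma_B)/\sigma_A\big)$ carries an extra factor of $e^{-\theta'\beta\sigma_B/\sigma_A^2}=e^{-O(\sqrt{\ln(1/\eps)})}$ relative to $\phi(\theta'/\sigma_A)$, so what you actually get is $\Omega\big(\eps^3\cdot e^{-O(\sqrt{\ln(1/\eps)})}\big)-O(\tau)$, not $\Omega(\eps^3)-O(\tau)$. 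The sub-polynomial loss cannot be absorbed into the constant $C_{BR}$ uniformly in $\eps$ and $n$ (e.g.\ when $\eps$ shrinks polynomially with $n$, $\log n\cdot e^{-O(\sqrt{\ln(1/\eps)})}\to 0$, so $e^{-pT}$ no longer stays below $1/(200\log n)$). The paper sidesteps this by passing to the one-sided sub-interval $[\theta'-\theta^*,\theta']$ (using $\theta'\ge 0$ WLOG, as you also assume): every point of that interval has absolute value at most $\max(\theta',\theta^*)=\max(\theta',O(1))$, so the density there is $\Omega(1)\cdot\min\big(e^{-\theta'^2/(2\alpha)},\,O(1)\big)=\Omega(\eps^3)$ with no cross term. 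Making that one-line substitution aligns your argument completely with the paper's and closes the gap.
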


\begin{proof}
For convenience we use $f'$ to denote $f_{\rho^{(t)}}$, $w'$ to denote the
  weight vector $w$ but restricted~on $I$, and $\theta'$ to denote the new threshold, i.e.,
$$
\theta'=\theta-\sum_{i\in \supp(\rho^{(t)})} \rho^{(t)}(i)\cdot w_i.
$$
Without loss of generality we assume that $\sum_{i\in I} w_i'^2=1$. \red{We may additionally assume that $\theta'\ge 0$. This assumption is without loss of generality, because 1) if $\rho'$ is a 0.96-balanced restriction when $-\theta' \ge 0$, then $-\rho'$ is a 0.96-balanced restriction for $\theta' \le 0$, and 2) {\tt Find-Balanced-Restriction} will test the only take into account the absolute value of the output of {\tt Estimate-Mean}.}
Let $$\alpha=\sum_{i\in A_t} w_i'^2\quad\ \text{and}\ \quad\beta=\sum_{i\in B_t} w_i'^2.$$
We also use $a=b\pm c$ to denote the inequalities $b-c\le a\le b+c$.
Then from (\ref{eq:1}) we have~that $\alpha,\beta=1/2\pm O(1/\log n)$.
By the assumption of the lemma, $f'$ is 
  $\tau$-weight-regular and $\eps$-balanced.

For the analysis we define two events $E_1$ and $E_2$.
Here $E_1$ denotes the event that every call to {\tt Estimate-Mean}
  returns a number $a$ such that $|a-\E[f_{\rho'}]|\le 0.01$.
By a union bound, this happens with probability $1-1/(200\log n)$.
Let $E_2$ be the event that one of the restrictions $\rho^*$ drawn
  has $f'_{\rho^*}$ being $0.98$-balanced.
When $E_1$ and $E_2$ both occur, the subroutine 
  outputs a restriction $\rho'$ such~that $f_{\rho'}$ is $0.96$-balanced.
In the rest of the proof we show that $E_2$ happens with high probability.

To analyze the probability of $f'_{\rho^*}$  
  being $0.98$-balanced,
we use $\bx_i$ to denote an independent and unbiased random $\{{-1},1\}$-variable 
  for each $i\in I$, and let
$$
\bx_A=\sum_{i\in A_t} \bx_i\cdot w_i',\quad\bx_B=\sum_{i\in B_t} \bx_i\cdot w_i'\quad\text{and}\quad \bx=\bx_A+\bx_B.
$$
By Hoeffding bound and the assumption that $f'$ is $\eps$-balanced, we have
\begin{equation}\label{erik}
\red{2\eps}=\Pr[\bx\ge \theta']\le \exp(-\theta'^2/2).
\end{equation}

Using Berry--Ess\'een 
$\bx_A+\bx_B$ is $O(\tau)$-close to 
  a standard $\calN(0,1)$ Gaussian random variable,~denoted by  $\calG$, $\bx_A$ is $O(\tau)$-close to
  $\sqrt{\alpha}\hspace{0.03cm}\calG$, and $\bx_B$ is $O(\tau)$-close to $\sqrt{\beta}\hspace{0.03cm}\calG$.
  
Let $\theta^*>0$ be the threshold such that $\smash{\Pr[\hspace{0.03cm}|\sqrt{\beta}
  \hspace{0.03cm}\calG|\le \theta^*\hspace{0.03cm}]=0.01}$. Then 
$$
\Pr\big[\hspace{0.015cm}f'_{\rho^*}\ \text{is $0.98$-balanced}\hspace{0.03cm}\big]\ge \Pr\big[\bx_A\in [\theta'-\theta^*,\theta'+\theta^*]\big].
$$
This is because, for any number $x_A\in [\theta'-\theta^*,\theta'+\theta^*]$, we have
$$
0.495-O(\tau)\le \Pr\big[\bx_B\ge \theta'-x_A\big]=\Pr\left[\sqrt{\beta}\hspace{0.03cm}\calG\ge \theta'-x_A\right]\pm O(\tau)\le 0.505
  +O(\tau),
$$
in which case the function $f'_{\rho^*}$ is 
  $0.99-O(\tau)=0.98$-balanced.
To bound $\Pr\left[\hspace{0.03cm}\bx_A\in [\theta'-\theta^*,\theta'+\theta^*]\hspace{0.03cm}\right]$,
  we note that $\theta'\ge 0$ (by assumption) and $\theta^*=\Omega(1)$ (by our choice of $\theta^*$ and $\beta> 1/3$).
As a result,
$$
\Pr\big[\bx_A\in [\theta'-\theta^*,\theta']\big]\ge \Pr\big[
\sqrt{\alpha}\hspace{0.03cm}\calG\in [\theta'-\theta^*,\theta']\big]-O(\tau)=\Omega(1)\cdot
  \Omega(\eps^3)-O(\tau)=\Omega(\eps^3),
$$
where we used $\alpha > {1}/{3}$ by (\ref{eq:1}), $\tau=o(\eps^3)$, and $\exp(-\theta'^2/2)=\Omega(\eps)$ from (\ref{erik}) to obtain
$$
\min\Big(\exp\left(-(\theta^*)^2/(2\alpha)\right),\exp\left(-\theta'^2/(2\alpha)\right)\Big)=\Omega(\eps^3).
$$

As a result,  a random restriction $\rho^*$ is $0.98$-balanced with probability at least $\Omega(\eps^3)$.
Thus with probability $1-1/n$ (by choosing a large enough constant $C_{BR}$),  {\tt Find-Balanced-Restriction}
  gets such a restriction that would  pass the {\tt Estimate-Mean} test.
By a union bound on the two events~$E_1$ and $E_2$, {\tt Find-Balanced-Restriction}~returns a $0.96$-balanced
  $\rho'$ with probability at least
$$
1-1/(200\log n)-1/n> 1-1/(100\log n).
$$
This finishes the proof of the lemma.
\end{proof}

\begin{figure}[t]
\begin{framed}
\noindent Subroutine {\tt Maintain-Regular-and-Balanced}$(f,
  \rho',\eps )$\vspace{-0.2cm} 
\begin{flushleft}
\noindent {\bf Input:} oracle access to function $f\colon \{-1,1\}^n \to \{-1,1\}$, restriction $\rho'$, parameter $\eps>0.$

\noindent {\bf Output:} ``non-monotone,'' ``monotone,'' or a restriction $\eta$
  with $\supp(\eta)\subseteq B_t$ {extending $\rho'$}. 
  
  \begin{enumerate}
\item \vskip -.02in Let $C_M>0$ be a large enough constant; let 
  $\tau',\delta$ and $\tau^*$ be the following parameters:
$$ 
 \tau'= (\tau \eps /C_M)^2\cdot  {\sqrt{\lambda}},\quad {\delta = \tau'^2/(C_M\log n)\quad\text{and}\quad  {\tau^*=\tau'/\sqrt{\lambda}}.}
$$
\item \vskip -.02in Call {\tt Find-Hi-Influence-Vars}$(f,\rho', \tau',\delta )$ 
and let $H$ be the set that it returns. \\
If $\smash{|H|> 4/\tau'^2}$, halt and return ``monotone.''

\item \vskip -.02in For each $i \in H$, call {\tt Check-Weight-Positive}$(f,\rho',i,\tau'/2,\delta )$.  If any  call returns ``negative'' then halt and output ``non-monotone;'' if any call returns ``fail'' then halt and output ``monotone;''  otherwise (every call returns ``positive'') continue to Step~4.

\item \vskip -.02in  {Repeat {$C_M\log n/\sqrt{\lambda}$} times: 
\begin{flushleft}
\begin{enumerate}\item[]\vskip -.1in Draw a restriction $\eta$ with support $H$, by selecting a random assignment from $\{-1,1\}^H$.
Call {\tt Check-Fourier-Regular}$\smash{(f_{\rho'\eta},[n]\setminus \supp(\rho'\eta),\sqrt{C_M\tau^*},\delta/2 )}$
and {\tt Estimate-Mean}$(f_{ {\rho' \eta}},\eps/6,\delta/2 )$.
\end{enumerate}
\end{flushleft}
\vskip -.08in Halt and output the first restriction $\eta$ where {\tt Check-Fourier-Regular} outputs ``regular'' and {\tt Estimate-Mean}
  returns a number of absolute value $\le 1-7\eps/6$. If the procedure\\ fails to find such a restriction $\eta$, halt and output ``monotone.''}

\end{enumerate}\end{flushleft}
\vskip -.1in
\end{framed}\vspace{-0.12cm}
\caption{Subroutine {\tt Maintain-Regular-and-Balanced}.  Lemma \ref{lem:lastlem} assumes that $f_{\rho^{(t)}}$ is an\\  
  $\smash{(\tau,\eps,\lambda(1-t/(8\log n)))}$-non-monotone LTF,   
  $|A_t|\ge m/4$ and (\ref{eq:1}),
  and $f_{\rho'}$ is $0.96$-balanced.}\label{fig:maintain}
\end{figure}

For Step~2(c) of {\tt Main-Procedure}, the subroutine {\tt Maintain-Regular-and-Balanced} is given in Figure \ref{fig:maintain}.
It is very similar to {\tt Regularize-and-Balance} except the number of rounds in Step~4 and the choice of parameters $\tau'$ and $\delta$.
We show the following lemma.

\begin{lemma}\label{lem:lastlem}
Assume that $f_{\rho^{(t)}}$ is a $(\tau,\eps ,\lambda(1-t/(8\log n)))$-non-monotone LTF,
  $A_t$ and~$B_t$~\mbox{satisfy} both $|A_t|\ge m/4$ and \emph{(\ref{eq:1})}, and
  $f_{\rho'}$ is $0.96$-balanced.
Then with probability at least $1-1/(100\log n)$, 
 {\tt Maintain-Regular-and-Balance} 
  returns either ``non-monotone,'' or a restriction $\eta$ with $\smash{\supp(\eta)}$ $\smash{\subseteq B_t}$ such that
  $\smash{f_{\rho^{(t+1)}}}$, where $\smash{\rho^{(t+1)}=\rho' \eta}$, is 
  $\smash{(\tau,\eps,\lambda(1-(t+1)/(8\log n)))}$-non-monotone.
\end{lemma}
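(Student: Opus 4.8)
The plan is to follow the proof of Lemma~\ref{lem:regularize-and-balance} almost verbatim, substituting the structural input ``$B_t$ carries a $\lambda(1-(t+1)/(8\log n))$ fraction of its squared weight on negative coordinates'' (from (\ref{eq:1})) and the assumed $0.96$-balancedness of $f_{\rho'}$ for the roles played there by far-from-monotonicity and Lemma~\ref{lem:restriction}. Write $f':=f_{\rho'}=\sign\big(\sum_{i\in B_t}w_ix_i-\theta'\big)$; by (\ref{eq:1}) we have $\sum_{i\in B_t}w_i^2\ge(\tfrac12-o(1))\sum_{i\in I}w_i^2$, so $\tau$-weight-regularity of $f_{\rho^{(t)}}$ gives that $f'$ is $O(\tau)$-weight-regular, and the negative coordinates of $B_t$ carry squared weight at least $\lambda(1-(t+1)/(8\log n))\sum_{i\in B_t}w_i^2$. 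As in Lemma~\ref{lem:regularize-and-balance}, with probability $1-\delta$ the Step-2 call to {\tt Find-Hi-Influence-Vars} returns an $H\subseteq B_t$ with $|H|\le 4/\tau'^2$ that contains $\{i:|\widehat{f'}(i)|\ge\tau'\}$ and is disjoint from $\{i:|\widehat{f'}(i)|<\tau'/2\}$; condition on this and split into $E_0'$ (some $i\in H$ has $w_i<0$) and $E_0$ (every $i\in H$ has $w_i>0$). Under $E_0'$, by Lemma~\ref{lem:check-weight-positive} and a union bound over the at most $4/\tau'^2$ Step-3 calls, with probability $\ge 1-4\delta/\tau'^2$ the call on the offending index returns ``negative'' and the procedure outputs ``non-monotone,'' which already satisfies the lemma; so it remains to treat $E_0$.

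Under $E_0$, let $E_1$ be the event that every Step-3 call returns ``positive'' (so $\Pr[E_1\mid E_0]\ge 1-4\delta/\tau'^2$) and, given $E_1$, let $E_2$ be the event that all $2C_M\log n/\sqrt\lambda$ calls in Step~4 to {\tt Check-Fourier-Regular} and {\tt Estimate-Mean} return correct answers (so $\Pr[\neg E_2\mid E_0\wedge E_1]\le C_M\log n\,\delta/\sqrt\lambda$, negligible by the choice of $\delta$), and let $E_3$ be the event that some $\eta$ drawn in Step~4 has $f_{\rho'\eta}$ both $(4\eps/3)$-balanced and $(C_M\tau^*/4)$-Fourier-regular. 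I will establish (i) conditioned on $E_0\wedge E_1\wedge E_2\wedge E_3$ the procedure returns a restriction $\eta$ with $\supp(\eta)=H\subseteq B_t$ such that $f_{\rho^{(t+1)}}$ ($\rho^{(t+1)}=\rho'\eta$) is $(\tau,\eps,\lambda(1-(t+1)/(8\log n)))$-non-monotone, and (ii) $\Pr[E_3\mid E_0\wedge E_1]\ge 1-1/n$. For (i): $E_3$ exhibits an $\eta$ that, by $E_2$, passes both tests, and any $\eta$ the procedure returns is then (by $E_2$) such that $f_{\rho'\eta}$ is $\sqrt{C_M\tau^*}=\tau\eps/\sqrt{C_M}$-Fourier-regular and $\eps$-balanced; by Theorem~\ref{Dzindzalieta} and a large enough $C_M$ this makes $f_{\rho'\eta}$ $\tau$-weight-regular (condition $(a)$), condition $(b)$ is immediate, and $(c)$ holds because $\stars(\rho^{(t+1)})=B_t\setminus H$ and $E_0$ forces all negative coordinates of $B_t$ to lie in $B_t\setminus H$, so their squared weight is still at least $\lambda(1-(t+1)/(8\log n))\sum_{i\in B_t\setminus H}w_i^2$.

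The heart of the argument is (ii): lower-bounding by $\Omega(\sqrt\lambda)$ the probability that a single random $\eta\in\{-1,1\}^H$ makes $f_{\rho'\eta}$ simultaneously $(4\eps/3)$-balanced and $(C_M\tau^*/4)$-Fourier-regular; given this, the $C_M\log n/\sqrt\lambda$ independent rounds of Step~4 all fail with probability at most $(1-\Omega(\sqrt\lambda))^{C_M\log n/\sqrt\lambda}\le 1/n$ for $C_M$ large, which is precisely why Step~4 runs $\Theta(\log n/\sqrt\lambda)$ times. Fourier-irregularity at level $C_M\tau^*/4$ occurs for at most a $4\sqrt\lambda/C_M$ fraction of $\eta$ by Proposition~\ref{prop:its-cool} applied to $f'$ with $J=H\supseteq\{i:|\widehat{f'}(i)|\ge\tau'\}$ (using $\tau^*=\tau'/\sqrt\lambda$), so it suffices to show the $(4\eps/3)$-balanced event alone has probability $\ge c\sqrt\lambda$ for a universal constant $c$. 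Writing $f_{\rho'\eta}(y)=\sign\big(\sum_{i\in B_t\setminus H}w_iy_i-\theta''\big)$ with $\theta''=\theta'-\sum_{i\in H}w_i\eta(i)$ and normalizing $\sum_{i\in B_t}w_i^2=1$: since $f'$ is $O(\tau)$-weight-regular and $0.96$-balanced, Berry--Ess\'een (Theorem~\ref{BerryEsseen}) forces $|\theta'|$ below a small absolute constant; and $\xi:=\sum_{i\in B_t\setminus H}w_i^2\ge\lambda(1-(t+1)/(8\log n))=\Omega(\lambda)$ because (by $E_0$) $B_t\setminus H$ contains all the negative squared weight of $B_t$. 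The coordinates of $\sum_{i\in B_t\setminus H}w_iy_i$ are $O(\tau/\sqrt\lambda)$-regular, so a further application of Berry--Ess\'een shows $f_{\rho'\eta}$ is $(4\eps/3)$-balanced whenever $|\theta''|\le T_\eps\sqrt\xi$, where $T_\eps:=\Phi^{-1}(1-\tfrac34\eps)$ is bounded below by a positive absolute constant for $\eps\le\tfrac12$. It then remains to check $\Pr_\eta[\,|\theta''|\le T_\eps\sqrt\xi\,]\ge c\sqrt\lambda$ by a short case analysis on $\sigma_H^2:=\sum_{i\in H}w_i^2$: if $\sigma_H$ is tiny then Hoeffding/Chebyshev makes $|\theta''|\approx|\theta'|$ small enough with constant probability; if $\sigma_H=\Omega(1)$ then Berry--Ess\'een makes $\sum_{i\in H}w_i\eta(i)$ close to a Gaussian whose density over any window of length $\Theta(\sqrt\xi)$ near the origin is $\Omega(1)$, giving probability $\Omega(\sqrt\xi)=\Omega(\sqrt\lambda)$; and the degenerate case $H=\emptyset$ is trivial, since then $f_{\rho'\eta}=f'$ is already $\tau'$-Fourier-regular and $0.96$-balanced (hence $(4\eps/3)$-balanced, as $\eps\le\tfrac12$).

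Combining (i), (ii), the bounds on $\Pr[E_1\mid E_0]$ and $\Pr[E_2\mid E_0\wedge E_1]$, the inequality $\Pr[E_0']+\Pr[E_0]\ge 1-\delta$, and the $E_0'$ analysis exactly as in Lemma~\ref{lem:regularize-and-balance}, and taking $C_M$ large, yields the claimed $1-1/(100\log n)$ probability. I expect the balance case analysis above --- in particular lining up the $\Theta(\sqrt\lambda)$-length window $T_\eps\sqrt\xi$ against the possibly off-center offset $\theta'$ while keeping every Berry--Ess\'een error term negligible --- to be the most delicate part of the proof; the rest is largely parallel to {\tt Regularize-and-Balance}.
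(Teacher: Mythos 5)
Your proposal is correct and follows essentially the same route as the paper: the $E_0',E_0,E_1,E_2,E_3$ event decomposition, the use of Proposition~\ref{prop:its-cool} for the Fourier-regularity side of $E_3$ and Theorem~\ref{Dzindzalieta} to recover $\tau$-weight-regularity, the observation that $E_0$ plus the third inequality of (\ref{eq:1}) preserves $\lambda'$-significant squared negative weights, and a Claim-\ref{hard}-style lower bound of $\Omega(\sqrt{\lambda})$ on the probability of balancedness proved by a Berry--Ess\'een window argument with a case split on $\sigma_H^2=\sum_{i\in H}w_i^2$ all mirror the paper's proof. The only cosmetic differences are that you use a two-sided window $|\theta''|\le T_\eps\sqrt{\xi}$ where the paper uses the one-sided $[\theta'-\theta^*,\theta']$, and you set the $E_3$ Fourier-regularity level to $C_M\tau^*/4$ rather than $2\tau^*/a$; both choices yield the same conclusion.
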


\begin{proof}
Most of the proof is similar to that of Lemma \ref{lem:regularize-and-balance}.
Let $\lambda'=\lambda(1-(t+1)/(8\log n)))$.

First 
  it follows from 
  Lemma \ref{lem:find-hi-influence-vars} that with probability at least $1-\delta$,   
  $H$ satisfies
\begin{equation}\label{eventevent2}
\text{If $|\hat{f}_{{\rho'}}(i)|\ge \tau'$, then $i\in H$; If $|\hat{f}_{{\rho'}}(i)|<\tau'/2$, then $i\notin H$.}
\end{equation}
When this happens, we have by Parseval $|H|\le 4/\tau'^2$,
  and the subroutine proceeds to Step 3.
   
Similar to the proof of Lemma \ref{lem:regularize-and-balance}, we
  consider two subevents: $E_0'$: $H$ satisfies (\ref{eventevent2}) but has an element
  $i$ with $w_i<0$; and $E_0$: $H$ satisfies (\ref{eventevent2}) and all $i\in H$ have $w_i>0$.
Then $\Pr[E_0']+\Pr[E_0]\ge 1-\delta$.
It is easy to show that, given $E_0'$, the~subroutine returns ``non-monotone''
  with probability~at least $1-\delta$.
So in the rest of the proof, we focus on the event of $E_0$ and  
  lowerbound the probability that the subroutine returns an
  $\eta$ such that $f_{\rho^{(t+1)}}$ is $(\tau, \eps, \lambda')$-non-monotone, conditioning on $E_0$. 
  
First we let $E_1$ be the event that {\tt Check-Weight-Positive} returns the correct answer in 
  Step~3, i.e., it returns ``positive''
for all $i\in H$. By a union bound, we have $\Pr[E_1\hspace{0.03cm}|\hspace{0.03cm}E_0]\ge   
  1-O(\delta/\tau'^2)$.
  
Conditioning on $E_0$ and $E_1$ 
  we proceed to Step 4 and introduce two events $E_2$ and $E_3$. 
Here~$E_2$ is the event that every call to {\tt Check-Fourier-Regular} and {\tt Estimate-Mean}
  returns the correct~answer (see the proof of Lemma \ref{lem:regularize-and-balance}).
By a union bound, 
$\Pr[E_2\hspace{0.03cm}|\hspace{0.03cm}E_0\land E_1]\ge 1-O(\tau'^2/\sqrt{\lambda})$.
To define $E_3$, we state the following claim to introduce the constant $a>0$, but
  delay its proof to the end.
  
\begin{claim}\label{hard}
\hspace{-0.02cm} Assume $E_0$ holds.\hspace{-0.02cm}
Then there is an absolute constant $a>0$ such that when $\eta$ is~drawn 
  uniformly at random from $\{-1,1\}^H$, $f_{\rho' \eta}$ is $(4\eps/3)$-balanced
  with probability at least $a\sqrt{\lambda}$.
\end{claim}  

Next, we~let~$E_3$ be the event that one of the restrictions $\eta$ drawn in
  Step 4 has $f_{\rho'\eta}$ being both $(4\eps/3)$-balanced and 
  $\smash{2\tau^*/a=2\tau'/(a\sqrt{\lambda})}$-Fourier-regular. 
Using Proposition \ref{prop:its-cool}, $f_{\rho'\eta}$
  is $2\tau'/(a\sqrt{\lambda})$-Fourier-regular with probability at least $\smash{1-a\sqrt{\lambda}/2}$.
It follows from Claim \ref{hard} and a~union~bound that, for an $\eta$ drawn from $\{-1,1\}^H$,
  $f_{\rho'\eta}$ is both 
  $(4\eps/3)$-balanced and $(2\tau^*/a)$-Fourier-regular 
  with probability at least $a\sqrt{\lambda}/2$ and thus,
  $\Pr[E_3\hspace{0.03cm}|\hspace{0.03cm}E_0\land E_1]\ge 1-1/n$, 
  by choosing~a large enough ${C_M}$.  

In the rest of the proof, we first show that all four events $E_0,E_1,E_2,E_3$ together are 
  sufficient to imply~that
  the subroutine returns a restriction $\eta$ such that 
  $f_{\rho' \eta}$ is $(\tau,\eps,\lambda')$-non-monotone,
  and then prove Claim \ref{hard}.
Combining everything together, we have that the subroutine returns either
  ``non-monotone'' or a restriction $\eta$ with the desired property with probability at least
$$
\Pr[E_0']\cdot (1-\delta)+\Pr[E_0]\cdot (1-O(\delta/\tau'^2))\cdot
  (1-O(\tau'^2/\sqrt{\lambda})-(1/n))>1-1/(100\log n),
$$
using $\Pr[E_0']+\Pr[E_0]\ge 1-\delta$ and by choosing a large enough constant $C_M$.

Now we prove the sufficiency of $E_0,E_1,E_2,E_3$.
Similar to the proof of Lemma \ref{lem:regularize-and-balance}, 
  we split the proof into two steps. We start with the following claim.

\begin{claim}\label{rrrr}
Suppose $f_{\rho' \eta}$ is both $(4\eps / 3)$-balanced and $(2\tau^*/a)$-Fourier-regular. Then assuming~$E_2$ and the choice of a large enough $C_M$, $\rho' \eta$ passes both tests in Step 4 and $f_{\rho' \eta}$ is $\tau$-weight-regular. 
\end{claim}

\begin{proof}
Since $f_{\rho' \eta}$ is $(2\tau^*/a)$-Fourier-regular, by choosing a large
  enough $C_M$, {\tt Check-Fourier-Regular} will output ``not regular'', given $E_2$. Since $f_{\rho' \eta}$ is $(4\eps / 3)$-balanced, {\tt Estimate-Mean} will return a value which is~at least $1 - 4\eps / 3 - \eps / 6 \geq 1 - 7\eps / 6$, given $E_2$. Therefore, $\rho'\eta$ passes both tests. On the other hand,~by Theorem~\ref{Dzindzalieta},  $f_{\rho'\eta}$ is $O(\tau^*/(a\eps))$-weight-regular, and thus $\tau$-weight-regular. 
\end{proof}

Given $E_3$, at least one of the $\eta$ sampled in Step 4 satisfies the 
 assumption of Claim \ref{rrrr}.
It~then follows from Claim \ref{rrrr} that $\rho'\eta$ of such an $\eta$ will pass both tests in Step 4.
Moreover, we~claim that $f_{\rho'\eta}$ is a $(\tau,\eps,\lambda')$-non-monotone LTF.
Here the only missing part is that $f_{\rho'\eta}$ has $\lambda'$-significant squared negative weights,
  which follows from  {(\ref{eq:1})}  and $w_i>0$ for all $i\in H$,~given $E_0$.   
(Indeed they imply that any restriction $\eta$ over $H$ has 
  this property, which we also use in the next claim.) 

We finish the proof of the sufficiency of $E_0,E_1,E_2,E_3$ with the following claim:

\begin{claim}
Assuming $E_2$ and the choice of a large enough $C_M$, any restriction $\eta$ on $H$ that passes both the {\tt Check-Fourier-Regular} test and the {\tt Estimate-Mean} test must be $(\tau, \eps, \lambda')$-non-monotone.
\end{claim}

\begin{proof}
Given $E_2$, we can conclude that a restriction $\rho'\eta$ that passes both tests satisfies that~$f_{\rho' \eta}$ is $\sqrt{C_M\tau^*}$-Fourier-regular  and $\eps$-balanced. 
As discussed earlier, $f_{\rho'\eta}$ always has $\lambda'$-significant squared negative weights.
By Theorem~\ref{Dzindzalieta} and the choice of a large enough $C_M$, $f_{\rho'\eta}$ is $\tau$-weight-regular. \end{proof}

Therefore we have shown that the four events $E_0,E_1,E_2,E_3$ together imply that the subroutine
  returns a restriction $\eta$ with $f_{\rho'\eta}$ being $(\tau,\eps,\lambda')$-non-monotone.
Finally we prove Claim \ref{hard}.
\begin{proof}[Proof of Claim~\ref{hard}]
Assume without loss of generality $\smash{\sum_{i\in B_t} w_i^2=1}$ and let $\smash{\theta'\ge 0}$ be 
  the threshold of~$f_{\rho'}$. Let
$$
\alpha=\sum_{i\in H} w_i^2\quad \text{and}\quad \beta=\sum_{i\in B_t\setminus H} w_i^2,
$$
with $\alpha+\beta=1$.
By (\ref{eq:1}) and ${E_0}$, we have $\beta\ge \lambda'=\lambda(1-(t+1)/(8\log n))=\Omega(\lambda)$ as
  $t\le 4\log n$.

For each $i\in B_t$, let $\bx_i$ denote an independent and unbiased $\{{-1},1\}$-variable 
  and let $$\bx=\sum_{i\in H} w_i\cdot \bx_i\quad \text{and}\quad \bx'=\sum_{i\in B_t\setminus H} w_i\cdot \bx_i.$$
 {Using the fact that $f_{\rho^{(t)}}$ was $\tau$-weight regular and (\ref{eq:1}), $f_{\rho'}$ is $O(\tau)$-weight regular}. Thus, by~Berry--Ess\'een, 
$\bx+\bx'$ is $\smash{O(\tau)}$-close to 
  an $\calN(0,1)$ Gaussian random variable, denoted by $\smash{\calG}$, $\smash{\bx}$ is $\smash{O(\tau/\sqrt{\alpha})}$-close to
  $\smash{\sqrt{\alpha}\hspace{0.03cm}\calG}$, and $\smash{\bx'}$ is $\smash{O(\tau/\sqrt{\beta})}$-close to $\smash{\sqrt{\beta}\hspace{0.03cm}\calG}$.
Since $f_{\rho'}$ is $0.96$-balanced, 
\begin{equation*}
\Pr\big[\calG\le \theta'\big]\le \Pr\big[\bx+\bx'\le \theta'\big]+O(\tau)\le 0.52+O(\tau)
\end{equation*}
and thus, $\theta'<0.06$.
Let $\theta^*>0$ be the threshold such that
$\Pr[\hspace{0.03cm}|\sqrt{\beta}\hspace{0.03cm}\calG|\le \theta^*
\hspace{0.03cm}]=1-3\eps/2$.

We will use the following inequality:
\begin{equation}\label{hahe}
\Pr\big[{f_{\rho'\eta}}\text{\ is $(4\eps/3)$-balanced}\big]\ge \Pr\big[\bx\in {[\theta' -\theta^*,\theta']}\big].
\end{equation}
The idea is that if $\sum_{i\in H}w_i\cdot x_i \in [\theta' - \theta^*, \theta']$, and we write 
\[ f_{\rho' \eta}(x') = \sign\left( \sum_{i \in B_{t}\setminus H} w_i\cdot x'_i - \left(-\sum_{i \in H} w_i \cdot x_i + \theta'\right)\right), \]
then the new threshold is nonnegative and at most $\theta^*$. Thus $|\E[f_{\rho' \eta}(\bx')]| = 1 - 2\Pr[f_{\rho' \eta}(\bx') = 1]$.
Using $\beta = \Omega(\lambda)$ and $\tau / \sqrt{\lambda} = o(\eps)$, we also have 
\[ \Pr[f_{\rho'\eta}(\bx') = 1] \geq \Pr[ \bx' \geq \theta^* ] \geq \Pr[\sqrt{\beta}\calG \geq \theta^*] - O(\tau/\sqrt{\beta}) = (3\eps / 4) - o(\eps).\] 
This implies that $f_{\rho' \eta}$ is $(3\eps / 2 - o(\eps))$-balanced
  and thus,  $(4\eps / 3)$-balanced. 

Finally we 
  bound the probability of $\bx\in [\theta'-\theta^*,\theta']$
  by considering the following two cases. 
\begin{enumerate}
\item[] \textbf{Case 1:} $\alpha\ge 0.02$. 
We have
$$
\hspace{-0.8cm}\Pr\big[\sqrt{\alpha}\hspace{0.03cm}\calG\in [\theta'-\theta^*,\theta']\big]
=\Pr\big[\calG\in [(\theta'-\theta^*)/\sqrt{\alpha},\theta'/\sqrt{\alpha}]\big] 
\ge \min(1/\sqrt{\alpha},\theta^*/\sqrt{\alpha})\cdot \Omega(1)=  {\Omega(\sqrt{\lambda})},
$$
as $\theta^*=\Omega(\sqrt{\lambda})$ (using our choice of $\theta^*$, $\eps<1/2$, and $\beta=\Omega(\lambda)$).
It follows that
$$
 \Pr\big[\bx\in {[\theta' -\theta^*,\theta']}\big]\ge 
\Pr\big[\sqrt{\alpha}\hspace{0.03cm}\calG\in [\theta'-\theta^*,\theta']\big]
  -O(\tau/\sqrt{\alpha})= {\Omega(\sqrt{\lambda})}.
$$

\item[] \textbf{Case 2:} $\alpha<0.02$ and thus, $\beta>0.98$.
Combining $\theta'<0.06$ 
  and $\theta^*> \sqrt{\beta}\cdot 0.31> 0.3$: 
\begin{align*}
\hspace{-0.7cm}\Pr\big[\bx\in [\theta'-\theta^*,\theta']\big]
\ge \Pr\big[\bx \in [0, \theta^* - \theta']\big] \ge \frac{1}{2} - \exp(-25(\theta^* - \theta')^2),
\end{align*}
by Hoeffding inequality.
Plugging in $\theta^* - \theta'> 0.24$, the probability above is $\Omega(1)$.
\end{enumerate}
Summarizing the two cases, $f_{\rho'\eta}$ is $(4\eps/3)$-balanced with
  probability at least ${\Omega(\sqrt{\lambda})}$. 
  \end{proof}
  
This finishes the proof of the lemma.
\end{proof}

Lemma \ref{key-lemma} follows directly from Lemmas~\ref{lem:a},~\ref{lem:FBR}, and \ref{lem:lastlem}.



\subsection{Final analysis of the algorithm}

\begin{theorem}
\label{thm:correct} 
The algorithm {\tt Mono-Test-LTF}$(f, \eps)$ correctly tests whether a given LTF is monotone or $\eps$-far from monotone.
\end{theorem}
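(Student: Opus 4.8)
The plan is to assemble Theorem~\ref{thm:correct} from the one-sidedness facts and the two phase-lemmas already in hand, with a short top-level union bound. First I would dispatch the (easy) monotone direction. Observe that {\tt Mono-Test-LTF} outputs ``non-monotone'' only in one of two situations: Step~1 reports that {\tt Regularize-and-Balance} returned ``non-monotone,'' or Step~2 reports that {\tt Main-Procedure} returned ``non-monotone.'' In both cases the verdict ultimately traces back to a call to {\tt Check-Weight-Positive} or {\tt Edge-Tester} that has exhibited an actual anti-monotone edge of $f$. Since a monotone function has no anti-monotone edge, no such edge can ever be found, so on every monotone $f$ the algorithm outputs ``monotone'' with probability $1$. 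Formally this is the conjunction of Fact~\ref{one-sided-1} (that {\tt Regularize-and-Balance} never returns ``non-monotone'' on monotone $f$), the first sentence of Lemma~\ref{lem:main-procedure} (the analogous statement for {\tt Main-Procedure}), and the one-sidedness clause of Theorem~\ref{fact:edge-tester}.

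Next I would handle the $\eps$-far direction. Suppose $f(x)=\sign(w\cdot x-\theta)$ is an LTF that is $\eps$-far from monotone, and fix the representation $(w,\theta)$ once and for all (as the analysis assumes throughout). By Lemma~\ref{lem:regularize-and-balance}, with probability at least $9/10$ the call {\tt Regularize-and-Balance}$(f,\eps)$ in Step~1 either returns ``non-monotone''---in which case {\tt Mono-Test-LTF} immediately halts and outputs ``non-monotone,'' and we are done---or returns a restriction $\rho$ for which $f_\rho$ is a $(\tau,\eps,\lambda)$-non-monotone LTF with respect to $(w,\theta)$, where $\tau,\lambda$ are the parameters fixed in~(\ref{parameters}). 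Condition on the latter outcome; then Step~2 calls {\tt Main-Procedure}$(f,\rho,\eps)$, and by the second part of Lemma~\ref{lem:main-procedure} this call returns ``non-monotone'' with probability at least $81/100$ (this is the place where Lemma~\ref{key-lemma}, iterated across the $O(\log n)$ stages of the while loop, together with the final {\tt Edge-Tester} invocation on the surviving $\poly(\log n,1/\eps)$-variable restricted function, actually does the work). Writing $p_1$ for the probability that {\tt Regularize-and-Balance} returns ``non-monotone'' and $p_2$ for the probability that it returns a good $\rho$, we have $p_1+p_2\ge 9/10$, and the overall success probability of {\tt Mono-Test-LTF} is at least $p_1+p_2\cdot(81/100)\ge (p_1+p_2)\cdot(81/100)\ge (9/10)(81/100)=729/1000>2/3$, as required.

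Finally, I would remark that combining this correctness statement with the query bookkeeping---Fact~\ref{query-comp-1} for the initialization phase, plus the per-stage query bounds inside {\tt Main-Procedure} (via Lemmas~\ref{lem:find-hi-influence-vars}, \ref{lem:check-weight-positive}, \ref{lem:FBR}, \ref{lem:lastlem}) summed over the $O(\log n)$ stages, together with the $O(m/\eps')$-query {\tt Edge-Tester} call on the $m=\poly(\log n,1/\eps)$-variable surviving function---yields the full $\poly(\log n,1/\eps)$ query complexity and hence Theorem~\ref{thm:main}. I do not expect a genuine obstacle here: the delicate parts (maintaining the $(\tau,\eps,\lambda_t)$-non-monotone invariant while the number of free variables halves each stage, and keeping the cumulative failure probability summable over $O(\log n)$ stages so the $1/(40\log n)$-type bounds in Lemma~\ref{key-lemma} add up to a constant) have already been absorbed into Lemmas~\ref{lem:regularize-and-balance}, \ref{key-lemma}, and \ref{lem:main-procedure}; the present theorem is just the clean chaining of those statements, so the only care needed is the bookkeeping of constants in the union bound above.
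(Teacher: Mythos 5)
Your proof is correct and follows essentially the same route as the paper's: the one-sidedness clause is the same observation (anti-monotone edges are the only trigger for ``non-monotone''), and your chaining of Lemma~\ref{lem:regularize-and-balance} and Lemma~\ref{lem:main-procedure} with the explicit $p_1 + p_2\cdot(81/100)\ge (9/10)(81/100)$ bound is exactly the paper's $(9/10)(81/100)>2/3$ calculation, just spelled out in slightly more detail.
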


\begin{proof}
The algorithm is one-sided because it outputs ``non-monotone'' only when an anti-monotone edge is found. The only interesting case is when the input LTF $f$ is $\eps$-far from monotone. Combining Lemmas~\ref{lem:regularize-and-balance}
and \ref{lem:main-procedure}, 
  the algorithm {\tt Mono-Test-LTF}$(f,\eps)$ outputs ``non-monotone'' with probability at least
  $(9/10) (81/100)>2/3$. This completes the proof.
\end{proof}


\begin{theorem} \label{thm:query-complexity}
The algorithm {\tt Mono-Test-LTF}$(f, \eps)$ makes $\tilde{O}\hspace{0.03cm}({\log^{ {42}}n}/{\eps^{90}})$ queries.
\end{theorem}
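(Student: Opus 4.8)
The plan is to trace the call tree of {\tt Mono-Test-LTF} and bound the queries each procedure contributes, then observe that the bound is dominated by the $O(\log n)$ invocations of {\tt Maintain-Regular-and-Balanced} inside {\tt Main-Procedure}, each costing $\tilde{O}(\log^{41}n/\eps^{90})$ queries. At the top level {\tt Mono-Test-LTF} makes one call to {\tt Regularize-and-Balance} and one call to {\tt Main-Procedure}, so it suffices to bound these two. For {\tt Regularize-and-Balance} I would simply invoke Fact~\ref{query-comp-1}, which already gives $\tilde{O}(\log^{41}n/\eps^{90})$.

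For {\tt Main-Procedure}, the Step~2 while loop runs for at most $4\log n$ iterations (by the counter in Step~2(d)), and each iteration performs Step~2(a) (no queries), one call to {\tt Find-Balanced-Restriction}, and one call to {\tt Maintain-Regular-and-Balanced}; after the loop, Step~3 makes a single call to {\tt Edge-Tester}$(f_{\rho^{(t)}},\eps',1/10)$. I would bound each of these. First, {\tt Find-Balanced-Restriction} runs $C_{BR}\log n/\eps^3$ iterations, each calling {\tt Estimate-Mean} with constant accuracy $0.01$ and failure probability $\delta=\poly(\eps/\log n)$, which by Fact~\ref{fact:estimate-mean} costs $O(\log(1/\delta))=\tilde{O}(1)$ queries; hence the subroutine costs $\tilde{O}(\log n/\eps^3)$ queries. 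Second, {\tt Edge-Tester} is run on a function with at most $1/\tau^2$ live variables and proximity parameter $\eps'=\Theta(\eps^3/\log(1/\eps))$, so by Theorem~\ref{fact:edge-tester} it costs $O((1/\tau^2)/\eps')=\tilde{O}(\log^4 n/\eps^9)$ queries, using $\tau=\lambda\eps/\log^2 n$ and $\lambda=\Theta(\eps^2/\ln(1/\eps))$. Third, for {\tt Maintain-Regular-and-Balanced} I would substitute its parameter settings $\tau'=(\tau\eps/C_M)^2\sqrt{\lambda}$, $\delta=\tau'^2/(C_M\log n)$, $\tau^*=\tau'/\sqrt{\lambda}$: using $\tau=\tilde{\Theta}(\eps^3/\log^2 n)$ and $\lambda=\tilde{\Theta}(\eps^2)$ gives $\tau'=\tilde{\Theta}(\eps^9/\log^4 n)$. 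Then the Step~2 call to {\tt Find-Hi-Influence-Vars}$(f,\rho',\tau',\delta)$ makes $\tilde{O}(\log n\cdot\log(1/\delta)/\tau'^{10})=\tilde{O}(\log^{41}n/\eps^{90})$ queries by Lemma~\ref{lem:find-hi-influence-vars}; the Step~3 calls to {\tt Check-Weight-Positive} over the at most $4/\tau'^2$ indices of $H$ cost $O(\log(1/\delta)/\tau'^3)=\tilde{O}(\log^{12}n/\eps^{27})$; and the $O(\log n/\sqrt{\lambda})$ iterations of Step~4, each calling {\tt Check-Fourier-Regular} with regularity parameter $\sqrt{C_M\tau^*}$ (cost $\tilde{O}(1/(\tau^*)^8)=\tilde{O}(\log^{32}n/\eps^{64})$ per call by Lemma~\ref{lem:testnonregular}) and {\tt Estimate-Mean} (cost $\tilde{O}(1/\eps^2)$), contribute $\tilde{O}(\log^{33}n/\eps^{65})$ in total. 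So {\tt Maintain-Regular-and-Balanced} costs $\tilde{O}(\log^{41}n/\eps^{90})$ queries, dominated by its single call to {\tt Find-Hi-Influence-Vars}.

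Combining, {\tt Main-Procedure} costs $4\log n\cdot\tilde{O}(\log^{41}n/\eps^{90})+\tilde{O}(\log^4 n/\eps^9)=\tilde{O}(\log^{42}n/\eps^{90})$, and adding the $\tilde{O}(\log^{41}n/\eps^{90})$ from {\tt Regularize-and-Balance} leaves the bound $\tilde{O}(\log^{42}n/\eps^{90})$. The only real work is the bookkeeping: correctly substituting the chained definitions of $\tau,\lambda,\tau',\delta,\tau^*$ (noting that, conveniently, the effective $\tau'$ inside both {\tt Regularize-and-Balance} and {\tt Maintain-Regular-and-Balanced} has the same order $\tilde{\Theta}(\eps^9/\log^4 n)$, which is why both cost $\tilde{O}(\log^{41}n/\eps^{90})$), and checking that every other subroutine call---{\tt Find-Balanced-Restriction}, {\tt Check-Weight-Positive}, {\tt Check-Fourier-Regular}, {\tt Estimate-Mean}, {\tt Edge-Tester}---is of strictly lower order, so that only the extra $\log n$ factor from the outer loop matters. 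I expect this verification that ``everything else is lower order'' to be the most tedious (but not conceptually difficult) step.
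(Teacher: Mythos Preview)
Your proposal is correct and follows essentially the same approach as the paper's own proof: invoke Fact~\ref{query-comp-1} for {\tt Regularize-and-Balance}, identify the $O(\log n)$ calls to {\tt Find-Hi-Influence-Vars} inside {\tt Maintain-Regular-and-Balanced} as the bottleneck of {\tt Main-Procedure} (each costing $\tilde{O}(\log^{41}n/\eps^{90})$), and note separately that the {\tt Edge-Tester} call costs only $\tilde{O}(\log^4 n/\eps^9)$. If anything, your write-up is more thorough than the paper's, which simply asserts that the other subroutine calls are lower order without the explicit parameter substitutions you carry out.
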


\begin{proof}
From Fact~\ref{query-comp-1}, the number of queries used by {\tt Regularize-and-Balance} is $\tilde{O}\hspace{0.03cm}({\log^{41} n}/{\eps^{90}})$, since the main bottle\-neck is the call to {\tt Find-Hi-Influence-Vars}. In {\tt Main-Procedure}, the bottleneck is the {$O(\log n)$} calls to {\tt Find-Hi-Influence-Vars} in {\tt Maintain-Regular-and-Balance}, each of query complexity $\tilde{O}\hspace{0.03cm}({\log^{41}n}/{\eps^{ {90}}})$, despite the slightly different parameters. {Note that we run the edge tester when there are fewer than ${1}/{\tau^2}$ many stars, so it makes $\tilde{O}\left( {\log^4 n}/{\eps^9}\right)$ many queries}.
\end{proof}

Theorem~\ref{thm:main} follows as an immediate consequence of Theorems~\ref{thm:correct} and~\ref{thm:query-complexity}.
-
\begin{flushleft}
\bibliographystyle{alpha}
\bibliography{allrefs,odonnell-bib}

\newcommand{\etalchar}[1]{$^{#1}$}
\begin{thebibliography}{BCGSM12}

\bibitem[ACCL07]{ACCL07}
N.~Ailon, B.~Chazelle, S.~Comandur, and D.~Liu.
\newblock Estimating the distance to a monotone function.
\newblock {\em Random Structures and Algorithms}, 31(3):371--383, 2007.

\bibitem[BB16]{BB15}
A.~Belovs and E.~Blais.
\newblock A polynomial lower bound for testing monotonicity.
\newblock In {\em Proceedings of the 48th ACM Symposium on Theory of
  Computing}, 2016.

\bibitem[BBM12]{BBM12}
E.~Blais, J.~Brody, and K.~Matulef.
\newblock Property testing lower bounds via communication complexity.
\newblock {\em Computational Complexity}, 21(2):311--358, 2012.

\bibitem[BCGSM12]{BCGM12}
J.~Bri{\"e}t, S.~Chakraborty, D.~Garc\'{\i}a-Soriano, and A.~Matsliah.
\newblock Monotonicity testing and shortest-path routing on the cube.
\newblock {\em Combinatorica}, 32(1):35--53, 2012.

\bibitem[BCP{\etalchar{+}}17]{BCPRS17}
Roksana Baleshzar, Deeparnab Chakrabarty, Ramesh Krishnan~S. Pallavoor, Sofya
  Raskhodnikova, and C.~Seshadhri.
\newblock Optimal unateness testers for real-values functions: Adaptivity
  helps.
\newblock In {\em Proceedings of the 44th International Colloquium on Automata,
  Languages and Programming ({ICALP}~'2017)}, 2017.

\bibitem[BKR04]{BKR04}
T.~Batu, R.~Kumar, and R.~Rubinfeld.
\newblock Sublinear algorithms for testing monotone and unimodal distributions.
\newblock In {\em Proceedings of the 36th ACM Symposium on Theory of
  Computing}, pages 381--390, 2004.

\bibitem[BRY13]{BRY13}
E.~Blais, S.~Raskhodnikova, and G.~Yaroslavtsev.
\newblock Lower bounds for testing properties of functions on hypergrid
  domains.
\newblock {\em Electronic Colloquium on Computational Complexity (ECCC)},
  20:36, 2013.

\bibitem[BRY14]{BRY:14b}
Piotr Berman, Sofya Raskhodnikova, and Grigory Yaroslavtsev.
\newblock L\({}_{\mbox{p}}\)-testing.
\newblock In {\em Symposium on Theory of Computing, {STOC} 2014, New York, NY,
  USA, May 31 - June 03, 2014}, pages 164--173, 2014.

\bibitem[CDST15]{CDST15}
X.~Chen, A.~De, R.A. Servedio, and L.-Y. Tan.
\newblock Boolean function monotonicity testing requires (almost) $n^{1/2}$
  non-adaptive queries.
\newblock In {\em Proceedings of the 47th ACM Symposium on Theory of
  Computing}, pages 519--528, 2015.

\bibitem[CRS15]{CRS15}
C.~Canonne, D.~Ron, and R.~Servedio.
\newblock Testing probability distributions using conditional samples.
\newblock {\em {SIAM} Journal on Comput.}, 44(3):540--616, 2015.

\bibitem[CS13a]{CS13a}
D.~Chakrabarty and C.~Seshadhri.
\newblock A $o(n)$ monotonicity tester for boolean functions over the
  hypercube.
\newblock In {\em Proceedings of the 45th ACM Symposium on Theory of
  Computing}, pages 411--418, 2013.

\bibitem[CS13b]{CS13b}
D.~Chakrabarty and C.~Seshadhri.
\newblock Optimal bounds for monotonicity and {L}ipschitz testing over
  hypercubes and hypergrids.
\newblock In {\em Proceedings of the 45th ACM Symposium on Theory of
  Computing}, pages 419--428, 2013.

\bibitem[CS14]{CS13c}
D.~Chakrabarty and C.~Seshadhri.
\newblock An optimal lower bound for monotonicity testing over hypergrids.
\newblock {\em Theory of Computing}, 10(17):453--464, 2014.

\bibitem[CST14]{CST144}
X.~Chen, R.A. Servedio, and L.-Y. Tan.
\newblock New algorithms and lower bounds for monotonicity testing.
\newblock In {\em Proceedings of the IEEE 55th Annual Symposium on Foundations
  of Computer Science}, pages 286--295, 2014.

\bibitem[CST{\etalchar{+}}17]{CSTWX17}
Xi~Chen, Rocco~A. Servedio, Li-Yang Tan, Erik Waingarten, and Jinyu Xie.
\newblock Settling the query complexity of non-adaptive junta testing.
\newblock In {\em Proceedings of the 32nd Conference on Computational
  Complexity ({CCC}~'2017)}, 2017.

\bibitem[CWX17]{CWX17}
Xi~Chen, Erik Waingarten, and Jinyu Xie.
\newblock Beyond talagrand functions: new lower bounds for testing monotonicity
  and unateness.
\newblock In {\em Proceedings of the 49th {ACM} Symposium on the Theory of
  Computing ({STOC}~'2017)}, 2017.

\bibitem[DGL{\etalchar{+}}99]{DGL+99}
Y.~Dodis, O.~Goldreich, E.~Lehman, S.~Raskhodnikova, D.~Ron, and
  A.~Samorodnitsky.
\newblock Improved testing algorithms for monotonocity.
\newblock In {\em Proceedings of the 3rd International Workshop on
  Randomization and Approximation Techniques in Computer Science}, pages
  97--108, 1999.

\bibitem[Dzi14]{Dzindzalieta14}
D.~Dzindzalieta.
\newblock Tight {B}ernoulli tail probability bounds.
\newblock Technical Report Doctoral Dissertation, Physical Sciences,
  Mathematics (01 P), Vilnius University, 2014.

\bibitem[EKK{\etalchar{+}}00]{EKK+00}
F.~Erg\"un, S.~Kannan, S.R. Kumar, R.~Rubinfeld, and M.~Vishwanthan.
\newblock Spot-checkers.
\newblock {\em Journal of Computer and System Sciences}, 60:717--751, 2000.

\bibitem[Fel68]{Feller}
W.~Feller.
\newblock {\em An introduction to probability theory and its applications}.
\newblock John Wiley \& Sons, 1968.

\bibitem[Fis04]{Fis04}
E.~Fischer.
\newblock On the strength of comparisons in property testing.
\newblock {\em Information and Computation}, 189(1):107--116, 2004.

\bibitem[FLN{\etalchar{+}}02]{FLN+02}
E.~Fischer, E.~Lehman, I.~Newman, S.~Raskhodnikova, R.~Rubinfeld, and
  A.~Samorodnitsky.
\newblock Monotonicity testing over general poset domains.
\newblock In {\em Proceedings of the 34th Annual ACM Symposium on the Theory of
  Computing}, pages 474--483, 2002.

\bibitem[GGL{\etalchar{+}}00]{GGL+00}
O.~Goldreich, S.~Goldwasser, E.~Lehman, D.~Ron, and A.~Samordinsky.
\newblock Testing monotonicity.
\newblock {\em Combinatorica}, 20(3):301--337, 2000.

\bibitem[HK08]{HK08}
S.~Halevy and E.~Kushilevitz.
\newblock Testing monotonicity over graph products.
\newblock {\em Random Structures and Algorithms}, 33(1):44--67, 2008.

\bibitem[KMS15]{KMS15}
S.~Khot, D.~Minzer, and M.~Safra.
\newblock On monotonicity testing and boolean isoperimetric type theorems.
\newblock In {\em Proceedings of the 56th Annual Symposium on Foundations of
  Computer Science}, pages 52--58, 2015.

\bibitem[MORS10]{MORS:10}
K.~Matulef, R.~O'Donnell, R.~Rubinfeld, and R.~Servedio.
\newblock Testing halfspaces.
\newblock {\em SIAM Journal on Comput.}, 39(5):2004--2047, 2010.

\bibitem[RRS{\etalchar{+}}12]{RRSW12}
D.~Ron, R.~Rubinfeld, M.~Safra, A.~Samorodnitsky, and O.~Weinstein.
\newblock Approximating the influence of monotone {B}oolean functions in
  ${O}(\sqrt{n})$ query complexity.
\newblock {\em ACM Transactions on Computation Theory}, 4(4):1--12, 2012.

\bibitem[RS09]{RS09c}
R.~Rubinfeld and R.A. Servedio.
\newblock Testing monotone high-dimensional distributions.
\newblock {\em Random Structures and Algorithms}, 34(1):24--44, 2009.

\bibitem[RS15]{RonServedio15}
D.~Ron and R.A. Servedio.
\newblock Exponentially improved algorithms and lower bounds for testing signed
  majorities.
\newblock {\em Algorithmica}, 72(2):400--429, 2015.

\bibitem[STW15]{STW15}
R.A. Servedio, L.-Y. Tan, and J.~Wright.
\newblock Adaptivity helps for testing juntas.
\newblock In {\em Proceedings of the 30th IEEE Conference on Computational
  Complexity}, pages 264--279, 2015.

\bibitem[Wai]{TAILBOUND}
M.~Wainwright.
\newblock Basic tail and concentration bounds.
\newblock {\em available online at
  \url{www.stat.berkeley.edu/~mjwain/stat210b/Chap2_TailBounds_Jan22_2015.pdf}}.

\end{thebibliography}
\end{flushleft}
 
\end{document}